\numberwithin{equation}{section}
\newtheorem{theorem}{Theorem}[section]
\newtheorem{corollary}[theorem]{Corollary}
\newtheorem{lemma}[theorem]{Lemma}
\newtheorem{proposition}[theorem]{Proposition}
\theoremstyle{definition}
\newtheorem{definition}[theorem]{Definition}
\newtheorem{remark}[theorem]{Remark}
\newtheorem{assumption}[theorem]{Assumption}
\newcommand{\ind}{1\hspace{-2.1mm}{1}}
\newcommand{\ff}{\mathrm{f}}
\newcommand{\gr}{\mathfrak{g}}
\newcommand{\FF}{\mathbb{F}}
\newcommand{\RR}{\mathbb{R}}
\newcommand{\EE}{\mathbb{E}}
\newcommand{\Ll}{\mathcal{L}}
\newcommand{\Cc}{\mathcal{C}}
\newcommand{\Cb}{\mathfrak{C}}
\newcommand{\crbs}{\mathfrak{c}}
\newcommand{\df}{\mathfrak{d}}
\newcommand{\Vr}{\mathscr{V}}
\newcommand{\Mm}{\mathcal{M}}
\newcommand{\Fr}{\mathfrak{F}}
\newcommand{\Gr}{\mathfrak{G}}
\newcommand{\Lr}{\mathfrak{L}}
\newcommand{\Mf}{\mathfrak{M}}
\newcommand{\Wf}{\mathcal{W}}
\newcommand{\Mbf}{\mathbf{M}}
\newcommand{\Oo}{\mathcal{O}}
\newcommand{\Pp}{\mathcal{P}}
\newcommand{\Ii}{\mathcal{I}}
\newcommand{\Jj}{\mathcal{J}}
\newcommand{\Tt}{\mathcal{T}}
\newcommand{\NN}{\mathbb{N}}
\newcommand{\MMM}{\mathbb{M}}
\newcommand{\E}{\mathrm{e}}
\newcommand{\Nn}{\mathcal{N}}
\newcommand{\Ff}{\mathscr{F}}
\newcommand{\eps}{\varepsilon}
\newcommand{\Span}{\mathrm{Span}~}
\newcommand{\rd}{\ensuremath{\mathrm{d}}}
\newcommand{\Kk}{\mathcal{K}}
\newcommand{\Hh}{\mathcal{H}}
\newcommand{\Aa}{\mathcal{A}}
\newcommand{\Bb}{\mathcal{B}}
\newcommand{\Ss}{\mathcal{S}}
\newcommand{\cc}{\mathrm{c}}
\newcommand{\ww}{\mathrm{w}}
\newcommand{\BS}{\mathrm{BS}}
\newcommand{\CC}{\mathrm{C}}
\newcommand{\Tto}{\mathcal{T}_0}
\newcommand{\Kf}{\mathfrak{K}}
\newcommand{\hh}{\mathrm{h}}
\newcommand{\uu}{\mathrm{u}}
\newcommand{\vv}{\mathrm{v}}
\newcommand{\Xx}{\mathcal{X}}
\newcommand{\Yy}{\mathcal{Y}}
\newcommand{\barho}{\overline{\rho}}
\newcommand{\val}{\vartheta}
\newcommand{\Uu}{\mathcal{U}}
\newcommand{\interior}{\mathrm{int}~}
\newcommand{\aaa}{\mathrm{a}}
\newcommand{\sss}{\mathfrak{p}}
\begin{document}
\title{Perturbation analysis of sub/super hedging problems}

\author{Sergey Badikov}
\address{Department of Mathematics, Imperial College London}
\email{sergey.badikov08@imperial.ac.uk}

\author{Mark H.A. Davis}
\address{Department of Mathematics, Imperial College London}
\email{mark.davis@imperial.ac.uk}

\author{Antoine Jacquier}
\address{Department of Mathematics, Imperial College London, and the Alan Turing Institute}
\email{a.jacquier@imperial.ac.uk}

\date{\today}

\keywords{duality, infinite-dimensional linear programming, super-hedging, perturbation methods}
\subjclass[2010]{90C05, 90C46, 91G20, 46N10}
\thanks{The authors are indebted to the anonymous referee and the Associate Editors for their insightful comments.
}
\begin{abstract}
We investigate the links between various no-arbitrage conditions and the existence of pricing functionals 
in general markets, and prove the Fundamental Theorem of Asset Pricing therein.
No-arbitrage conditions, either in this abstract setting or in the case of a market consisting of European Call options, 
give rise to duality properties of infinite-dimensional sub- and super-hedging problems.
With a view towards applications, we show how duality is preserved when reducing 
these problems over finite-dimensional bases.
We also introduce a rigorous perturbation analysis of these linear programming problems, 
and highlight numerically the influence of smile extrapolation on the bounds of exotic options.
\end{abstract}
\maketitle

\section{Introduction}
In mathematical finance, pricing contingent claims consists in postulating the existence of a filtered probability space (or of a model, using the terminology in~\cite{CoxObloj11}) 
such that the discounted price process is a martingale. 
In the absence of arbitrage (appropriately defined), prices of claims 
can then be expressed as expectations of the discounted payoffs under a martingale measure. 
The postulated model is in general not unique, 
and a whole range of prices arises as all possible models are taken into account, 
together with no-arbitrage constraints. 
In contrast, model-independent finance strives to move away from this paradigm, 
and instead relies on no-arbitrage conditions and additional market information 
to find arbitrage-free bounds on prices of contingent claims. 

\cite{Hob98} posited no model at all and instead used no-arbitrage assumptions to derive arbitrage-free range of possible prices for exotic derivatives.
This approach fundamentally relies on the Skorokhod embeddings
and Dambis-Dubins-Schwarz time-change techniques,
and a vast literature on arbitrage-free bounds on prices of derivatives has grown 
since~\cite{{BonnansTan, Brown, CoxHoeg, CoxObloj11, CoxWang, DavisOblojRaval14, Klimmek, Laurence, Neuberger, Ulmer}}.
More recently, this problem has been tackled using martingale optimal transportation theory,
first initiated by~\cite{BHP13},
who showed that when full marginals (equivalently all European Call/Put options) are known,
the problem of finding arbitrage-free bounds on prices of exotic derivatives
can be formulated as a martingale version of a Monge-Kantorovich mass transport problem. 
From a practical point of view, the appeal is that this formulation can be seen as an infinite-dimensional linear programming problem, with a dual that can be interpreted in terms of semi-static hedging strategies.
This seminal paper has since been extended to the case of finitely many marginals~\cite{DMHL15, GTT16, OS17}
and some of its technical assumptions, either on the marginals or on the cost function 
to be minimised, have been relaxed~\cite{BJ16, HLT16, HLXT16}.
An underlying question is whether observed option prices yield any kind of arbitrage in the market.
This relation between market data and fundamental theorem of asset pricing has been made precise, 
in the model-independent framework, by~\cite{ABPS13, Bayraktar, Cousot, DH07}.
\cite{BN15} formulate the fundamental theorem and the superhedging problem in the quasi-sure setting, 
where all statements hold outside polar sets of a collection of probability measures~$\Pp$, not necessarily equivalent,  
on the measurable state space $(\Omega,\Aa)$ that governs the market. 
They obtain the first fundamental theorem and the superhedging property in a multi-period setting with possible inclusion of options for static hedging. 
They show in particular~\cite[Example 1.2]{BN15} that if~$\Pp$ is the set of all Borel probability measures on a finite $d$-dimensional state space then the quasi-sure inequalities become pointwise.

In this paper, we first investigate in Section~\ref{sec:generalDuality} the relations between absence of weak free lunch, similar to the `free lunch' introduced by \cite{Kreps81},
and the existence (and extension) of pricing functionals in general abstract markets. 
In order to represent the extension as a Borel probability measure on a locally compact state space, 
we assume the existence of a strictly positive continuous function dominating the payoffs 
of the traded assets along with a technical assumption. 
We further show  how to sub/super-replicate general options in this general market.
We then (Section~\ref{sec:ApplicationToOptionPricesCh2}) specialise the market to the case 
where only finitely many European Call options are traded for a given set of maturities. 
In order to avoid the emergence of duality gaps, we introduce a restriction on the set of feasible dual solutions in the form of the total implied variance extrapolation and how such extrapolation is connected to existence of higher moments of the underlying price process (Section~\ref{sec:extrapVarCh2}).
We discuss in Section~\ref{sec:CaseOfFinitelyManyOptionsCh2} the notion of weak arbitrage, 
introduced by \cite{DH07}, which naturally leads to the introduction of sub- and super-hedging problems.
We finally investigate the impact of the extrapolation of the total implied variance on the latter:
we first discretise the latter to obtain semi-infinite linear programmes (Section~\ref{sec:semiInfiniteLP}), 
and prove convergence as the discretisation becomes finer. 
Section~\ref{sec:PerturbationAnalysis} is devoted to a perturbation analysis, following~\cite{BonnansShapiro00},
of the initial inputs (Call option prices) in the optimisation problem, which provides the user with a better control over model parameters and extrapolation issues.
We illustrate numerically our findings in several examples common in Finance 
in Section~\ref{sec:applicationFwdStartStraddleCh4}.

\section{Preliminary results}\label{sec:generalDuality}

We establish super-hedging duality in general markets as an application of infinite-dimensional linear programming. 
The general market consists of securities with continuous
payoffs~$(\varphi_i)_{i\in \Ii}$ 
and traded at prices~$(c_i)_{i\in \Ii}$, with~$\Ii$ some index set.
Since the market is assumed frictionless, the set of traded securities becomes a subspace 
of the space of continuous functions, on which we introduce
a pricing functional mapping payoffs of traded securities to market prices. 
We fix an index set~$\Ii$ (not necessarily finite) and a collection of functions
$\varphi_i\in \Cc(\Omega)$, $i\in\Ii$ representing payoffs of securities available on the market 
at finite prices $c_i\in\RR$. 
We assume that the market is frictionless, i.e. there are no transaction costs associated with buying and selling securities, 
there are no liquidity constraints and market participants are allowed to buy and sell any position in a security or a portfolio of securities.
Denote by~$\Mf$ the space of traded claims, i.e. the set of portfolios of securities that can be bought and sold freely on the market, as 
\begin{equation}\label{eq:setOfTradedClaimsCh1}
\Mf := \left\{\sum_{n=1}^N\alpha_n\varphi_{i_n}: (\alpha_n)_{n=1,\ldots,N}\in\RR^N, N\in\NN \text{ and } i_1,\ldots,i_N\in\Ii\right\}.
\end{equation}
As trading is frictionless, $\Mf$ is a linear subspace of $\Cc(\Omega)$. 
Define also a pricing functional $\rho : \Mf \to \RR$ mapping payoffs to their market prices 
\begin{equation}\label{eq:pricingOperator}
\rho(m) := \left\{\sum_{n=1}^N\alpha_nc_{i_n} : 
m = \sum_{n=1}^N\alpha_n\varphi_{i_n} 
\text{ for some } N\in\NN \text{, } i_1,\ldots,i_N\in\Ii\right\}.
\end{equation}
Although it is defined as a set-valued function, below we show that absence of arbitrage is equivalent to certain properties of the pricing functional, including being single valued.
Before we proceed we make a regularity assumption on the market 
that will allow us to establish separating duality in the sequel. 
\begin{assumption}\label{assumption:functionH}
There exists a reference claim, namely a continuous function $h : \Omega \to \RR_+\cup\{\infty\}$ with the following properties: 
\begin{enumerate}
\item $h$ has compact  level sets ($\{\omega: h(\omega)\leq K\}$ is compact for all $K>0$) and $1/h$ is bounded on~$\Omega$
(there exists $C>0$ such that $\{ 1/C \leq 1/h(\omega) \leq C \text{ for all }\omega\in\Omega\}$);
\item $h\notin \Mf$, i.e. $h$ is not a tradeable asset;
\item $\varphi_i = o(h)$ (as $\|\omega\|_1\uparrow\infty$) for all $i\in\Ii$.
\end{enumerate}
\end{assumption}

Assumption~\ref{assumption:functionH} has already been considered in the literature, 
albeit with slightly different flavours.
\cite{CKT17} assume existence of a continuous function 
$h: \Omega \to [1,+\infty)$, 
with bounded level sets $\{h^{\leftarrow}(-\infty, z): z\in\RR_+\}$, 
and consider payoffs that are upper and lower semi-continuous and bounded by~$h$.
Under additional technical existence assumptions, they allow for claims growing at most linearly, 
extending the results by \cite{ABPS13}. 
The latter indeed assume existence of a super-linear convex function, 
accounting for the pay-off of a traded option
(equivalent to assuming infinitely many traded European Call options).
Their assumptions on~$h$ are weaker than ours, requiring bounded level sets as opposed to compact level sets. 
However, our setting is more general since we allow for the more realistic case case of finitely many options traded on the market.
We mention in passing the works by \cite{BN15} and by \cite{BFM16}, using a quasi-sure approach:
they replace the pathwise superhedging inequality with an inequality that holds outside `maximal polar sets' common to a set of non-dominated probability measures. 
This however is a different route than ours, and we refer the interested reader to these papers for more details.
Here and elsewhere, $\RR_{+}:=[0,\infty)$ denotes the non-negative half-line.
With the weighted space 
\begin{equation}
\Cc_{h}(\Omega)
 := \left\{ f\in \Cc(\Omega) : \|f\|_{h}:=\sup_{\omega\in \Omega}\frac{|f(\omega)|}{h(\omega)}<\infty\right\},
\end{equation}
Assumption~\ref{assumption:functionH} implies that $\Mf\subset \Cc_{h}(\Omega)$,
and the topology on~$\Mf$ is the one inherited from~$\Cc_{h}(\Omega)$.
Endowed with~$\|\cdot\|_h$, $\Cc_{h}(\Omega)$ is a Banach lattice, 
and the order unit in $\Cc_{h}(\Omega)$ is~$h$ 
(Definitions~\ref{defn:OrderUnit} and~\ref{defn:BanachLattice}).
Following arguments from~\cite[Example 8.6.5]{Bogachev}, 
the topological dual of~$\Cc_{h}(\Omega)$ is the space of signed Borel measures 
that integrate~$h$ to a finite constant:
\begin{equation}
(\Mm_h)_+(\Omega) := \left\{\mu \in \Mm_+(\Omega) : \langle h, \mu\rangle < \infty\right\},
\end{equation}
with~$\Mm(\Omega)$ the set of signed Borel measures on~$\Omega$ 
(the notations $\Mm_+$ and~$\Mm_{++}$ are introduced in Definition~\ref{defn:cones} )
and the bilinear form
\begin{equation}
\left \langle f, \mu\right\rangle := \int_{\Omega}f(\omega)\mu(\rd\omega),
\quad\text{for all }f\in \Cc_{h}(\Omega), \mu\in \Mm_h(\Omega).
\end{equation} 
If the total variation of a measure $\mu\in(\Mm_h)_+(\Omega)$ is equal to one then $\mu\in\Pp_h(\Omega)$, 
where~$\Pp_h(\Omega)$ denotes the set of Borel probability measures that integrate $h$ to a finite constant. 
We now define a notion of arbitrage in this abstract market, using notation introduced in Definition~\ref{defn:cones}.
\begin{definition}\label{dfn:MIArb}
There is no strong model-independent arbitrage on~$\Mf$ 
if $\inf\rho(m)\geq 0$ for all $m\in \Mf_+$,
and $\inf\rho(m)>0$ for all $m\in \Mf_{++}$.
\end{definition}
This definition is inspired by, yet stronger than, 
that of absence of model-independent arbitrage in~\cite[Definition 2.1]{DH07},
which holds if $\rho(m) \geq 0$ for all $m\in \Mf_+$.
In order to avoid the degenerate situation $\rho(m) = 0$ for all $m\in \Mf_+$ we make the following assumption: 
\begin{assumption}\label{assumption:risklessBond}
There exists a traded claim $m_0\in \Mf$ with $m_0(\omega) >0$ for all $\omega\in \Omega$ 
and $\rho(m_0) > 0$.
\end{assumption}
Assumption~\ref{assumption:risklessBond} holds 
if a riskless bond is available on the market
and implies that the two statements in Definition~\ref{dfn:MIArb} are equivalent.
In general~$\rho$ is a set-valued function, but the following restricts its range:
\begin{proposition}\label{thm:rhoProperties}\cite[Theorem~3]{Clark93}
Under Assumption~\ref{assumption:risklessBond},
absence of strong model-independent arbitrage holds if and only if~$\rho$, 
defined in~\eqref{eq:pricingOperator}, is strictly positive, linear and uniquely defined. 
\end{proposition}
An earlier version of this theorem for Ross' No Arbitrage was proved by \cite{Kreps81}.
Let us define the set of feasible claims, i.e. traded claims available at non-positive prices, as 
\begin{equation}\label{eq:setF}
\Fr := \{m\in \Mf: \inf\rho(m) \leq 0\}.
\end{equation} 
Ross' principle of no-arbitrage~\cite{Ross} in the consumption space~$\mathrm{L}$
reads~\cite{Clark93} $\Fr\cap \mathrm{L}_{++}(\Omega) = \emptyset$,
where~$\mathrm{L}$ is a set of random variables with reference to a given probability measure.
Under Assumption~\ref{assumption:risklessBond}, $\rho(0)=0$,
since 
$\mathrm{L}_{+} = \mathrm{L}_{++} \cup (\mathrm{L}_{+}\setminus \mathrm{L}_{++})$, 
this is equivalent to 
$\Fr\cap \mathrm{L}_{+} = \{0\}$.
This is clearly equivalent to Definition~\ref{dfn:MIArb}.
It is however different from Stricker's No Approximate Arbitrage principle~\cite{Stricker} 
$\overline{\Fr}\cap \mathrm{L}_{++}(\Omega) = \emptyset$, 
which involves the closure with respect to the weak topology on~$\mathrm{L}$.
Our framework follows the model-independent approach, without reference to a given probability measure.
Proposition~\ref{thm:rhoProperties} implies that $\rho(0) = 0$,
and the following representation of~$\Mf$ holds:
\begin{lemma}
Under Assumption~\ref{assumption:risklessBond}, $\Mf = \Span\{m_0,\Fr\}$.
\end{lemma}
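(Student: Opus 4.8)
The plan is to show the two inclusions $\Span\{m_0,\Fr\}\subseteq\Mf$ and $\Mf\subseteq\Span\{m_0,\Fr\}$, the first of which is immediate since $m_0\in\Mf$, $\Fr\subseteq\Mf$, and $\Mf$ is a linear subspace by construction~\eqref{eq:setOfTradedClaimsCh1}. The content is in the reverse inclusion. Fix $m\in\Mf$. By Theorem~\ref{thm:rhoProperties}, absence of strong model-independent arbitrage (which we are tacitly assuming together with Assumption~\ref{assumption:risklessBond}, as in the surrounding text) gives that $\rho$ is single-valued and linear, so $\rho(m)$ and $\rho(m_0)$ are well-defined real numbers with $\rho(m_0)>0$. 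Set $\lambda:=\rho(m)/\rho(m_0)$ and write $m=\lambda m_0+(m-\lambda m_0)$.

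It then remains to check that $m-\lambda m_0\in\Fr$. Since $\Mf$ is a linear space, $m-\lambda m_0\in\Mf$, and by linearity of $\rho$,
\begin{equation*}
\rho(m-\lambda m_0)=\rho(m)-\lambda\rho(m_0)=\rho(m)-\frac{\rho(m)}{\rho(m_0)}\rho(m_0)=0\leq 0,
\end{equation*}
so indeed $m-\lambda m_0\in\Fr$ by the definition~\eqref{eq:setF} (here $\sup\rho(m-\lambda m_0)=\rho(m-\lambda m_0)=0$ because $\rho$ is single-valued). Hence $m\in\Span\{m_0,\Fr\}$, which completes the argument.

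The only subtlety — and it is a mild one — is the bookkeeping between the set-valued definition of $\rho$ in~\eqref{eq:pricingOperator} and its single-valued incarnation: one must invoke Theorem~\ref{thm:rhoProperties} at the outset so that $\inf\rho$, $\sup\rho$, and the functional value all coincide and linearity is available. Once that reduction is in place the proof is a one-line affine decomposition, so I would expect no real obstacle; the main thing to be careful about is stating explicitly that we work under absence of strong model-independent arbitrage so that Theorem~\ref{thm:rhoProperties} applies. If one wished to avoid even assuming no-arbitrage, one could instead argue directly that for \emph{any} selection of $\rho$ the same decomposition works provided $\rho(m_0)\neq 0$ for that selection, but invoking Theorem~\ref{thm:rhoProperties} is cleaner and matches the paper's running hypotheses.
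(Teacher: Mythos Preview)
Your proof is correct and follows essentially the same approach as the paper: both show the easy inclusion first, then for $m\in\Mf$ set $f:=m-\dfrac{\rho(m)}{\rho(m_0)}m_0$, check $\rho(f)=0$ so $f\in\Fr$, and conclude $m\in\Span\{m_0,\Fr\}$. Your explicit invocation of Theorem~\ref{thm:rhoProperties} to justify that $\rho$ is single-valued and linear is a welcome clarification that the paper leaves implicit.
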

\begin{proof}
It is immediate to see that $\Span\{m_0,\Fr\}\subseteq \Mf$. 
On the other hand for any $m\in \Mf$ available at price $\rho(m)$ 
define $f:=m-[\rho(m)/\rho(m_0)]m_0$ with $\rho(f) = 0$ and thus $f\in \Fr$. 
Then~$m$ can be trivially represented as a linear combination $f + [\rho(m)/\rho(m_0)]m_0$
and the reverse inclusion follows.
\end{proof}
Although the notion of strong model independent arbitrage is helpful to restrict the range of $\rho$, it does not appear to be helpful in explaining the behaviour of the pricing functional on the closure of the feasible set of claims. 
The following notion of arbitrage is similar in flavour to~\cite[Definition~2.1]{CoxObloj11}:
\begin{definition}\label{defn:WFL}
There is a weak free lunch if there exists a sequence  
$(g_n)_{n\in\NN}\subset \Cc_{h}(\Omega)$ 
converging weakly to $g\in(\Cc_{h})_{++}(\Omega)$,
and a sequence $(f_n)_{n\in\NN}\subset \Fr$ with $f_n\geq g_n$ for all $n\in\NN$.
\end{definition}
It must be noted that a strong model-independent arbitrage is also a week free lunch.
Before we proceed let us first show an auxiliary result.
\begin{lemma}\label{lemma:AlgebraicDiffConeCh1}
The following equality holds for the algebraic difference $\Fr-(\Cc_{h})_+(\Omega)$:
$$
\Fr-(\Cc_{h})_+(\Omega) 
:= \{f-g : f\in \Fr, g\in (\Cc_{h})_+(\Omega)\}
  = \left\{g\in \Cc_{h}(\Omega) : \text{ there exists }f\in \Fr\text{ such that }f\geq g\right\}
  =:\Gr.
$$
\end{lemma}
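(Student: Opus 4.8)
The plan is to establish the two set inclusions separately. Both rest on the elementary observation that, for $f,g\in\Cc_h(\Omega)$, one has $f\ge f-g$ pointwise on $\Omega$ precisely when $g\ge 0$, combined with the fact that $\Cc_h(\Omega)$ is a linear space (hence closed under differences) and $(\Cc_h)_+(\Omega)$ is a convex cone. The order ``$\ge$'' is throughout the pointwise order on $\Omega$ inherited from the Banach lattice structure of $\Cc_h(\Omega)$.

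First I would show $\Fr-(\Cc_{h})_+(\Omega)\subseteq\Gr$. Take an arbitrary element $m=f-g$ of the left-hand side, with $f\in\Fr$ and $g\in(\Cc_{h})_+(\Omega)$. Since $\Fr\subseteq\Mf\subseteq\Cc_h(\Omega)$ (the last inclusion being a consequence of Assumption~\ref{assumption:functionH}) and $\Cc_h(\Omega)$ is a vector space, $m=f-g\in\Cc_h(\Omega)$. As $g\ge 0$ on $\Omega$, we get $f=m+g\ge m$, so $f\in\Fr$ is exactly the witness required to conclude $m\in\Gr$.

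For the reverse inclusion $\Gr\subseteq\Fr-(\Cc_{h})_+(\Omega)$, take $m\in\Gr$; by definition $m\in\Cc_h(\Omega)$ and there exists $f\in\Fr$ with $f\ge m$. Put $g:=f-m$. Then $g\in\Cc_h(\Omega)$, again because $\Cc_h(\Omega)$ is a linear space, and $g\ge 0$, i.e. $g\in(\Cc_{h})_+(\Omega)$. The decomposition $m=f-g$ exhibits $m$ as an element of $\Fr-(\Cc_{h})_+(\Omega)$, which completes the argument.

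I do not expect any genuine obstacle here: the identity is purely order-algebraic, and the only point to keep track of is that every object obtained by taking a difference stays inside the weighted space $\Cc_h(\Omega)$, which is guaranteed by its vector-space structure. In particular, no appeal to Assumption~\ref{assumption:risklessBond}, to linearity or single-valuedness of $\rho$, or to any topological property of $\Cc_h(\Omega)$ is needed.
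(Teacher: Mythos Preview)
Your proof is correct and follows essentially the same elementary order-theoretic route as the paper: both directions amount to noting that $m=f-g$ with $g\ge 0$ is equivalent to $f\ge m$. Your write-up is in fact slightly cleaner, since the paper unnecessarily invokes $0\in\Fr$ in one direction, whereas you decompose $m=f-(f-m)$ directly.
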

\begin{proof}
For any $g\in \Gr$ there exists $f\in \Fr$ such that $f-g\in (\Cc_{h})_+(\Omega) $ or equivalently $g-f\leq 0$.
As $0\in \Fr$ we have that $0-(f-g) \in \Fr-(\Cc_{h})_+(\Omega)$ hence $\Gr\subseteq \Fr-(\Cc_{h})_+(\Omega)$. 
On the other hand let $f\in \Fr$ and $z\in (\Cc_{h})_+(\Omega)$.
Let $g:= f-z$ and note that $f\geq g$. Hence $g\in \Gr$ and it follows that  $\Fr-(\Cc_{h})_+(\Omega) \subseteq \Gr$.
\end{proof}
Lemma~\ref{lemma:AlgebraicDiffConeCh1} still applies if the positive cone $(\Cc_{h})_+(\Omega)$ 
is restricted to $\Mf_+$.
It follows that the absence of weak free lunch
can equivalently be stated as $\overline{\Fr-(\Cc_{h})_+(\Omega)}\cap (\Cc_{h})_{+}(\Omega) = \{0\}$, 
where the closure is taken with respect to the weak topology on $\Cc_{h}(\Omega)$.
We are now ready to state a version of the Fundamental theorem, proved in Appendix~\ref{sec:thmFTAPnoOrderUnit}:
\begin{theorem}\label{thm:FTAPnoOrderUnit}
Under Assumptions~\ref{assumption:functionH} and \ref{assumption:risklessBond},
absence of weak free lunch holds if and only if there exists a continuous strictly positive linear functional 
$\pi: \Cc_{h}(\Omega) \to \RR$ that extends $\rho$. 
Moreover $\pi$ can be written as an integral with respect to a unique Borel probability measure $\mu\in\Pp_h(\Omega)$.
\end{theorem}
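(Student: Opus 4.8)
The plan is to mirror the structure of the proof of Theorem~\ref{thm:FTAPCh} but to work with the weak free lunch condition in place of the order-unit property, using a Hahn--Banach separation argument on the weakly closed set $\overline{\Fr-(\Cc_{h})_+(\Omega)}$. For the forward implication, suppose there is no weak free lunch. By Lemma~\ref{lemma:AlgebraicDiffConeCh1} and the remark following it, absence of weak free lunch is precisely the statement $\overline{\Fr-(\Cc_{h})_+(\Omega)}\cap(\Cc_{h})_{++}(\Omega)=\{0\}$, where the closure is taken in the weak topology of $\Cc_{h}(\Omega)$. The set $A:=\overline{\Fr-(\Cc_{h})_+(\Omega)}$ is a weakly closed convex cone, and $(\Cc_{h})_{++}(\Omega)$ is a convex set. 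First I would fix a point $g_0\in(\Cc_{h})_{++}(\Omega)$ — for instance a suitable strictly positive element bounded by $h$, which exists by Assumption~\ref{assumption:functionHmodified} since $1/h$ is bounded — and separate the singleton $\{g_0\}$ from $A$ by a continuous linear functional $\pi\in(\Cc_{h}(\Omega))^{*}$: there exist $\pi$ and $\gamma\in\RR$ with $\pi(a)\le\gamma<\pi(g_0)$ for all $a\in A$. Since $A$ is a cone containing $0$ we get $\gamma\ge 0$ and in fact $\pi(a)\le 0$ on $A$ (otherwise scaling $a$ by large positive constants violates the bound), so $\pi\ge 0$ on $(\Cc_{h})_+(\Omega)$ (because $-(\Cc_{h})_+(\Omega)\subset A$) and $\pi\le 0$ on $\Fr$ (because $\Fr\subset A$). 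Applying the same argument with $-\Fr$, or using that $\Fr$ is a subspace-like set closed under the symmetry generated by $m_0$ via the Lemma $\Mf=\Span\{m_0,\Fr\}$, forces $\pi=0$ on $\Fr$; normalising by $\pi(m_0)>0$ (which is strictly positive by Assumption~\ref{assumption:risklessBond} together with $\pi\ge0$ and $m_0>0$) and rescaling gives $\pi=\rho$ on $\Mf$. Strict positivity of $\pi$ on $(\Cc_{h})_{++}(\Omega)$ follows by running the separation against an arbitrary $g\in(\Cc_{h})_{++}(\Omega)$ rather than the fixed $g_0$.

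For the converse, if such a continuous strictly positive linear extension $\pi$ exists, then $\pi$ separates $\Fr$ (on which $\pi=\rho\le 0$ by definition of $\Fr$, extended by continuity) from $(\Cc_{h})_{++}(\Omega)$ (on which $\pi>0$), and more precisely $\pi\le 0$ on $\Fr-(\Cc_{h})_+(\Omega)$ while $\pi>0$ on $(\Cc_{h})_{++}(\Omega)$; by weak continuity of $\pi$ the inequality $\pi\le 0$ persists on the weak closure $A$, so $A\cap(\Cc_{h})_{++}(\Omega)=\emptyset$ except possibly at $0$, which is exactly absence of weak free lunch. This direction is essentially the observation already recorded in the paragraph preceding the theorem statement, adapted to drop Assumption~\ref{assumption:orderUnit}.

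It remains to obtain the integral representation under Assumption~\ref{assumption:functionHmodified}. By the identification of $(\Cc_{h}(\Omega))^{*}$ with $\Mm_h(\Omega)$ quoted in the text (following~\cite[Example 8.6.5]{Bogachev}), the continuous linear functional $\pi$ is represented by a signed measure $\mu\in\Mm_h(\Omega)$; strict positivity of $\pi$ forces $\mu\in(\Mm_h)_+(\Omega)$, and $\pi(m_0)=\rho(m_0)>0$ with the normalisation $\langle h,\mu\rangle$ finite together with the rescaling against $m_0$ (or directly against $h$, noting $h\notin\Mf$ does not prevent $h\in\Cc_{h}(\Omega)$) shows that we may take the total mass of $\mu$ to be one, i.e. $\mu\in\Pp_h(\Omega)$. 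Uniqueness follows because $\pi$ is uniquely determined on $\Mf$ by $\rho$ and the separation/extension is rigid once we insist on strict positivity and the normalisation; alternatively, two representing probability measures in $\Pp_h(\Omega)$ inducing the same functional on $\Cc_h(\Omega)$ must coincide by a standard monotone-class argument, since $\Cc_h(\Omega)$ is rich enough to be measure-determining on the Polish space $\RR^n$.

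The main obstacle I expect is the passage from ``$\pi$ is a bounded linear functional vanishing on $\Fr$ and nonnegative on the cone'' to ``$\pi$ is represented by a genuine probability measure that integrates $h$'': the separation theorem only produces an element of the norm dual, and one must invoke the weighted-space duality $(\Cc_h(\Omega))^{*}\cong\Mm_h(\Omega)$ carefully to ensure finiteness of $\langle h,\mu\rangle$ and to rule out mass ``escaping to infinity''. This is precisely where Assumption~\ref{assumption:functionHmodified} — compactness of the level sets of $h$ and the refined growth condition $\varphi_i=o(h)$ — does the work, playing the role that the order-unit hypothesis played in Theorem~\ref{thm:FTAPCh}; without it the representing functional could fail to be countably additive or $h$-integrable. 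The full details of this step are deferred to Appendix~\ref{sec:thmFTAPnoOrderUnit}.
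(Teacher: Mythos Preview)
Your separation skeleton is the same as the paper's: separate a strictly positive element (the paper uses $m_0$ itself) from the weakly closed cone $A:=\overline{\Fr-(\Cc_h)_+(\Omega)}$ to obtain a continuous $\pi$ with $\pi\le 0$ on~$A$, deduce positivity, and conclude that $\pi|_{\Mf}$ is proportional to~$\rho$. One phrasing slip: you do not want ``$\pi=0$ on~$\Fr$'' --- $\Fr=\{m\in\Mf:\rho(m)\le 0\}$ is a half-space of~$\Mf$, not a subspace, so what you actually get is $\pi\le 0$ on~$\Fr$, hence $\pi=0$ on $\ker\rho=\Fr\cap(-\Fr)$, and then $\pi|_{\Mf}$ and~$\rho$ share a kernel and are proportional; normalising by $\pi(m_0)>0$ gives the extension.

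There are two genuine gaps. First, your strict-positivity argument fails: ``running the separation against an arbitrary $g\in(\Cc_h)_{++}(\Omega)$'' produces a \emph{different} functional for each~$g$, so it says nothing about the fixed~$\pi$ you have already constructed. You need an argument that this single~$\pi$ satisfies $\pi(g)>0$ for every nonzero $g\ge 0$. Second, and more substantively, the integral representation cannot be obtained by citing ``$(\Cc_h(\Omega))^*\cong\Mm_h(\Omega)$'': in the norm topology the dual of $\Cc_h(\Omega)\cong\Cc_b(\Omega)$ is the ba space of finitely-additive measures, equivalently Radon measures on the Stone--\v{C}ech compactification~$\breve\Omega$, which is strictly larger than $\Mm_h(\Omega)$. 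The paper's proof does the real work here: it transports~$\pi$ to $\widetilde\pi$ on $\Cc(\breve\Omega)$, represents it as $\nu=\nu^r+\nu^s$ with $\nu^s$ supported on $\breve\Omega\setminus\Omega$, and then uses the growth condition $\varphi_i=o(h)$ from Assumption~\ref{assumption:functionHmodified} to show that the continuous extension of each $\varphi_i/h$ vanishes on $\breve\Omega\setminus\Omega$, so $\nu^s$ contributes nothing on~$\Mf$; after checking $\nu^r\neq 0$ (via $m_0=o(h)$ and $\pi(m_0)>0$) one normalises $\eta:=\nu^r/\|\nu^r\|$ and passes back to $\Pp_h(\Omega)$ via $\rd\mu\propto h^{-1}\rd\eta$. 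You correctly diagnose in your final paragraph that ``mass escaping to infinity'' is the danger, but the diagnosis is not a proof --- the Stone--\v{C}ech decomposition and the specific use of $\varphi_i=o(h)$ to kill the remainder are the missing ideas.
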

For a sequence $(m_n)_{n\in\NN}\subset \Mf$ converging weakly to $m\in\overline{\Mf}$ (the weak closure of~$\Mf$) 
define $\barho(m):=\lim_{n\uparrow\infty}\rho(m_n)$. 
It can be shown that $\barho$ is continuous, strictly positive and linear as a simple corollary to Theorem~\ref{thm:FTAPnoOrderUnit}.
We now formulate the super- and sub-hedging problems as infinite-dimensional linear programming problems. 
The super-hedging problem for an option with payoff 
$\Phi\in \Uu_h(\Omega)$,
the set of upper semi-continuous functions bounded by~$h$, is formulated as
\begin{equation}\label{eq:primal}
\overline{\val}_p(\Phi) := \inf\left\{ \barho(m) : m\in\overline{\Mf}, m(\omega)\geq\Phi(\omega)\text{, for all }\omega\in\Omega \right\}.
\end{equation}
The dual problem consists in finding a Borel probability measure subject to market constraints maximising (minimising in case of sub-hedging) the price of a derivative to be hedged and is stated as follows:
\begin{equation}\label{eq:dual}
\overline{\val}_d(\Phi) := \sup\left\{\langle \Phi, \mu\rangle: \mu\in\Pp_h(\Omega), \langle m, \mu\rangle = \barho(m), m\in\overline{\Mf}\right\}.
\end{equation}
We define here the sub/super-hedging problems in terms of the extension~$\barho$ instead of~$\rho$ 
itself as continuity of the former is essential for duality purposes.
The sub-hedging problem for an option with payoff 
$\Phi\in \Ll_h(\Omega)$,
the set of lower semi-continuous functions bounded by~$h$, can be stated as 
\begin{equation}\label{eq:subhedgePrimal}
\underline{\val}_p(\Phi) := \sup\left\{ \barho(m) : m\in\overline{\Mf}, m(\omega)\leq\Phi(\omega)\text{, for all }\omega\in\Omega \right\},
\end{equation}
and its dual problem is written as follows
\begin{equation}\label{eq:subhedgeDual}
\underline{\val}_d(\Phi) := \inf\left\{\langle \Phi, \mu\rangle: \mu\in\Pp_h(\Omega), \langle m, \mu\rangle = \barho(m), m\in\overline{\Mf}\right\}.
\end{equation}
It is easily seen that weak duality
$\underline{\val}_p(\Phi) \leq \underline{\val}_d(\Phi) \leq \overline{\val}_d(\Phi) \leq \overline{\val}_p(\Phi)$
holds,
at least for $\Phi \in \Ll_h(\Omega) \cap \Uu_h(\Omega)$.
As~$h$ is not a traded asset, the following assumption prevents degeneracy of the primal problem~\eqref{eq:primal}:
\begin{assumption}\label{assumption:primalfeasibility}
For any fixed $\Phi\in \Uu_h(\Omega)$, 
there exists $m\in\overline{\Mf}$ such that $m\geq\Phi$ on~$\Omega$.
\end{assumption}
The assumption implies that $\overline{\val}_p(\Phi)$ is feasible for any $\Phi\in \Uu_h(\Omega)$; 
since~$\barho$ is continuous on~$\overline{\Mf}$, it is also finite.
Since $\underline{\val}_p(-\Phi) =-\overline{\val}_p(\Phi)$, 
the sub-hedging problem~\eqref{eq:subhedgePrimal} is feasible for~$\Phi$ if~$-\Phi$ satisfies Assumption~\ref{assumption:primalfeasibility}.
The following result, proved in Appendix~\ref{app:thmsuperReplication},
provides absence of duality gap between the primal and dual problems.

\begin{theorem}\label{thm:superReplication}
Suppose  Assumptions~\ref{assumption:functionH}, \ref{assumption:risklessBond} and \ref{assumption:primalfeasibility} hold. 
Then absence of weak free lunch implies no duality gap between the primal and dual super-hedging (resp. sub-hedging) problems.
\end{theorem}

\section{Duality in markets with Call options}\label{sec:ApplicationToOptionPricesCh2}
We now consider when European Call options are traded on the market and discuss how Assumption~\ref{assumption:functionH} can be represented via arbitrage-free extrapolation of the total variance and how it affects the set of feasible solutions to the primal and dual problems. 
We also introduce another
notion of arbitrage to deal with specific cases of Butterfly option spreads priced at zero, allowing us to consider positive rather than strictly positive pricing functionals. 
Relaxing this condition also requires a different ordering on the cone of feasible claims and we show duality results still hold with the latter.

\subsection{Market definitions}
We work in a discrete time setting with a finite time horizon~$T$ and intermediate times $0=t_0<t_1<\ldots<t_n=T$.
The collection of times is defined to be $\Tto:=\{t_0,t_1,\ldots,t_n\}$,
and $\Tt := \Tto\setminus\{t_0\}$.
The state space $\Omega:=\prod_{t\in\Tt} \Omega_t$, where $\Omega_t:=\RR_+$, is locally compact, 
and the coordinate process $S:\Omega\to\RR_+$ is defined to be $S_{t}(\omega) = \omega_{t}$ 
for all $\omega\in\Omega$ and $\omega_t\in\Omega_t$.
We also  normalise it so that $S_0(\omega) = s_{0} = 1$.
We assume that for each maturity $t\in\Tt$, there are 
European Call options traded on the market at the price $c(K,t)$, 
with forward moneyness~$K$ in a finite set~$\Kf_t$.
We also refer to forward log-moneyness $k=\log(K)$, and we shall interchangeably use $c(k,t)$ and $c(K,t)$.
Let us define $K^t_*$ for each $t\in\Tt$ as the moneyness of a Call option available on the market at zero cost:
\begin{equation} 
K^t_*:=\inf\{K\in\Kf_t : c(K,t) = 0\}, 
\end{equation}
and $K^t_* = \infty$ if the set is empty. Denote by $\Cb:=\{c(K,t) : K\in\Kf_t, t\in\Tt\}$ the collection of prices of traded Call options.
\begin{definition}\label{defn:staticPosition}
A static position $\ff := (\varphi_t)_{t\in\Tto}$ is a collection of maps
from~$\RR\to\RR$, with $\varphi_{t_0} \in\RR$ such that, 
for each $t\in\Tt$, there exists $(\alpha_i)_{i=1,\ldots,\kappa(t)}\in\RR^{\kappa(t)}$,
$K^t_1,\ldots,K^t_{\kappa(t)}\in \Kf_t$, with $\kappa(t)<\infty$, for which
$$
\varphi_t := \sum_{i=1}^{\kappa(t)}\alpha_i(S_t - K^t_i)_+.
$$
\end{definition}
This function represents the payoff of the static position, with price at inception
$c_t := \sum_{i=1}^{\kappa(t)}\alpha_ic(K^t_{i},t)$,
and $\varphi_{t_0}$ a static position in a riskless bond with unit payoff.
The set of all static positions is denoted~$\Ss$.
\begin{definition}\label{def:tradingStrategy}
A trading strategy is a vector 
$\Delta := \left(\Delta_t\right)_{t=t_0,\ldots,t_{n-1}} \in \Hh$, 
where $\Hh:=\RR\times\prod_{j=1}^{n-1}\Cc_{b}(\RR^j_+)$ denotes the set of trading strategies.
The first component denotes the initial position in the stock 
and the other components are continuous and bounded functions.
The stochastic integral is defined as
$$
\left(\Delta \bullet S(\omega)\right)_T :=\sum_{i=0}^{n-1}\Delta_{t_i}(\omega)\left(S_{t_{i+1}}(\omega)-S_{t_{i}}(\omega)\right),
$$
and represents the gains or losses obtained by trading according to $\Delta$. 
We use notation $\Delta_{t_i}(\omega) := \Delta_{t_i}(\Pr\omega)$, 
where $\Pr\omega$ is the projection of $\omega\in\Omega$ onto $\RR^i_+$ for each $i=1,\ldots,n-1$.
\end{definition}
At time~$t_j$ (for $j=1,\ldots,n-1$), we consider the strategy~$\Delta_{t_j}$ 
as an element of~$\Cc_{b}(\RR^j_+)$. 
This takes into account possible absence of Markovianity of the underlying price process
or European options with path-dependent payoffs,
in which case the trading strategy depends, not only on the current value, but on the whole history of the price process.
The above definition includes the trivial strategy 
$\widetilde{\Delta} = \left(1,1,\ldots,1,1\right)$
of entering a forward contract at time zero maturing at~$T$
(or equivalently entering a forward contract with maturity~$t_1$ and rolling it to the final maturity~$T$), 
with payoff
$(\widetilde{\Delta}\bullet S(\omega))_T = S_T(\omega)-1$ for all $\omega\in\Omega$.
Also note that the payoff of any trading strategy $\Delta\in\Hh$ is at most linear in~$\omega$.
For a static position $\ff\in\Ss$ and a trading strategy $\Delta \in \Hh$, 
we write the initial cost and final payoff of a semi-static portfolio $(\ff,\Delta)$ as
\begin{equation}\label{eq:NotPi0T}
\Pi_{t_0}(\ff, \Delta) := \varphi_{t_0} + \sum_{t\in\Tt} c_{t}
\qquad\text{and}\qquad
\Pi_T(\ff, \Delta; \omega) := \varphi_{t_0} + \sum_{t\in\Tt}\varphi_{t}(S_t(\omega)) + (\Delta\bullet S(\omega))_T,
\end{equation}
for all $\omega\in\Omega$.
Note that it is possible to have a semi-static portfolio with final maturity $t<T$. 
However as we work with normalised prices, one can represent the final payoff of a portfolio 
maturing at time $t<T$ as a position in the riskless bond maturing at~$T$ 
with the value of the position equal to the said payoff.
The set of traded claims~$\Mf$ is then defined as a collection of all semi-static portfolio payoffs 
$\Pi_T(\ff, \Delta; \cdot)$ for a static position $\ff\in\Ss$ and a trading strategy $\Delta\in\Hh$,
\begin{equation}\label{eqn:market}
\Mf = \left\{\Pi_T(\ff, \Delta; \cdot) : \ff\in\Ss \text{ and } \Delta\in\Hh \right\}.
\end{equation}
As we assume that only European Call options are traded for each maturity $t\in\Tt$ and the payoff of a trading strategy $\Delta\in\Hh$ is continuous and grows at most linearly in $\omega\in\Omega$, 
the set~$\Mf$ consists of functions $m\in \Cc(\Omega)$ such that 
$m(\omega) = \Oo(1+\|\omega\|_1)$ as $\|\omega\|_1$ tends to infinity. 
It is in fact a subspace of~$\Cc_{l}(\Omega)$ where
\begin{equation}\label{eq:functionHLinear}
l(\omega) := 1 + \sum_{t\in\Tt_0} S_{t}(\omega).
\end{equation}
Note that $l\in \Mf$, as the semi-static portfolio $(\ff_*,\Delta_*)$ with $\ff_*:= (2+n,0,\ldots,0)$ and $\Delta_* := (n,n-1,\ldots,1)$
has final payoff $\Pi_T(\ff_*,\Delta_*;\cdot) = l$ on~$\Omega$.
The dual space is $\Pp_l(\Omega) := \{\mu\in\Pp(\Omega) : \langle l, \mu\rangle < \infty\}$, 
the space of all Borel probability measures with finite first moments.
Define now the pricing functional $\rho : \Mf \to \RR$ as 
\begin{equation}\label{eq:rhoFunctional}
\rho(\Pi_T(\ff, \Delta; \cdot)) := \Pi_{t_0}(\ff, \Delta).
\end{equation}
As above, Theorem~3 in~\cite{Clark93} implies that 
absence of strong model-independent arbitrage is equivalent to~$\rho$ 
being linear, uniquely defined and strictly positive.
We also define a market model similarly to~\cite[Definition 1.1]{CoxObloj11}.
\begin{definition}\label{def:Model}
A model is a probability measure in~$\Pp_l(\Omega)$ such that the coordinate process~$S$ is a martingale in its own filtration $\FF := (\sigma(S_r, r\leq t))_{t\in\Tto}$. 
A market model is a martingale measure associated with a positive linear extension of the pricing operator~$\rho$ (defined in~\eqref{eq:rhoFunctional}) from~$\Mf$ to~$\Cc_{l}(\Omega)$.
\end{definition}
A sufficient condition to ensure that~$S$ is a martingale under $\mu\in\MMM$ (the set of all martingale measures) 
is 
$\langle\left(\Delta \bullet S\right)_T, \mu\rangle = 0$,
for all $\Delta\in\Hh$.
By definition~$S$ is a martingale in its own filtration~$\FF$ under a measure $\mu\in\Pp_l(\Omega)$ if 
$\sum_{i=0}^{n-1}\left\langle\ind_{B_{t_i}}(\cdot)(S_{t_{i+1}} - S_{t_i}), \mu\right\rangle = 0$,
for all Borel sets $B_{t_i}\subset\Omega_{t_i}$  for all $i=1,\ldots,n-1$.
To see the sufficiency of the martingale condition, note that the Borel $\sigma$-algebra 
is generated by open sets and the indicator function of an open set is a lower semi-continuous function.
By Lebesgue Monotone Convergence Theorem the definition of a martingale follows.
Let us define the set of market models as 
\begin{equation}\label{eq:marketModels}
\MMM_{\Cb} := \Big\{\mu\in \Pp_l(\Omega): 
\langle \Pi_T(\ff,\Delta;\cdot), \mu\rangle = \Pi_{t_0}(\ff,\Delta)\text{ for }(\ff,\Delta)\in\Ss\times\Hh \Big\}.
\end{equation}
Here, $\Cb$ appears in the definition through $\Pi_{t_0}$ defined in~\eqref{eq:NotPi0T},
where~$c_t$ is the sum of elements of~$\Cb$. 
To enforce the existence of the function~$h$ we restrict the set of market models~$\MMM_{\Cb}$ by imposing conditions on existence of moments of the underlying stock process $S$ which are equivalent to allowing arbitrage-free extrapolation of the total implied variance as will be shown in the sequel. 

\subsection{Extrapolation of variance}\label{sec:extrapVarCh2}
We propose to restrict the set of market models~$\MMM_{\Cb}$  
by imposing conditions on arbitrage-free extrapolation of the total implied variance. 
The Black-Scholes formula for the arbitrage-free price of a Call option at time zero reads 
$c_{\BS}(k,\sigma\sqrt{t}) := \EE\{(S_t - \E^k)_+\} = \Nn(d) - \E^k\Nn(d - \sigma\sqrt{t})$, 
with $d := -\frac{k}{\sigma\sqrt{t}} + \frac{1}{2}\sigma\sqrt{t}$, 
where~$\Nn$ is the standard Normal distribution function. 
For a given market or model price $c(k,t)$ with log-moneyness~$k$ and maturity~$t$, 
the implied volatility 
$\sigma_{\mathrm{implied}}(k,t)$ is the unique non-negative solution to 
$c(k,t) = c_{\BS}(k,\sigma_{\mathrm{implied}}(k,t)\sqrt{t})$ and the total implied variance is then
$w(k,t) := \sigma_{\mathrm{implied}}^2(k, t)t$.
In practice only finitely many option prices are quoted on the market and hence the total implied variance 
function cannot be uniquely specified based on market quotes alone. 
We concentrate our attention on extrapolation of the total implied variance
for a fixed maturity~$t$, while preserving absence of arbitrage.
\cite{Lee} proved that a slice of the total variance~$k\mapsto w(k, t)$ 
can be at most linear as~$|k|$ tends to infinity, and related precisely the slope of the wings to 
the moments of the underlying stock price process.
\cite{BenaimFriz09, BenaimFriz08} further refined this analysis under additional conditions 
on the moment generating function of the log-returns distribution.
Absence of strong model-independent arbitrage (Definition~\ref{dfn:MIArb}) 
in presence of options is equivalent to absence of Calendar and Butterfly Spread arbitrages, 
understood as absence of arbitrage opportunities across option maturities for a fixed strike 
and absence of arbitrage opportunities across different strikes for a fixed maturity respectively.
We shall work with the following standing assumption on the total implied variance:
\begin{assumption}\label{assumption:totalVarConditions}
For fixed $k\in\RR$, $w(k,\cdot)\in \Cc^1(\RR_+)$. 
For fixed $t>0$, $w(\cdot,t)\in \Cc(\RR)$ is strictly positive, differentiable except possibly at finitely
many points, and $\partial_kw(k,t)$ is essentially bounded measurable. 
\end{assumption}
Absence of arbitrage can equivalently be stated as conditions on the shape of the total implied variance
 as shown in~\cite{GatheralJacquier14, JGMN16}.
In particular under proportional dividends, absence of Calendar Spread arbitrage is equivalent to
$\partial_tw(k,t)\geq 0$ for all $k\in\RR$ and $t>0$~\cite[Lemma 2.1]{GatheralJacquier14}.
This is equivalent to the Call price surface being non-decreasing in maturity for each strike.
For fixed~$t$, Butterfly Spread arbitrage is precluded if and only if the function $\gr:\RR \to \RR$ defined by 
\begin{equation}\label{eq:functiong} 
\gr(k):= \left(1 - \frac{k\partial_kw(k,t)}{2w(k,t)}\right)^2 - \frac{\partial^2_kw(k,t)}{4}\left(\frac{1}{w(k,t)}+\frac{1}{4}\right) + \frac{\partial_{kk}w(k,t)}{2},
\end{equation}
is a positive distribution,
with $\partial_{kk}w(\cdot, \cdot)$ defined in the distributional sense.
This condition in turn is equivalent to the Call price function being convex~\cite[Proposition~4.8]{JGMN16}. 
Assumption~\ref{assumption:totalVarConditions} ensures that $\partial_tw(k,t)$ is well defined 
for all $t>0$
and $\partial_kw(k,t)$ can be taken as right of left derivative at~$k$ if~$w$. 
Any valid extrapolation of the total implied variance for a fixed maturity must satisfy Roger Lee's conditions and 
be arbitrage-free. 
We start with the following simple result, proved in Appendix~\ref{app:lemLinearVarNoArb}:
\begin{lemma}\label{lem:LinearVarNoArb}
Fix a maturity $t>0$.
\begin{itemize}
\item (Right wing) 
For fixed constants $a_0, a_1\in\RR_+$ consider the function $w(k,t)\to a_1k+a_0$.
Then the function~$\gr$ is non-negative on $[k^*(a_0,a_1),\infty)$ if and only if $a_1\in[0,2]$, 
where $k^*(a_0,a_1)$ is a positive constant that depends on $a_0$ and $a_1$;
\item (Left wing) 
For fixed constants $a_0, a_1\in\RR_+$ consider the function $w(k,t)\to a_1|k|+a_0$.
Then the function~$\gr$ is non-negative on $[-\infty, k^*(a_0,a_1)]$ if and only if $a_1\in[0,2]$, 
where $k^*(a_0,a_1)$ is a negative constant that depends on $a_0$ and $a_1$.
\end{itemize}
\end{lemma}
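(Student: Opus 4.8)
The plan is to treat the right wing in detail; the left wing then follows by the reflection $k\mapsto -k$. Fix $t>0$ and substitute the affine profile $w(k,t)=a_1k+a_0$ into \eqref{eq:functiong}, so that $\partial_k w(k,t)=a_1$ and $\partial_{kk}w(k,t)=0$. Since $w(k,t)=a_1k+a_0>0$ on the relevant half-line, multiplying through by the strictly positive quantity $16\,w(k,t)^2$ leaves the sign of $\gr$ unchanged, and a direct computation collapses the inequality $\gr(k)\geq 0$ into $q\big(w(k,t)\big)\geq 0$, where
$$
q(X):=(4-a_1^2)\,X^2+4\,(2a_0-a_1^2)\,X+4\,a_0^2,
$$
and $X:=w(k,t)=a_1k+a_0$ ranges over $(a_0,\infty)$ as $k$ ranges over $(0,\infty)$ (the case $a_1=0$ being trivial, as then $q(X)=4(X+a_0)^2\geq 0$). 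Everything now hinges on the sign of this quadratic for $X$ large.

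\emph{Sufficiency.} If $a_1\in[0,2)$, the leading coefficient $4-a_1^2$ is strictly positive, hence $q(X)\to+\infty$ as $X\to\infty$, and $q\geq 0$ on $[X^*,\infty)$, where $X^*$ is the larger real root of $q$ — given explicitly by the quadratic formula in $a_0,a_1$, its discriminant being $16\,a_1^2\big(a_0^2-4a_0+a_1^2\big)$ — or $X^*:=a_0$ when that discriminant is negative (so $q>0$ everywhere). Translating back through $k=(X-a_0)/a_1$ and enlarging if needed, the constant
$$
k^*(a_0,a_1):=\max\!\Big(1,\ \tfrac{(X^*-a_0)_+}{a_1}\Big)>0
$$
has the property that $\gr(\cdot,t)\geq 0$ on $[k^*(a_0,a_1),\infty)$, which is the ``if'' direction of the first bullet.

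\emph{Necessity, the endpoint, and the left wing.} Conversely, if $a_1>2$ — the only way $a_1\in\RR_+$ can fail to lie in $[0,2]$ — then $4-a_1^2<0$, so $q(X)\to-\infty$ and hence $\gr(k)<0$ for all $k$ large; thus $\gr(\cdot,t)$ is non-negative on no half-line $[k^*,\infty)$, giving the ``only if'' direction. At the endpoint $a_1=2$ the quadratic degenerates to the affine map $q(X)=8(a_0-2)X+4a_0^2$, which is eventually non-negative precisely when $a_0\geq 2$; this is the single delicate point, and I would flag it explicitly, noting it is consistent with Roger Lee's moment formula \cite{Lee}, where slope $2$ is exactly the threshold at which only the first moment of the risk-neutral law remains finite. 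Finally, for the left wing one substitutes $w(k,t)=a_1|k|+a_0=-a_1k+a_0$ for $k<0$, so $\partial_k w(k,t)=-a_1$ and $\partial_{kk}w(k,t)=0$; clearing $16\,w(k,t)^2$ yields the \emph{same} polynomial $q$, now in $X:=w(k,t)=-a_1k+a_0$, which again tends to $+\infty$, this time as $k\to-\infty$. The identical dichotomy gives $\gr(\cdot,t)\geq 0$ on $(-\infty,k^*(a_0,a_1)]$ with $k^*(a_0,a_1):=\min\!\big(-1,(a_0-X^*)/a_1\big)<0$ if and only if $a_1\in[0,2]$. The only genuine obstacle is the collapse of $q$ to an affine function at $a_1=2$, where the leading-order cancellation forces the analysis onto the $a_0$-dependent subleading term; away from this endpoint the argument is a one-line limit computation.
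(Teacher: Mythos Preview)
Your argument is correct and follows the same core computation as the paper: both reduce the sign of~$\gr$ to that of the quadratic $q(X)=(4-a_1^2)X^2+4(2a_0-a_1^2)X+4a_0^2$ in $X=w(k,t)$, then analyse its roots. The chief difference is in the necessity direction: the paper invokes Lee's moment formula~\cite{Lee} to argue that a non-negative~$\gr$ on a right half-line forces $a_1\in[0,2]$, whereas you obtain this directly from the sign of the leading coefficient $4-a_1^2$, which is more elementary and self-contained. You also correctly isolate the degenerate endpoint $a_1=2$, where $q$ collapses to the affine map $8(a_0-2)X+4a_0^2$ and eventual non-negativity requires $a_0\geq 2$; the paper's treatment of this case is terser and, as you implicitly note, the lemma as stated does not hold uniformly at $a_1=2$ for small~$a_0$. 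Your explicit formula for $k^*(a_0,a_1)$ via the larger root of~$q$ matches the paper's $k_+$ in~\eqref{eq:k_func2}, and the left-wing reduction by reflection $k\mapsto -k$ is exactly what the paper leaves implicit.
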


\begin{assumption}\label{assumption:higherMoments}
There exist $p^*, q^*>0$ such that there is at least one market model $\mu\in\MMM_{\Cb}$ 
under which~$S$ admits moments of order at least $1+p^*$ 
and negative moments of order at most~$q^*$ up to~$T$. 
\end{assumption}
The set of martingale measures that satisfies Assumption~\ref{assumption:higherMoments} is defined as 
\begin{equation}\label{eq:martingaleHigherMoments}
\MMM^{p^*,q^*} := \MMM \cap  \left\{\mu\in\Pp(\Omega): 
\left\langle\omega^{1+p^*} + \omega^{-q^*}, \mu\right\rangle < \infty\right\},
\end{equation}
and the set of market models satisfying Assumption~\ref{assumption:higherMoments} is then defined as 
\begin{equation}\label{eq:modelHigherMoments}
\MMM^{p^*,q^*}_{\Cb} := 
\MMM_{\Cb}\cap\MMM^{p^*,q^*} .
\end{equation} 
Introduce the functions $f(x) := x^{1+p^*} + x^{-q^*}$ on~$\RR_+$
and $h : \Omega \to \RR$ as 
\begin{equation}\label{eq:functionHMomentsCh2}
h(\omega) := \sum_{t\in\Tt}f(S_t(\omega)).
\end{equation}

The following assumptions allow us to define a proper extrapolation of the total implied variance: 
\begin{assumption}[Left wing]\label{assumption:leftWing}
For any $t\in\Tt$, $K_1^t>0$, and the left wing is extrapolated as
\begin{equation}\label{eq:totalVarExtrapolationLeft}
w(k,t) := f_{L}(k - k^{t}_{1}, t) + w(k^{t}_{1},t), \qquad\text{for all }t\in\Tt, k<k_1^t:=\log(K_1^t),
\end{equation} 
where the function $f_L:\RR\times\Tt \to \RR_+$ satisfies
\begin{enumerate}[(A)]
	\item $f_L(0,\cdot) = 0$;
	\item $f_L(k,\cdot) = \Oo(\psi(q)|k|)$ for small enough~$k$ and $0<q<q^*$ 
	such that~$\gr$ is non-negative on $(-\infty,k^{t}_1)$;
	\item $\partial_tf_L(\cdot,t) \geq 0$, for any $t\in\Tt$.
\end{enumerate} 
\end{assumption}
\begin{assumption}[Right wing]\label{assumption:rightWing}
For $t\in\Tt$, $K^t_*=\infty$, and the right wing extrapolation reads
\begin{equation}\label{eq:totalVarExtrapolationRight}
	w(k,t) := f_{R}(k-k^{t}_{\kappa(t)},t) + w(k^{t}_{\kappa(t)},t),
	\qquad\text{for all }t\in\Tt, k > k^t_{\kappa(t)} := \log(K^t_{\kappa(t)})
\end{equation} 
where the function $f_R:\RR\times\Tt \to \RR_+$ satisfies
\begin{enumerate}[(A)]
	\item $f_R(0,\cdot) = 0$;
	\item $f_R(k,\cdot) = \Oo(\psi(p)k)$ for large enough~$k$ and $0<p<p^*$ 
	such that~$\gr$ is non-negative on $(k^{t}_{\kappa(t)}, \infty)$;
	\item $\partial_tf_R(\cdot,t) \geq 0$ for all $t\in\Tt$.
\end{enumerate} 
\end{assumption}
Here, the function $\psi:\RR\to [0,2]$ defined by $\psi(z) := 2-4(\sqrt{z(z+1)}-z)$
was introduced by \cite{Lee} and gives the precise slope of the total variance in the wings
as a function of the highest (absolute) moment of the underlying stock price.
Assumptions~\ref{assumption:leftWing} and~\ref{assumption:rightWing}
imply that extrapolation can be done linearly as long as the resulting total implied variance surface is consistent with the observed market prices (Assumptions~\ref{assumption:leftWing}(A) and~\ref{assumption:rightWing}(B))
and free of arbitrage, i.e. Assumptions~\ref{assumption:leftWing}(B)-(C) and~\ref{assumption:rightWing}(B)-(C) are satisfied.
In particular 
Assumptions~\ref{assumption:leftWing}(B) and~\ref{assumption:rightWing}(B) ensure the extrapolation is free of Butterfly Spread arbitrage and can be checked using results in Lemma~\ref{lem:LinearVarNoArb}.
Assumptions~\ref{assumption:leftWing}(C) and~\ref{assumption:rightWing}(C) ensure the extrapolation is free of Calendar Spread arbitrage.

As the underlying can be treated as an option with moneyness equal to zero, 
one can interpolate linearly between the traded option with the smallest available moneyness and 
the option with the zero moneyness. 
Therefore Assumption~\ref{assumption:leftWing} appears superfluous. 
However linear interpolation is only a crude approximation 
of the marginal distribution's behaviour near zero, 
whereas specifying extrapolation of the left wing of the smile
allows for a finer approximation (albeit parametric).

\begin{lemma}\label{lemma:WeakArbVarExtrapCh2}
Assume that the set of traded option prices $\Cb$ is free of strong model-independent arbitrage. Then, under Assumptions~\ref{assumption:leftWing} and~\ref{assumption:rightWing}, the Call price surface 
resulting from the total implied variance extrapolation is free of weak free lunch.
\end{lemma}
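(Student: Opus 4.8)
The plan is to show that the extrapolated total implied variance yields, through the Black--Scholes formula, a Call price surface $c(k,t):=c_{\BS}(k,\sqrt{w(k,t)})$ on $\RR\times\Tt$ satisfying Conditions~\eqref{item:C1}--\eqref{item:C6} of Section~\ref{sec:ArbitrageConsequences}; by the equivalence recalled there this is absence of WFLVR in the thus-completed market, and Lemma~\ref{lemma:WFLVRImpWFL} --- whose argument adapts directly to the weighted space~$\Cc_h$, using only that weak convergence tested against the Dirac masses $\delta_\omega\in\Pp_h(\Omega)$ forces pointwise convergence --- then gives absence of weak free lunch. Concretely, one exhibits a martingale measure $\mu\in\MMM^{p^*,q^*}_{\Cb}\subset\Pp_h(\Omega)$, which by Theorem~\ref{thm:FTAPnoOrderUnit} is exactly the continuous strictly positive linear extension of~$\rho$ whose existence is equivalent to absence of weak free lunch.

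First I would check that the extrapolated~$w$ is admissible: because $f_L(0,\cdot)=0$ and $f_R(0,\cdot)=0$, the slices $w(\cdot,t)$ glue continuously onto the interior interpolation at $k^t_1$ and $k^t_{\kappa(t)}$, stay strictly positive (the wing functions are non-negative and are added to a positive value), and inherit Assumption~\ref{assumption:totalVarConditions}. Then Condition~\eqref{item:C1} holds since the extrapolation is anchored to the observed implied variances; Conditions~\eqref{item:C3} and~\eqref{item:C4} follow from the asymptotics of $c_{\BS}$ (any finite, at most linear~$w$ forces $c(k,t)\to1$ as $k\to-\infty$, whereas $w(k,t)=\Oo(\psi(p)k)$ with $\psi(p)<2$, since $p>0$ and $\psi$ is strictly decreasing on $\RR_+$, makes the Gaussian tail of $c_{\BS}$ dominate $\E^{k}$ so that $c(k,t)\to0$ as $k\to+\infty$); and once $c(\cdot,t)$ is convex, $\mu_t([0,K])=1+\partial_+c(K,t)$ is by~\eqref{eq:BreedenLitzenbergerCh2} a genuine probability measure on~$\RR_+$ of unit mean, giving Conditions~\eqref{item:C2} and~\eqref{item:C5} (the latter via $w>0$).

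The two substantive points are absence of Butterfly Spread arbitrage for each fixed~$t$, i.e. non-negativity of the distribution~$\gr(\cdot,t)$ of~\eqref{eq:functiong} (equivalently, convexity of $c(\cdot,t)$), and absence of Calendar Spread arbitrage, i.e. $\partial_t w(k,t)\geq0$, i.e. Condition~\eqref{item:C6}; the latter, by Remark~\ref{remark:CoxOblojCallPriceSurface}, places the $(\mu_t)_{t\in\Tt}$ in convex order, so Strassen's theorem~\cite{Strassen65} produces a martingale $\mu\in\MMM$ with those marginals. For the Butterfly condition, on $(-\infty,k^t_1)$ and $(k^t_{\kappa(t)},\infty)$ this is exactly Assumptions~\ref{assumption:leftWing}(B) and~\ref{assumption:rightWing}(B), which by Lemma~\ref{lem:LinearVarNoArb} hold because the admissible wing slopes $\psi(q),\psi(p)$ lie in $[0,2]$, while on $(k^t_1,k^t_{\kappa(t)})$ it is the arbitrage-free interpolation that applies. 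For the Calendar condition, on the wings $\partial_t w=\partial_t f_L+\partial_t w(k^t_1,\cdot)\geq0$ (and likewise with $f_R$) by Assumptions~\ref{assumption:leftWing}(C) and~\ref{assumption:rightWing}(C) together with calendar-monotonicity of the interpolation at the anchors, and on the interior again by the interpolation. Finally, Roger Lee's moment formula~\cite{Lee} turns the slope bounds $\psi(q),\psi(p)$ into finiteness of the negative and positive moments of the~$\mu_t$ needed for $\langle h,\mu\rangle<\infty$, so $\mu\in\Pp_h(\Omega)$; since~$\mu$ is a martingale reproducing all traded Call prices it extends~$\rho$, hence $\mu\in\MMM^{p^*,q^*}_{\Cb}$.

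The main obstacle is the behaviour at the junction log-moneynesses $k^t_1$ and $k^t_{\kappa(t)}$, where the linear-in-variance wing is grafted onto the interior slice: there $w(\cdot,t)$ is in general only continuous, so $\gr(\cdot,t)$ acquires a singular part whose sign at the junction is that of the jump of $\partial_k w$, and one must also control the one-sided time-derivatives $\partial_t w$ across it. The resolution is that this kink may be taken convex --- the wing slope is chosen no steeper than the one-sided slope of the interior interpolation at the anchor, which is possible since Assumptions~\ref{assumption:leftWing}(B) and~\ref{assumption:rightWing}(B) only cap that slope by $\psi(q)$, resp.~$\psi(p)$, leaving the whole relevant subinterval of $[0,2]$ free --- whence the singular part of $\gr(\cdot,t)$ is non-negative and cannot be cancelled by the remaining (locally bounded) terms of~\eqref{eq:functiong}; the one-sided test $\partial_t w\geq0$ across the junction is handled the same way. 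A secondary point is the strict positivity of the extension, which amounts to the constructed model charging every nonempty open subset of~$\Omega$, ensured once the interior interpolation and the Strassen coupling are taken with full support. Stitching together the wing, interior and junction estimates then yields Conditions~\eqref{item:C1}--\eqref{item:C6}, and hence the claim.
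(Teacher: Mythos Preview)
Your route is the same as the paper's: verify that the extrapolated Call surface satisfies the no-arbitrage conditions of Section~\ref{sec:ArbitrageConsequences}, deduce absence of WFLVR, and then invoke Lemma~\ref{lemma:WFLVRImpWFL}. The paper's own proof is far terser than yours: it checks only the two asymptotic Conditions~\eqref{item:C3} and~\eqref{item:C4} explicitly (the right-wing assumption forces $c_{\BS}(k,\sqrt{w(k,t)})\to0$ as $k\uparrow\infty$, the left-wing one forces the limit~$1$ as $k\downarrow-\infty$), treats the remaining conditions as immediate consequences of Assumptions~\ref{assumption:leftWing}--\ref{assumption:rightWing}, cites Remark~\ref{remark:CoxOblojCallPriceSurface} for the equivalence with absence of WFLVR, and finishes with Lemma~\ref{lemma:WFLVRImpWFL}.

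Your version is more explicit on two fronts. First, you spell out the verification of all six conditions and worry about the junction kinks at $k^t_1$ and $k^t_{\kappa(t)}$; the paper simply takes Assumptions~\ref{assumption:leftWing}(B)--(C) and~\ref{assumption:rightWing}(B)--(C) at face value as asserting global non-negativity of~$\gr$ and calendar monotonicity, without discussing the grafting. Second, you add a parallel justification by constructing the martingale measure $\mu\in\MMM^{p^*,q^*}_{\Cb}$ via Strassen's theorem and appealing to Theorem~\ref{thm:FTAPnoOrderUnit}; this is sound but redundant, since the existence of that model is exactly what Remark~\ref{remark:CoxOblojCallPriceSurface} (i.e.\ Cox--Ob{\l}{\'o}j's equivalence) already packages, and the paper is content to cite it. One minor caveat in your adaptation of Lemma~\ref{lemma:WFLVRImpWFL} to~$\Cc_h$: with $h$ as in~\eqref{eq:functionHMomentsCh2} the Dirac masses $\delta_\omega$ lie in $\Pp_h(\Omega)$ only when all coordinates of~$\omega$ are strictly positive, so pointwise convergence is forced on the open orthant rather than on all of~$\Omega$; the paper's invocation of the lemma glosses over this too.
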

\begin{proof}
For each maturity $t\in\Tt$ define a probability measure $\mu_t$ on the state space $\Omega_t$ as
\begin{equation}\label{eq:BreedenLitzenbergerCh2}
\mu_{t}([0,K]) = 1 + \partial_+c(K,t)
\qquad\text{and}\qquad
\mu_{t}([K,+\infty)) = -\partial_-c(K,t).
\end{equation}
Assumption~\ref{assumption:rightWing} implies that
$\lim_{k\uparrow\infty}c_{\BS}(k,\sqrt{w(k,t)}) = 0$ and $\lim_{k\downarrow-\infty}c_{\BS}(k,\sqrt{w(k,t)})=1$
for each $t\in\Tt$ as a consequence of 
Assumption~\ref{assumption:leftWing} and therefore the expectation of~$S_t$ under~$\mu_t$ is equal to~$1$. 
The authors in~\cite{BL78} showed that the risk-neutral measure of normalised asset returns~$\mu_t$ at maturity $t\in\Tt$ can indeed be constructed that way. 
Moreover as the Call price surface resulting from the total implied variance extrapolation is free of Calendar spread arbitrage, then
\begin{equation}\label{eq:ConvexOrderSufficientConditionCh2}
\int_{0}^\infty(x-K)_+\mu_{t_1}(\rd x)\leq \int_{0}^\infty(x-K)_+\mu_{t_2}(\rd x),
\end{equation}
for any $t_{1}\leq t_{2}$, $K\in\RR_{+}$ which is sufficient for $\mu_{t_1}$ 
and $\mu_{t_2}$ to be in convex order~\cite{B12}.
Strassen's theorem~\cite{Strassen65} then yields the existence of a martingale measure $\mu\in\MMM$ 
with marginals~$\mu_t$ for all $t\in\Tt$. 
Hence there exists at least one market model consistent with traded Call options prices and the extrapolation of the total variance and absence of weak free lunch follow by Theorem~\ref{thm:FTAPnoOrderUnit}.
\end{proof}

The extrapolation of the total implied variance restricts the feasible set 
of the dual problem~\ref{eq:dual}.
To avoid emergence of a duality gap, 
the feasible sets of the primal problem~\ref{eq:primal} must be enlarged:
untraded Call options priced from the extrapolation must be added to the set of static positions~$\Ss$, 
and the set becomes infinite-dimensional. 
Addition of untraded options does not create a duality gap as the resulting set of traded options is free of weak free lunch and the duality results above still apply.

\subsection{Weak arbitrage and duality}\label{sec:CaseOfFinitelyManyOptionsCh2}
As above, absence of week free lunch implies that a market model $\mu\in\MMM_{\Cb}$ 
corresponds to a strictly positive linear functional and hence $\langle m, \mu\rangle > 0$ for all $m\in \Mf_{++}$.
This is a rather strict assumption as it is possible to have Butterfly Spreads traded on the market at zero price 
and find a corresponding market model as shown in~\cite[Theorems 3.1, 4.2]{DH07}.
Moreover, as we shall explore semi-infinite approximations to the primal~\ref{eq:primal} and dual~\ref{eq:dual} problems in the next section~\ref{sec:semiInfiniteLP}, when only finitely many options are available for each maturity $t\in\Tt$ the notion of weak free lunch does not appear to be helpful. 
We thus introduce a notion of weak arbitrage as in~\cite[Definition 2.3]{CoxObloj11}.
\begin{definition}\label{defn:WeakArb}
The pricing functional~$\rho$ in~\eqref{eq:rhoFunctional} admits weak arbitrage on~$\Mf$ 
if for any model $\mu\in\MMM$, there exists $m\in \Mf$ such that 
$\rho(m)\leq 0$, 
but $\mu(\{\omega \in \Omega : m(\omega)\geq 0\}) = 1$ and $\mu(\{\omega \in \Omega : m(\omega)> 0\}) > 0$.
\end{definition}
Under weak arbitrage, $\MMM_{\Cb}$ is empty.
Clearly, strong model-independent arbitrage opportunities are also weak arbitrage opportunities.
This definition of weak arbitrage allows the use of the result~\cite[Theorem~4.2]{DH07} stating that when only finitely many options are traded on the market, absence of weak arbitrage is equivalent to existence of a market model.
It is easily seen that absence of weak arbitrage implies that if there exists a claim $m\in \Mf_+$ 
with market price $\rho(m) = 0$ then
$\mu(\{\omega\in \Omega : m(\omega) > 0 \}) = 0$
for any market model $\mu\in\MMM_{\Cb}$.
With $\Fr_0 := \{m\in \Mf : \rho(m) = 0\}$ denoting 
the set of all traded claims available on the market at price zero, we introduce the convex cone 
\begin{equation}\label{eq:coneW}
\Wf := \Fr_0 \cap (\Cc_{h})_+(\Omega).
\end{equation}
This cone highlights a fundamental issue in strong model-independent arbitrage:
assume that this cone is generated by finitely many traded Butterfly Spreads traded at zero price for each $t\in\Tt$.
For fixed $t\in\Tt$ and any three strikes $K^t_{i-1}<K^t_i<K^t_{i+1}$ 
(with $1<i<\kappa(t)$) the payoff of a Butterfly Spread is 
$$
\alpha(S_t - K^t_{i-1})_+ -(\alpha + \beta)(S_t - K^t_{i})_+ + \beta(S_t - K^t_{i+1})_+,
$$
where $\alpha := 1/(K^t_{i}-K^t_{i-1})$ and $\beta := 1/(K^t_{i+1}-K^t_{i})$.
If it is traded at zero price, 
then absence of weak arbitrage implies that any market model $\mu\in\MMM_{\Cb}$ 
places no mass on the open interval $(K^t_{i-1}, K^t_{i+1})$. 
Clearly the collection of such open sets is closed under taking finite intersections and unions.
Basically, any market model consistent with butterflies priced at zero 
places no mass on the open interval where the payoff of a butterfly is strictly positive.
In that case, there is strong model-independent arbitrage and one cannot use strictly positive linear functionals and extensions thereof, 
but rather just positive functionals, 
which also implies that the ordering on the space of claims needs to be amended.
Let us introduce such an ordering on~$\Cc_{h}(\Omega)$ 
by defining a `trans-positive' closed convex cone
\begin{equation}\label{eq:TransPositiveCone}
\Jj := \overline{(\Cc_{h})_+(\Omega) - \Wf},
\end{equation}
where the closure is taken with respect to the norm topology on $\Cc_{h}(\Omega)$.
This set was introduced by \cite{Clark06} in order to provide an infinite-dimensional 
generalisation of the classical Farkas condition regarding the feasibility of finite-dimensional linear programmes.
Since $0\in \Jj$, we can introduce a new ordering ``$\succeq$'' 
on $\Cc_{h}(\Omega)$ 
such that the relation $f_1\succeq f_2$ holds if and only if $f_1 - f_2 \in \Jj$.
The following lemma shows how 
the negative polar $\Jj^*\subset \Pp_h(\Omega)$ (Definition~\ref{defn:NegativePolar})
characterises weak arbitrage.
\begin{lemma}
Absence of weak arbitrage implies that $\MMM^{p^*,q^*}_{\Cb}\subset\Jj^*$.
\end{lemma}
\begin{proof}
For any $\mu \in \MMM^{p^*,q^*}_{\Cb}$,
the inequality $\langle f, \mu\rangle \geq 0$ holds for all $f\in (\Cc_{h})_+(\Omega)$,
and for any $w\in \Wf$, $\langle w, \mu\rangle$ is null
by absence of weak arbitrage. 
So for any $f\in(\Cc_{h})_+(\Omega)$ and $w\in \Wf$ we have 
$0 \leq \langle f,\mu\rangle = \langle f, \mu\rangle - \langle w, \mu\rangle
 =  \langle f - w, \mu\rangle$.
Since $f - w \in \Jj$, the lemma follows by definition of the negative polar~$\Jj^*$.
\end{proof}
The above analysis also remains the same for any $j$ on the boundary of $\Jj$. 
In particular let $j:=\lim_{n\uparrow\infty}j_n = \lim_{n\uparrow\infty}(f_n-w_n)$ 
and by linearity of the inner product for any $\mu \in \MMM_{\Cb}$ we have 
\begin{equation}
\left\langle j,\mu\right\rangle = \left\langle\lim_{n\uparrow\infty}(f_n-w_n),\mu\right\rangle = 
\lim_{n\uparrow\infty}\left\langle(f_n-w_n),\mu\right\rangle = 
\left\langle\lim_{n\uparrow\infty} f_n,\mu\right\rangle = 
\left\langle f,\mu\right\rangle, 
\end{equation}
where $f\in (\Cc_h)_+(\Omega)$ as the positive cone is closed in the weak topology.

For an option with payoff $\Phi\in \Cc_{l}(\Omega)$ 
(note that as we consider the case when finitely many options are available for hedging, only options with payoffs that grow at most linearly can be superhedged. Of course, if the state space is restricted to be a compact subset of $\Omega$, superlinear payoffs can only be considered),
define now the super-hedging problem
\begin{equation}\label{eq:primalWeakArbChapter2}
{}^*\val_p(\Phi) := \inf\left\{\barho(m) : m\in\overline{\Mf}\text{, }m-\Phi \in \Jj\right\},
\end{equation}
and its associated dual
\begin{equation}\label{eq:dualWeakArbChapter2}
{}^*\val_d(\Phi) := \sup\left\{\langle \Phi, \mu\rangle: \mu\in\MMM^{p^*,q^*}_{\Cb}\right\}. 
\end{equation}
Symmetrically, the sub-hedging primal problem is defined as ${}_*\val_p(\Phi) = -{}^*\val_p(-\Phi)$
and the sub-hedging dual problem as ${}_*\val_d(\Phi) = -{}_*\val_d(-\Phi)$.
To state the required duality, we impose the following assumption:
\begin{assumption}\label{assumption:technicalassumption}
If there exists a continuous linear extension $\pi : C_h(\Omega) \to \RR$ of $\rho$, 
then for all $(f_n)_{n\in\NN}\in C_h(\Omega)$ decreasing pointwise to zero,
$\lim_{n\uparrow\infty} \pi(f_n) = 0$.
\end{assumption}

\begin{theorem}\label{thm:dualityWeakArbChapter2}
Absence of weak arbitrage implies no duality gap between~\eqref{eq:primalWeakArbChapter2} 
and~\eqref{eq:dualWeakArbChapter2} on~$\Cc_{l}(\Omega)$,
and likewise for the sub-hedging problems.
\end{theorem}
\begin{proof}
We only prove the super-hedging case as the sub-hedging one follows by symmetry,
and we follow closely the arguments from Theorem~\ref{thm:superReplication}.
We assume that $\Phi\notin\overline{\Mf}$, otherwise the theorem is trivial.
Absence of weak arbitrage implies there exists a market model $\mu_0\in\MMM_{\Cb}$ with
$\EE^{\mu_0}\{\Phi\} := \langle \Phi, \mu\rangle \leq {}^*\val_p(\Phi)$ and
fix $\lambda\in(\EE^{\mu_0}\{\Phi\} ,{}^*\val_p(\Phi))$.
Let $G := \Span\{\overline{\Mf}, \Phi \}$ and define $\eta : G \to \RR$ as 
$\eta(g) := \eta(m + t\Phi) = \barho(m) + t\lambda$. 
We now show that $\eta$ is positive on $\Jj_G:=\Jj\cap G$. 
Let $g=m+t\Phi \in \Jj_G$ and consider three cases.
If $t=0$ then $\eta(g) = \barho(m) \geq 0$. 
If $t<0$ then $(-t)^{-1}m\succeq \Phi$ and $(-t)^{-1}\barho(m) \geq {}^*\val_p(\Phi) > \lambda$.
Similarly if $t>0$ then $\Phi\succeq (-t)^{-1}m$ and hence $\barho(m) > -t\lambda$.
It also follows that if $t\neq 0$ then $\eta(g) > 0$.

As $\eta$ is linear and dominated by a convex function ${}^*\val_p$ 
(as the function $l$ defined in~\eqref{eq:functionHLinear} is an element of~$\Mf$, 
the function $-\infty<{}^*\val_p(f)<\infty$ for all $f\in \Cc_{l}(\Omega)$)  
hence by Hahn-Banach Extension Theorem there exists a linear extension of~$\pi$ 
to the whole space~$\Cc_{l}(\Omega)$ such that~$\pi$ is dominated by~${}^*\val_p$.
For $j\in \Jj$ we have $0\succeq -j$ and $\pi(-j) \leq {}^*\val_p(-j) \leq \barho(0) = 0$
thus $\pi(j) \geq 0$ by linearity of~$\pi$. 
As $0\in W$ it implies that $\pi$ is a positive linear functional and as $\Cc_{l}(\Omega)$ is a Banach lattice 
it follows by~\cite[Theorem 1.36]{AliTourky} that~$\pi$ is continuous and by Assumption~\ref{assumption:technicalassumption} it can be represented as a Borel probability measure, 
i.e. $\pi\in\Pp_l(\Omega)$. 
Moreover~$\pi$ also extends~$\rho$ and hence gives a market model. 

By construction $\pi(\Phi) = \eta(\Phi) = \lambda$. 
Since~$\pi$ is a market model, it is a feasible solution to~\eqref{eq:dualWeakArbChapter2} 
and $\lambda = \pi(\Phi) \leq {}^*\val_d(\Phi)$. 
As $\lambda\in(\EE^{\mu_0}\{\Phi\} ,{}^*\val_p(\Phi))$ was chosen arbitrarily, 
hence ${}^*\val_d(\Phi) = {}^*\val_p(\Phi)$.
\end{proof}
The primal~\eqref{eq:primalWeakArbChapter2} and the dual~\eqref{eq:dualWeakArbChapter2} problems 
can be extended to the case when $\Phi\in \Uu_{l}(\Omega)$
by defining the extension to the primal problem $\overline{\val}_p : \Uu_{l}(\Omega) \to \overline{\RR}$,
with $\overline{\RR}:= [-\infty,+\infty]$, as
\begin{equation}\label{eq:primalUpperSemiContWeakArbCh2}
\overline{\val}_p(\Phi) := \inf\left\{{}^*\val_p(f) : f\in \Cc_{l}(\Omega), 
f\geq \Phi\text{ on } \Omega \right\}.
\end{equation}
The corresponding extension to the dual problem $\overline{\val}_d : \Uu_{l}(\Omega) \to \overline{\RR}$ 
is defined as 
\begin{equation}\label{eq:dualUpperSemiContWeakArbCh2}
\overline{\val}_d(\Phi) := \sup\left\{\langle\Phi, \mu\rangle: \mu\in\MMM_{\Cb} \right\}.
\end{equation}
The sub-hedging primal problem can be extended to $\Phi\in \Ll_l(\Omega)$ in a similar way. 

If the convex cone~$\Wf$ in~\eqref{eq:coneW} is trivial, i.e. $\Wf = \{0\}$, 
then the trans-positive cone is reduced to the positive cone $(\Cc_{l})_+(\Omega)$, i.e. 
$\Jj = \overline{(\Cc_{l})_+(\Omega) - \Wf} = \overline{(\Cc_{l})_+(\Omega)} = (\Cc_{l})_+(\Omega)$.
Then the definitions of the primal~\eqref{eq:primalUpperSemiContWeakArbCh2} 
and the dual~\eqref{eq:dualUpperSemiContWeakArbCh2} coincide with the definitions of the primal~\eqref{eq:primal} and the dual~\eqref{eq:dual} programmes. 
In particular the super-hedging primal problem for any $\Phi \in \Cc_{l}(\Omega)$ is written as 
\begin{equation}
{}^*\val_p(\Phi) := \inf\left\{\barho(m) : m\in\overline{\Mf}\text{, }m-\Phi \in (\Cc_{l})_+(\Omega)\right\},
\end{equation}
and coincides with $\overline{\val}_p(\Phi)$.
The sub-hedging problems are likewise reduced to~\eqref{eq:subhedgePrimal} and~\eqref{eq:subhedgeDual}.

\section{Reduction to the semi-infinite case}\label{sec:semiInfiniteLP}
The literature on computational methods for sub- and super-hedging problems has been rather sparse,
with the recent exceptions~\cite{Alfonsi, Benamou, GuObloj}. 
Guo and Ob{\l}{\'o}j~\cite{GuObloj} devtelop computational methods to solve the martingale optimal transport (MOT) problem via discretisation and optimisation techniques. 
In particular, they consider an approximation of the MOT via a series of linear programmes. 
To do so, discretisation of the marginal distributions is introduced along with approximation of the martingale condition on a finite number of constraints. 
They introduce the notion of $\eps$-approximating martingale measures, 
and obtain an upper bound on the speed of convergence in the one-dimensional case. 
Assuming existence of moments of the marginal distribution, 
the numerical implementation relies on computing the Wasserstein distance between
the marginal distribution and its approximation. 
They propose two generic approaches to solve this, one in case where the density function of the marginal distribution is known and the second one where one can sample from the marginal. 

We discuss here a reduction of the infinite-dimensional problems~\eqref{eq:primal}-\eqref{eq:dual} 
to the semi-infinite case, with a view towards numerical implementation. 
We first select a finite subset of traded options approximating the set of static positions~$\Ss$ 
from Definition~\ref{defn:staticPosition}.
When only finitely many Call options are traded, we perform extrapolation of the total implied variance according to Assumptions~\ref{assumption:leftWing} and~\ref{assumption:rightWing}, 
and include Call options with prices corresponding to such extrapolation. 
Note that those options may not be traded on the market.
We define a vector of Call option payoffs as 
\begin{equation}\label{eq:OptionPayOffsVectorCh3}
\CC := \left((S_t - K^t_1)_+,\ldots,(S_t - K^t_{\kappa(t)})_+\right)_{t\in\Tt}\in\RR^{\df},
\end{equation}
where $\df:=\sum_{t\in\Tt}\kappa(t)<\infty$, 
and the vector of corresponding market prices as before as
\begin{equation}\label{eq:OptionPriceVectorCh3}
\Cb:= (c(K^t_1,t),\ldots,c(K^t_{\kappa(t)},t))_{t\in\Tt}\in\RR^{\df}.
\end{equation}
We shall also write  
$\CC(\omega) := \left((S_t(\omega) - K^t_1)_+,\ldots,(S_t(\omega) - K^t_{\kappa(t)})_+\right)_{t\in\Tt}$
to denote the evaluation of the Call options payoffs at $\omega\in\Omega$.
\begin{assumption}\label{assumption:noWeakArbChapter3}
The prices~$\Cb$ preclude weak arbitrage 
and~$\Wf$ in~\eqref{eq:coneW} is trivial, 
i.e. $\Wf = \{0\}$.
\end{assumption}
As mentioned previously, when $\Wf = \{0\}$,
the super- and sub-hedging problems~\eqref{eq:primalUpperSemiContWeakArbCh2} and~\eqref{eq:dualUpperSemiContWeakArbCh2} are equivalent to~\eqref{eq:primal} and~\eqref{eq:subhedgePrimal} respectively. 
The set of approximate static positions is now $\widetilde{\Ss} := \RR\times\Span\{\CC\}$,
the first component representing the cash position.
We also discretise the set of trading strategies $\Hh = \RR\times\prod_{j=1}^{n-1}\Cc_{b}(\RR^j_+)$
from Definition~\ref{def:tradingStrategy}. 
For a rational number $\alpha\in\mathbb{Q}$ let $K^j_{\alpha} : = [0,\alpha]^j$ where $j=1,\ldots,n-1$ and define a set of functions 
$B:=\{\theta^{t_j}_{i} \in C_b(\RR^j_+)\text{, }j=1,\ldots,n-1\text{, }i\in\NN\}$ such that for each $j$ and $\alpha$ the set $\{\ind_{K^j_{\alpha}}\theta^{t_j}_{i}\text{, }i\in\NN\}$ is dense in $C(K^j_{\alpha})$. 
Let us also define a finite subset $B_j:= \{\theta^{t_j}_1,\ldots,\theta^{t_j}_{d(t_j)}\}$ with $d(t_j)<\infty$ of elements in $B$ for each $j=1,\ldots,n-1$
(for instance, one can take a set of monomials defined on $K^j_{\alpha}$ for each $j$ and $\alpha$ and extend each element in the set to $\RR_+^j$ such that the extension is equal to the maximum of the element on $K^j_{\alpha}$ on the complement of $K^j_{\alpha}$ and is equal to the element itself otherwise).
Then a discretised trading strategy $\Theta\in\widetilde{\Hh}:= \RR \times\prod_{j=1}^{n-1}\Span\{B_j\}$ is defined as follows and an element $\Theta\in\widetilde{\Hh}$ reads
$$
\Theta(\omega) := 
\left(a_0, \langle\aaa^{1}, \theta^{1}(\omega)\rangle,\ldots, \langle\aaa^{n-1}\theta^{n-1}(\omega)\rangle\right),
$$
for each $\omega\in\Omega$, where $a_0\in\RR$, $\aaa^{j}\in\RR^{d(t_j)}$, and 
$\theta^{j}(\omega)\in\RR^{d(t_j)}_+$ 
are the evaluation vectors of basis functions for each time period~$t_j$. 
Note that $\theta^{j}(\omega) := \theta^{j}(\Pr\omega)$, 
where $\Pr\omega$ is the projection of $\omega\in\Omega$ onto $\RR^j_+$.
Note that we use the same notation $\langle\cdot,\cdot\rangle$ to denote the Euclidean inner product, but this should hopefully not create any confusion.
The payoff of a discretised trading strategy $\Theta\in\widetilde{\Hh}$ then reads
\begin{equation}\label{eq:discretisedHedgeChapter3}
\left(\Theta\bullet S\right)_T = a_0(S_{t_1} - s_0) + \sum_{j=1}^{n-1}\sum_{i=1}^{d(t_j)}a^{j}_i\theta^{j}_i\left(S_{t_{j+1}} - S_{t_j}\right).
\end{equation}
The initial cost~\eqref{eq:NotPi0T} of a discretised hedging portfolio 
$(\widetilde{\ff},\Theta)\in\widetilde{\Ss}\times\widetilde{\Hh}$ now reads
$\Pi_{t_0}(\widetilde{\ff},\Theta) = \langle \Cb, \ww\rangle + \lambda$,
where $\lambda\in\RR$, the vector $\ww = (w^t_1,\ldots,w^t_{\kappa(t)})_{t\in\Tt}\in\RR^{\df}$ with entries denoting portfolio weights in available options and $\langle \cdot,\cdot\rangle$ is the inner product in~$\RR^{\df}$. 
We also write the payoff of the hedging portfolio $(\widetilde{\ff},\Theta)$ at the final maturity,
$\Pi_T(\widetilde{\ff},\Theta) = A_{\lambda}^{\Theta}(\ww)$,
where the linear map~$A$ is defined as 
\begin{equation}\label{eq:Adefn}
A_{\lambda}^{\Theta}(\ww)  := \lambda + \sum_{t\in\Tt}\sum_{i=1}^{\kappa(t)}w^t_i(S_t - K^t_i)_+ + (\Theta \bullet S)_T = \lambda + \CC \ww + (\Theta\bullet S)_T.
\end{equation}
We can then write a problem of super-hedging an option with the upper semi-continuous payoff $\Phi \in \Uu_{l}(\Omega)$ bounded above by a linear function~$l$ defined in~\eqref{eq:functionHLinear} as 
\begin{equation}\label{eq:primalSemiInfinite}
\overline{\val}_p(\Phi)
 := \inf\left\{ \lambda + \left\langle \Cb, \ww\right\rangle : (\ww,\lambda, \Theta)\in\overline{\Ff}_p
\right\}.
\end{equation}
Even though this definition, because of the discretisation, is different from
its infinite-dimensional counterpart~\eqref{eq:primalUpperSemiContWeakArbCh2}, 
we keep the same notation without confusion.
The feasible set~$\overline{\Ff}_p$ is defined as 
\begin{equation}\label{eq:primalPerturbedFeasibleSet}
\overline{\Ff}_p := \left\{(\ww,\lambda,\Theta)\in\RR^{\df+1}\times\widetilde{\Hh} :
A_{\lambda}^{\Theta}(\ww;\omega) - \Phi(\omega)\geq 0\text{ for all }\omega\in\Omega\right\},
\end{equation}
and the associated dual problem has the form
\begin{equation}\label{eq:dualSemiInfinite}
\overline{\val}_d(\Phi) := \sup\left\{ \langle \Phi, \mu\rangle: \mu\in\widetilde{\MMM}^{p^*,q^*}_{\Cb}\right\},
\end{equation}
where the set of Borel probability measures that re-price the discretised portfolios in $\widetilde{\Ss}\times\widetilde{\Hh}$ reads
$$
\widetilde{\MMM}^{p^*,q^*}_{\Cb}
 := \left\{ \mu\in\Pp_h(\Omega): 
 \langle \Pi_T(\widetilde{\ff},\Theta;\cdot), \mu\rangle
= \Pi_0(\widetilde{\ff},\Theta)\text{, }(\widetilde{\ff},\Theta)\in\widetilde{\Ss}\times\widetilde{\Hh}\right\},
$$
with the function~$h$ defined in~\eqref{eq:functionHMomentsCh2}, 
and the real constants $p^*, q^*>0$ in Assumption~\ref{assumption:higherMoments}.
The sub-hedging primal and dual problems can be defined in a similar manner.
We now show that the primal 
and their corresponding dual problems in the sub- and super-hedging cases admit no duality gap.
\begin{proposition}\label{propostion:semiinfduality}
Under Assumptions~\ref{assumption:higherMoments} and~\ref{assumption:noWeakArbChapter3},
there is no duality gap  between~\eqref{eq:primalSemiInfinite} and~\eqref{eq:dualSemiInfinite}.
\end{proposition}
\begin{proof}
By Lemma~\ref{lemma:WeakArbVarExtrapCh2}, Assumptions~\ref{assumption:higherMoments} and~\ref{assumption:noWeakArbChapter3} imply absence of weak free lunch. 
Moreover as the riskless bond satisfies Assumption~\ref{assumption:risklessBond}, 
the statement follows from Theorem~\ref{thm:superReplication}.
\end{proof}

\begin{remark}
As the sub-hedging primal problem can be represented as $\underline{\val}_p(\Phi) = - \overline{\val}_p(-\Phi)$
and the sub-hedging dual problem
is represented in terms of the super-hedging dual problem~\eqref{eq:dualSemiInfinite} as  
$\underline{\val}_d(\Phi) = - \overline{\val}_d(-\Phi)$, Proposition~\ref{propostion:semiinfduality} can be applied to sub-hedging problems as well.
\end{remark}

This discretisation setting is justified by the following result, proved in Appendix~\ref{app:thmConvergence}, which shows that when the number of elements in the basis of the set of discretised 
trading strategies~$\widetilde{\Hh}$ increases to infinity, 
the semi-infinite primal~\eqref{eq:primalSemiInfinite}
and the dual~\eqref{eq:dualSemiInfinite} problems converge to the values of the infinite-dimensional problems defined in~\eqref{eq:primal} and~\eqref{eq:dual} respectively.

\begin{theorem}\label{thm:Convergence}
Under Assumptions~\ref{assumption:higherMoments} and~\ref{assumption:noWeakArbChapter3},
as~$r := \min_{t\in\Tt}\{d(t)\}$ tends to infinity, 
the values of both semi-infinite programmes converge to the values of their infinite-dimensional counterparts.
\end{theorem}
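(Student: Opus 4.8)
The plan is to establish convergence of both the primal and dual semi-infinite programmes to their infinite-dimensional counterparts by exploiting the density of the basis functions in the trading-strategy space together with the no-duality-gap results already obtained. I will treat the super-hedging pair $(\overline{\val}_p,\overline{\val}_d)$ in detail; the sub-hedging case follows by the symmetry identities $\underline{\val}_p(\Phi) = -\overline{\val}_p(-\Phi)$ and $\underline{\val}_d(\Phi) = -\overline{\val}_d(-\Phi)$ recorded above.

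\emph{Step 1: monotonicity and easy inequalities.} As $r$ increases, the sets $B_j$ grow, so $\widetilde{\Hh}$ and $\widetilde{\Ss}$ are nested increasing families; consequently the feasible set $\overline{\Ff}_p$ shrinks as $r\uparrow\infty$ (more constraints are \emph{relaxed}, i.e.\ more hedging instruments become available), so the semi-infinite primal value $\overline{\val}_p^{(r)}(\Phi)$ is non-increasing in $r$ and bounded below by the infinite-dimensional value $\overline{\val}_p(\Phi)$ from~\eqref{eq:primalUpperSemiContWeakArbCh2}; dually $\widetilde{\MMM}^{p^*,q^*}_{\Cb,(r)}$ is non-increasing (more re-pricing constraints), so $\overline{\val}_d^{(r)}(\Phi)$ is non-increasing and bounded below by $\overline{\val}_d(\Phi)$ from~\eqref{eq:dualUpperSemiContWeakArbCh2}. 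By the Proposition immediately preceding, $\overline{\val}_p^{(r)}(\Phi) = \overline{\val}_d^{(r)}(\Phi)$ for every $r$, and by Theorem~\ref{thm:superReplication} (whose hypotheses hold here by Lemma~\ref{lemma:WeakArbVarExtrapCh2} and Assumption~\ref{assumption:risklessBond}) $\overline{\val}_p(\Phi) = \overline{\val}_d(\Phi)$. Thus both monotone sequences converge to a common limit $\ell \geq \overline{\val}_p(\Phi)$, and it remains to show $\ell \leq \overline{\val}_p(\Phi)$.

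\emph{Step 2: approximating an infinite-dimensional super-hedge by discretised ones.} Fix $\eps>0$ and pick a feasible $(\ff,\Delta)\in\Ss\times\Hh$ for the infinite-dimensional primal with $\barho(\Pi_T(\ff,\Delta;\cdot)) \leq \overline{\val}_p(\Phi) + \eps$ and $\Pi_T(\ff,\Delta;\cdot)\geq\Phi$ on $\Omega$. Each component $\Delta_{t_j}\in\Cc_b(\RR^j_+)$ can, on each compact cube $K^j_\alpha=[0,\alpha]^j$, be uniformly approximated by elements of $\Span B_j$ by the density assumption on $B$; since $\Delta_{t_j}$ is bounded and $S$ has the moment integrability encoded in Assumption~\ref{assumption:higherMoments}, one controls the tail contribution of $(S_{t_{j+1}}-S_{t_j})$ outside $K^j_\alpha$ by choosing $\alpha$ large (uniform integrability of the increments under all $\mu\in\MMM^{p^*,q^*}_{\Cb}$, using $h$ in~\eqref{eq:functionHMomentsCh2} as integrable majorant). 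This yields a discretised $\Theta^{(r)}$ whose payoff $(\Theta^{(r)}\bullet S)_T$ approximates $(\Delta\bullet S)_T$ in $\Cc_h(\Omega)$-type norm; after adding a small cash cushion $\delta h$ (legitimate since $l\in\Mf\subset\Cc_h$) to restore the pathwise inequality $A_\lambda^{\Theta^{(r)}}(\ww;\omega)\geq\Phi(\omega)$, we obtain a feasible point of $\overline{\Ff}_p$ at level $r$ with cost at most $\overline{\val}_p(\Phi) + \eps + O(\delta)$. Letting $r\uparrow\infty$, then $\delta\downarrow0$, then $\eps\downarrow0$ gives $\ell \leq \overline{\val}_p(\Phi)$, completing the argument; convergence of $\overline{\val}_d^{(r)}$ then follows from the chain of equalities in Step~1.

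\emph{Main obstacle.} The delicate point is Step~2: converting an \emph{approximate} hedge (good only on a compact region and only up to $L^\infty$/weighted error) into an \emph{exact} pathwise super-hedge without blowing up the cost. This requires simultaneously (a) a uniform-integrability estimate on the stochastic-integral increments that is valid across all admissible market models — here Assumption~\ref{assumption:higherMoments} and the majorant $h$ are essential — and (b) a careful bookkeeping of how large a cash/bond position must be added to absorb the approximation error on the non-compact tail, showing this added cost vanishes as the discretisation refines. Handling the added untraded Call options coming from the variance extrapolation (which enlarge $\Ss$ to an infinite-dimensional set) inside the finite-dimensional $\Span\{\CC\}$ requires the same type of truncation-plus-cushion argument on the static side, and is where the proof in the appendix will need to be most careful.
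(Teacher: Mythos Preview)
Your overall architecture (Step~1 monotonicity plus no-duality-gap on both ends, then closing the inequality) is sound, but you attack the remaining inequality from the \emph{primal} side, whereas the paper closes it from the \emph{dual} side. Concretely, the paper proves a lemma (Lemma~\ref{lemma:CountableMgCondition}) showing that $\bigcap_r \widetilde{\MMM}^{p^*,q^*}_r = \MMM^{p^*,q^*}$: any measure satisfying all the discretised constraints makes the \emph{stopped} process $S_{T\wedge\tau_\alpha}$ a martingale for each rational~$\alpha$ (because the basis is dense on each compact $K^j_\alpha$ and generates the Borel $\sigma$-algebra), hence~$S$ is a local martingale, and then a true martingale by integrability in $\Pp_h(\Omega)$. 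This immediately gives $\lim_r\overline\val_d^{(r)}(\Phi)=\overline\val_d(\Phi)$, and the primal convergence follows from the two no-duality-gap results you already quoted. No pathwise approximation of strategies is needed.

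Your Step~2, by contrast, has a genuine gap at exactly the point you flag. The density hypothesis on~$B$ gives uniform approximation of $\Delta_{t_j}$ only on the compacts $K^j_\alpha$; on the complement you only know both functions are bounded, so the error $(\Delta_{t_j}-\Theta^{(r)}_{t_j})(S_{t_{j+1}}-S_{t_j})$ is of order $|S_{t_{j+1}}-S_{t_j}|$ there and does \emph{not} shrink as $r\uparrow\infty$. To restore the pathwise inequality you propose adding ``$\delta h$'', but $h$ in~\eqref{eq:functionHMomentsCh2} is not in~$\Mf$ (and you yourself justify it with ``$l\in\Mf$'', which is a different function); a cushion of size $\delta l$ has cost $\delta\rho(l)$ that does not vanish, since the tail error it must absorb is itself linear and does not vanish with~$r$. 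Making this work would require constructing, from the extrapolated Call options, a tradeable super-replicant of the tail term with cost tending to zero as $\alpha\to\infty$---essentially reproving a quantitative version of Lemma~\ref{lemma:superlinbound}---which you have not done. (Minor point: in Step~1 you write that $\overline\Ff_p$ ``shrinks''; it grows, though your conclusion that the value is non-increasing is correct.) The dual route via the local-martingale lemma avoids all of this.
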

The form of the discretisation provides information about the convergence:
as the latter is refined, the feasible set~\eqref{eq:primalPerturbedFeasibleSet} for the super-hedging problem becomes larger, and the infimum~\eqref{eq:primalSemiInfinite} decreases.

\section{Perturbation analysis of model-independent hedging problems}\label{sec:PerturbationAnalysis}
Extrapolation of the total implied variance in Section~\ref{sec:extrapVarCh2}
restricts the feasible sets of the dual problem~\eqref{eq:dualUpperSemiContWeakArbCh2}
as well as the feasible set of its semi-infinite approximation~\eqref{eq:dualSemiInfinite}.
On the other hand the feasible sets of the primal problem~\eqref{eq:primalSemiInfinite} is enlarged by adding non-traded Call options with prices consistent with extrapolation.
As this assumption is exogenous, we study now the sensitivity of the optimal values 
of the dual problems to extrapolation of the total implied variance. 
We embed the semi-infinite approximations to the primal and dual problems 
into a family of perturbed problems, where the perturbations are changes in input Call option prices,
and use the language of directional derivatives to provide a rigorous sensitivity analysis.
 
\subsection{Perturbation analysis}\label{sec:perturbationAnalysis}
We embed the primal~\eqref{eq:primalSemiInfinite} and dual~\eqref{eq:dualSemiInfinite} problems 
into a family of perturbed problems by introducing a vector 
$\uu := (u^t_1,\ldots,u^t_{\kappa(t)})_{t\in\Tt}\in\RR^{\df}$ of price perturbations. 
Given an option with payoff $\Phi\in \Uu_h(\Omega)$,
let $\widetilde{\val}_p: \RR^{\df} \to \overline{\RR}$ denote the value of
the perturbed super-hedging primal problem
\begin{equation}\label{eq:primalPerturbed}
\widetilde{\val}_p(\uu) := \inf\left\{\lambda + \left\langle \Cb+\uu, \ww\right\rangle : 
(\ww,\lambda,\Theta)\in \overline{\Ff}_p
\right\},
\end{equation}
where $\overline{\Ff}_p$ is the feasible set defined in~\eqref{eq:primalPerturbedFeasibleSet}.
The explicit dependence on the payoff~$\Phi$ in the notations is dropped for simplicity,
since our aim here is to focus more on the perturbation~$\uu$ of the initial input, 
rather than on the final payoff.
The value function $\widetilde{\val}_p$ is convex 
and $\widetilde{\val}_p(0)$ coincides with the value of the unperturbed primal problem~\eqref{eq:primalSemiInfinite}. 
Defining the Lagrangian function  
\begin{equation}\label{eq:LagrangianFunction}
L^\Theta_\lambda(\ww,\mu)
 := \lambda + \left\langle \Cb, \ww\right\rangle - \left\langle A_{\lambda}^{\Theta}(\ww)-\Phi,\mu\right\rangle,
\end{equation}
we can then write, by definition of~$\overline{\Ff}_p$,
\begin{equation}\label{eq:lagrangiansup}
\sup_{\mu\in(\Mm_h)_+(\Omega)}\left\{L^\Theta_\lambda(\ww,\mu) + \langle\uu,\ww\rangle \right\}= 
\begin{cases}
\lambda + \left\langle \Cb+\uu, \ww\right\rangle, & \text{if } (\ww,\lambda,\Theta)\in\overline{\Ff}_p, \\
+\infty, & \text{otherwise},
\end{cases}
\end{equation}
which yields the equivalent formulation of the primal problem:
\begin{equation}
\inf_{(\ww,\lambda,\Theta)\in\RR^{\df+1}\times\widetilde{\Hh}}\sup_{\mu\in(\Mm_h)_+(\Omega)}\{L^\Theta_\lambda(\ww,\mu)
 + \langle\uu,\ww\rangle\}.
\end{equation}
On the other hand if the infimum is taken over $(\ww,\lambda,\Theta)\in\RR^{\df+1}\times\widetilde{\Hh}$ first, we obtain $$
\inf_{(\ww,\lambda,\Theta)\in\RR^{\df+1}\times\widetilde{\Hh}}\left\{L^\Theta_\lambda(\ww,\mu)
 + \langle\uu,\ww\rangle \right\} 
= \inf_{(\ww,\lambda,\Theta)\in\RR^{\df+1}\times\widetilde{\Hh}} \left\{ \left\langle \Phi ,\mu\right\rangle 
+\lambda + \left\langle \Cb+\uu, \ww\right\rangle - \left\langle A_{\lambda}^{\Theta}(\ww),\mu\right\rangle \right\}.
$$
The expression on the right is not equal to $-\infty$ if
$\lambda + \left\langle \Cb+\uu, \ww\right\rangle = \left\langle A_{\lambda}^{\Theta}(\ww),\mu\right\rangle$
for all $(\ww,\lambda,\Theta)\in\RR^{\df+1}\times\widetilde{\Hh}$.
Expanding the right-hand side according to Definition~\eqref{eq:Adefn} 
and comparing the terms on the left and the right of the equality we see that it holds if 
$$
\left\langle \lambda, \mu\right\rangle = \lambda,
\quad
\left\langle (\Theta \bullet S)_T,\mu\right\rangle = 0
\quad
\text{and}
\quad
\left\langle \CC \ww,\mu\right\rangle = \left\langle \Cb + \uu, \ww\right\rangle.
$$
In particular the last equality can be re-written as 
$$
0 = \left\langle \CC \ww,\mu\right\rangle - \left\langle \Cb + \uu, \ww\right\rangle = 
\left\langle \ww, \CC^*\mu\right\rangle - \left\langle \Cb + \uu, \ww\right\rangle = 
\left\langle \CC^*\mu - \Cb - \uu, \ww \right\rangle,
$$
where $\CC^*\mu$ defines the adjoint map of $\CC : \ww\mapsto \CC \ww \in \Cc_{h}(\Omega)$.
Since the inner product on the right-hand side is null for all $\ww\in\RR^{\df}$, then
$\CC^*\mu = \Cb + \uu$.
The perturbed dual problem thus reads
\begin{equation}\label{eq:dualPerturbed}
\widetilde{\val}_d(\uu) := \sup\left\{ \langle \Phi,\mu\rangle : \mu\in\MMM_\uu\right\},
\end{equation}
where $\MMM_\uu$ is the feasible set of all non-negative Borel measures that integrate
$h$ to a finite constant
\begin{equation}\label{eq:dualPerturbedFeasibleSet}
\MMM_\uu := \left\{\mu\in(\Mm_h)_+(\Omega) :
\left\langle(\Theta \bullet S)_T,\mu\right\rangle = 0 
\text{, }\CC^*\mu = \Cb + \uu\right\}
\end{equation}
satisfying the martingale condition for all $\Theta\in\widetilde{\Hh}$ 
and which are consistent with the perturbed Call prices.
The value $\widetilde{\val}_d(0)$ corresponds 
to that of the unperturbed dual problem~\eqref{eq:dualSemiInfinite}.

We now show that weak arbitrage prevents duality gap:
\begin{theorem}
Suppose that for some perturbation $\uu\in\RR^{\df}$, 
the prices $\uu + \Cb$ satisfy Assumption~\ref{assumption:noWeakArbChapter3}.
Then there is no duality gap between~\eqref{eq:primalPerturbed} and~\eqref{eq:dualPerturbed}.
\end{theorem}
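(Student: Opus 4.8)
The plan is to recognise that the perturbed problems are nothing but the semi-infinite super- and sub-hedging problems of Section~\ref{sec:semiInfiniteLP} with the market Call price vector $\Cb$ replaced by $\Cb+\uu$, and then to invoke the duality already established in that setting. Indeed, the feasible set $\overline{\Ff}_p$ in~\eqref{eq:primalPerturbedFeasibleSet} involves only the payoff map $A_{\lambda}^{\Theta}$ of~\eqref{eq:Adefn}, not the option prices, so the perturbation $\uu$ enters the primal~\eqref{eq:primalPerturbed} only through the objective $\lambda+\langle\Cb+\uu,\ww\rangle$; on the dual side, the derivation leading to~\eqref{eq:dualPerturbed}--\eqref{eq:dualPerturbedFeasibleSet} shows that the sole effect of $\uu$ is to turn the re-pricing constraint $\CC^*\mu=\Cb$ into $\CC^*\mu=\Cb+\uu$, the martingale constraint $\langle(\Theta\bullet S)_T,\mu\rangle=0$, the normalisation $\langle 1,\mu\rangle=1$ and the weighted space $\Cc_h(\Omega)$ being untouched. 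Hence the pairs $(\widetilde{\val}_p(\uu),\widetilde{\val}_d(\uu))$ and $(\utilde{\val}_p(\uu),\utilde{\val}_d(\uu))$ are exactly~\eqref{eq:primalSemiInfinite}--\eqref{eq:dualSemiInfinite} and~\eqref{eq:primalSemiInfiniteSubHedgeChapter3}--\eqref{eq:dualSemiInfiniteSubHedgeChapter3} for the market in which the finitely many traded Calls are quoted at $\Cb+\uu$ and then completed by the extrapolation of Section~\ref{sec:extrapVarCh2}.

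Next I would check the hypotheses under which that semi-infinite duality holds, now read for the perturbed prices. By assumption $\Cb+\uu$ satisfies Assumption~\ref{assumption:noWeakArbChapter3}: the perturbed prices preclude weak arbitrage and the cone $\Wf$ in~\eqref{eq:coneW} is trivial, so by the discussion following~\eqref{eq:dualLowerSemiContWeakArbCh2} the trans-positive cone collapses to $(\Cc_l)_+(\Omega)$ and the perturbed problems genuinely are of the form~\eqref{eq:primal}--\eqref{eq:dual}. The extrapolation (Assumptions~\ref{assumption:leftWing}--\ref{assumption:rightWing}) is simply re-run for $\Cb+\uu$, so Lemma~\ref{lemma:WeakArbVarExtrapCh2} yields absence of weak free lunch for the extrapolated Call price surface; and the remaining standing hypotheses of Theorem~\ref{thm:superReplication} are in force, namely Assumption~\ref{assumption:risklessBond} (the riskless bond is traded), Assumption~\ref{assumption:functionHmodified} for the superlinear weight $h$ of~\eqref{eq:functionHMomentsCh2} (which dominates the linear $l$ of~\eqref{eq:functionHLinear} with $l=o(h)$, so that $\Phi=o(h)$), and primal feasibility (Assumption~\ref{assumption:primalfeasibility}) for $\Phi$ and for $-\Phi$, since $l\in\Mf\subset\overline{\Mf}$ dominates any $\Phi\in\Uu_l(\Omega)$ and is dominated by any $\Phi\in\Ll_l(\Omega)$.

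It then remains to apply Theorem~\ref{thm:superReplication} (equivalently, the no-duality-gap proposition stated just before Theorem~\ref{thm:Convergence}, now with $\Cb+\uu$ in place of $\Cb$), which gives $\widetilde{\val}_p(\uu)=\widetilde{\val}_d(\uu)$, i.e. no gap between~\eqref{eq:primalPerturbed} and~\eqref{eq:dualPerturbed}. For the sub-hedging statement I would either repeat the argument for lower semi-continuous payoffs, or deduce it from the super-hedging case: the substitution $(\ww,\lambda,\Theta)\mapsto(-\ww,-\lambda,-\Theta)$ carries $\underline{\Ff}_p$ for $\Phi$ in~\eqref{eq:primalPerturbedFeasibleSetSub} onto $\overline{\Ff}_p$ for $-\Phi$ by linearity of $A_{\lambda}^{\Theta}$, giving $\utilde{\val}_p(\uu)=-\widetilde{\val}_p(\uu)$ and $\utilde{\val}_d(\uu)=-\widetilde{\val}_d(\uu)$ when $-\Phi$ is put in place of $\Phi$; absence of gap between~\eqref{eq:primalPerturbedSubHedge} and~\eqref{eq:dualPerturbedSubHedge} follows.

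I expect the only genuinely delicate point to be bookkeeping: one must ensure that $\widetilde{\val}_d(\uu)$, hence $\utilde{\val}_d(\uu)$, is feasible for the perturbed prices, i.e. that among the market models consistent with $\Cb+\uu$ there is one integrating $h$ --- precisely Assumption~\ref{assumption:higherMoments} read for $\Cb+\uu$, which is the reason the perturbation must be kept compatible with the moment and extrapolation structure of the section. Everything else is a transcription of the already-proven semi-infinite duality, and the argument should be essentially a citation of Theorem~\ref{thm:superReplication} applied to the re-priced market.
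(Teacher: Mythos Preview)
Your argument is correct but takes a genuinely different route from the paper. The paper does \emph{not} reduce to the already-proven semi-infinite duality; instead it invokes a moment-cone criterion from~\cite[Theorem~5.99]{BonnansShapiro00}: defining the moment cone
\[
\Mbf := \left\{\uu\in\RR^{\df} : \exists\,\mu\in(\Mm_h)_+(\Omega),\ \uu = \CC^*\mu - \Cb,\ \langle(\Theta\bullet S)_T,\mu\rangle = 0\ \forall\,\Theta\in\widetilde{\Hh}\right\},
\]
it shows directly that $\uu\in\interior\Mbf$, which by that theorem yields absence of duality gap. The interior condition is obtained by observing that Assumption~\ref{assumption:noWeakArbChapter3} for $\Cb+\uu$ forces the strict bounds $(1-K^t_i)_+ < c(K^t_i,t)+u^t_i < 1$ for every entry, and then (following~\cite{DavisOblojRaval14}) using continuity of $\mu\mapsto\CC^*\mu$ to find a full Euclidean ball around $\Cb+\uu$ inside the cone.

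Your approach---recognising the perturbed problems as the unperturbed ones for the shifted price vector $\Cb+\uu$ and citing the Proposition preceding Theorem~\ref{thm:Convergence}---is cleaner in that it introduces no new machinery. The paper's approach, by contrast, is more self-contained: it does not need to re-run the extrapolation of Section~\ref{sec:extrapVarCh2} for the perturbed data or re-verify Assumptions~\ref{assumption:leftWing}--\ref{assumption:rightWing}, and the moment cone~$\Mbf$ it sets up is exactly the object on which the subsequent perturbation analysis (Proposition~\ref{proposition:Sensitivity}) rests, since continuity and differentiability of the value function hinge on $\uu$ lying in $\interior\Mbf$. The ``bookkeeping'' point you flag---that the perturbed market must still admit a model integrating~$h$---is present in both arguments (in the paper's case, the cone~$\Mbf$ is defined over $(\Mm_h)_+(\Omega)$), so your caution there is well placed.
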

\begin{proof}
Our proof relies on~\cite[Theorem~5.99]{BonnansShapiro00}, 
which characterises absence of duality gap as a condition on the range of the adjoint map~$\CC^*$, 
defined as the moment cone
\begin{equation}\label{eq:momentCone}
\Mbf :=
\left\{\uu\in\RR^{\df} : \text{ there exists }\mu\in(\Mm_h)_+(\Omega)\text{, } \uu = \CC^*\mu - \Cb, 
\left\langle(\Theta\bullet S)_T,\mu\right\rangle = 0
\text{ for all }\Theta\in\widetilde{\Hh} \right\}.
\end{equation}
If $\uu\in\interior \Mbf$, then there is no duality gap between the primal~\eqref{eq:primalPerturbed} 
and the dual~\eqref{eq:dualPerturbed} super-hedging problems.
Absence of weak arbitrage is equivalent~\cite[Theorem~4.2]{DH07} 
to the existence of a model $\mu\in\MMM_{\uu}$ for prices $\Cb + \uu$. 
Moreover following~\cite[Proof of Proposition~3.1]{DavisOblojRaval14},
in order to show $\uu\in\interior \Mbf$, it is sufficient to note that
for any entry $c(K^t_i,t)+u^t_i$ of the vector $\Cb + \uu$, 
the inequalities $(1-K^t_i)_+ < c(K^t_i,t)+u^t_i < 1$
hold for all $i=1,\ldots,\kappa(t)$ and $t\in\Tt$ as perturbed prices satisfy Assumption~\ref{assumption:noWeakArbChapter3}. 
As $\mu \mapsto \CC^*\mu$ is a continuous function on $\Pp_h(\Omega)$ by~\cite[Lemma 2.2]{BHP13} one can also find a real positive constant $\varepsilon>0$ such that any vector $\vv$
in the open ball $\mathcal{B}_{\varepsilon}(\Cb + \uu)$ centred around $\Cb + \uu$ 
satisfies Assumption~\ref{assumption:noWeakArbChapter3}, 
and therefore $\uu\in\interior \Mbf$ and the theorem follows.
\end{proof}
The condition on the moment cone in the proof goes back to~\cite[Chapter XII, Theorem~2.1]{KS66} in the context of generalised Tchebycheff inequalities, and can also be found in~\cite[Theorem~4.4]{AN87}.
A similar result was used in~\cite{DavisOblojRaval14} to prove absence of duality gap 
under absence of weak arbitrage opportunities. 
Having established absence of duality gap between the primal~\eqref{eq:primalPerturbed} 
and the dual~\eqref{eq:dualPerturbed}, 
we now discuss sensitivity of the programmes to the perturbation. 
In particular, the dual is continuous at~$\uu$;
moreover if the primal is finite at~$\uu$ we have the following:
\begin{proposition}\label{proposition:Sensitivity}
Assume there is no duality gap between the primal and the dual problems for some $\uu\in\RR^{\df}$. 
If the value of the primal at~$\uu$ is finite, 
then the dual is Hadamard directionally differentiable at~$\uu$, 
and the derivative in any direction $\hh\in\RR^{\df}$ reads
$$
(\widetilde{\val}_d)'(\uu,\hh)
 = \inf\left\{\left\langle \ww, \hh\right\rangle: \ww\in\widetilde{\mathfrak{S}}_{\uu}\right\}
\qquad\text{and}\qquad
(\utilde{\val}_d)'(\uu,\hh)
 = \sup\left\{\left\langle \ww, \hh\right\rangle: \ww\in\utilde{\mathfrak{S}}_{\uu}\right\},
$$
where 
$\widetilde{\mathfrak{S}}_{\uu}, \utilde{\mathfrak{S}}_{\uu}\subset \RR^{\df+1}\times\widetilde{\Hh}$ 
denote the set of optimal solutions of the primal problem at~$\uu$ in the super- and sub-hedging problems. 
\end{proposition}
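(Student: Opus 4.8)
The plan is to recognise Proposition~\ref{proposition:Sensitivity} as an instance of the parametric linear-programming sensitivity theory of Bonnans and Shapiro, to verify its hypotheses in the present semi-infinite setting, and then to read off the two formulas. I would first dispose of the sub-hedging statement: since $\utilde{\val}_p(\uu)$ (resp. $\utilde{\val}_d(\uu)$) for payoff $\Phi$ equals $-\widetilde{\val}_p(\uu)$ (resp. $-\widetilde{\val}_d(\uu)$) for payoff $-\Phi$, with the optimal sets $\utilde{\mathfrak{S}}_{\uu}$ and $\widetilde{\mathfrak{S}}_{\uu}$ related by negation of all coordinates, the supremum formula for $(\utilde{\val}_d)'$ is precisely the negative of the infimum formula for $(\widetilde{\val}_d)'$; so it suffices to treat the super-hedging pair.

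The key structural observation is that the perturbation $\uu$ enters the primal \eqref{eq:primalPerturbed} only through the linear term $\langle\uu,\ww\rangle$ in the objective, while the feasible set $\overline{\Ff}_p$ of \eqref{eq:primalPerturbedFeasibleSet} does not depend on $\uu$; equivalently, $\uu$ is a right-hand-side perturbation of the moment constraint $\CC^*\mu=\Cb+\uu$ cutting out $\MMM_\uu$. Hence $\widetilde{\val}_p$ is a pointwise infimum of functions affine in $\uu$, so it is concave on $\RR^{\df}$ and, $\overline{\Ff}_p$ being nonempty (the primal is feasible at $\uu$), everywhere $<+\infty$. Because there is no duality gap at $\uu$ and the primal is finite there, $\widetilde{\val}_p(\uu)=\widetilde{\val}_d(\uu)$ is finite; combined with no-gap on a whole neighbourhood of $\uu$ (which holds in the standing no-weak-arbitrage situation, arguing as in the preceding theorem that $\uu$ lies in the interior of the moment cone $\Mbf$), $\widetilde{\val}_p=\widetilde{\val}_d$ is then finite on a neighbourhood of $\uu$. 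A finite concave function on an open subset of $\RR^{\df}$ is locally Lipschitz, so it is directionally differentiable in every direction and, being locally Lipschitz, this derivative is automatically a Hadamard directional derivative \cite[Prop.~2.49]{BonnansShapiro00}; moreover $(\widetilde{\val}_d)'(\uu,\cdot)$ is the support function of the superdifferential $\partial\widetilde{\val}_d(\uu)$.

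It then remains to identify $\partial\widetilde{\val}_d(\uu)$ with the set of $\ww$-components of the primal optimal solutions $\widetilde{\mathfrak{S}}_{\uu}$. The ``easy'' half is immediate: any $(\ww^\star,\lambda^\star,\Theta^\star)\in\widetilde{\mathfrak{S}}_{\uu}$ is still feasible for \eqref{eq:primalPerturbedFeasibleSet} under any perturbation, so $\widetilde{\val}_p(\uu+t\hh)\le\lambda^\star+\langle\Cb+\uu+t\hh,\ww^\star\rangle=\widetilde{\val}_p(\uu)+t\langle\ww^\star,\hh\rangle$ for all $t>0$, and dividing by $t$, letting $t\downarrow0$ and optimising over $\widetilde{\mathfrak{S}}_{\uu}$ gives $(\widetilde{\val}_d)'(\uu,\hh)\le\inf\{\langle\ww,\hh\rangle:\ww\in\widetilde{\mathfrak{S}}_{\uu}\}$. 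The reverse inequality is the crux, and is where I would invoke \cite[Chapter~4]{BonnansShapiro00} (the results around Theorems~4.24 and~5.99, the latter already used in the preceding theorem): under no duality gap near $\uu$, the superdifferential of the concave value function equals the set of Lagrange multipliers, which here are exactly the $\ww$-parts of the solutions of \eqref{eq:primalPerturbed} at prices $\Cb+\uu$. The two facts requiring genuine verification — and which I expect to be the main obstacle — are that the primal infimum is \emph{attained} (so $\widetilde{\mathfrak{S}}_{\uu}\neq\emptyset$), and that the $\ww$-projection of $\widetilde{\mathfrak{S}}_{\uu}$ is closed, so that the support function of $\partial\widetilde{\val}_d(\uu)$ is literally the infimum displayed (in the unbounded case the value being read in $\overline{\RR}$); the projection is onto the finite-dimensional space $\RR^{\df}$, and $\widetilde{\Hh}$ itself is finite-dimensional, but attainment must be extracted from the inf-compactness coming from $l\in\Mf$, exactly as in the proof of Theorem~\ref{thm:superReplication}, together with the weak-$*$ compactness and growth-control ($h$, $l$) arguments underpinning the earlier duality results. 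With these in hand, the stated formula for $(\widetilde{\val}_d)'$ follows, and that for $(\utilde{\val}_d)'$ by the sign flip noted at the outset.
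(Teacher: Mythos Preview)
Your proposal is correct and follows essentially the same convex-analytic route as the paper: both recognise the value function as concave (the paper works with $\varpi:=-\widetilde{\val}_d$, convex), establish Hadamard directional differentiability via continuity/local Lipschitzness at~$\uu$, and identify the (super)differential with the $\ww$-components of primal optimisers. The only presentational difference is that where you invoke the general Bonnans--Shapiro sensitivity results for the ``hard half'' of the subdifferential identification, the paper carries out the Fenchel conjugate computation explicitly---showing $\varpi^{**}=-\widetilde{\val}_p$ via the Lagrangian~\eqref{eq:LagrangianFunction} and then reading off $\partial\varpi(\uu)=-\widetilde{\mathfrak S}_\uu$ from the Young--Fenchel equality---which has the side benefit of delivering nonemptiness of $\widetilde{\mathfrak S}_\uu$ for free (from nonemptiness of $\partial\varpi(\uu)$), rather than requiring a separate attainment argument as you anticipate.
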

\begin{proof}
We only prove the super-hedging case, as the sub-hedging one is analogous.
By a change of variables $\mu \mapsto -\mu$ we turn the dual problem into the minimisation problem 
\begin{equation}\label{eq:dualmin}
\varpi(\uu) := \inf\left\{\left\langle \Phi, \mu\right\rangle : -\mu\in\MMM_\uu\right\},
\end{equation}
and of course $\varpi(\uu) = - \widetilde{\val}_d(\uu)$. 
Let us now calculate the convex conjugate of~$\varpi$ at $\uu^*\in\RR^{\df}$
\begin{align*}
\varpi^*(\uu^*) &= \sup_{\uu\in\RR^{\df}} \left\{ \langle \uu, \uu^*\rangle - \varpi(\uu)\right\}
= 
\sup_{\mu\in(\Mm_h)_+(\Omega)}\sup_{\uu\in\RR^{\df}}\left\{\langle \uu, \uu^*\rangle - \langle \Phi, \mu \rangle -  \chi_{\MMM_\uu}(-\mu)\right\} \\ 
& = \sup_{\mu\in(\Mm_h)_+(\Omega)}\sup_{\uu\in\RR^{\df}}\{\langle \uu, \uu^*\rangle - \langle \Phi, \mu \rangle - \langle \uu + \Cb + \CC^*\mu , \uu^*\rangle 
+ \langle \uu + \Cb + \CC^*\mu , \uu^*\rangle  
\\ 
& \quad+ \langle (\Theta\bullet S)_T,\mu\rangle - \langle (\Theta\bullet S)_T,\mu\rangle
+\langle \lambda, \mu \rangle - \lambda - \langle \lambda, \mu \rangle + \lambda 
-  \chi_{\MMM_\uu}(-\mu)\} \\
& = \sup_{\mu\in(\Mm_h)_+(\Omega)}\{ L^\Theta_\lambda(-\uu^*,-\mu) + 
\sup_{\uu\in\RR^{\df}}\{ \langle \uu + \Cb - \CC^*(-\mu) ,\uu^*\rangle - \lambda 
+\langle\lambda +  (\Theta\bullet S)_T,-\mu\rangle -  \chi_{\MMM_\uu}(-\mu)\}\},  
\end{align*}
where~$L$ is the Lagrangian from~\eqref{eq:LagrangianFunction}, 
$\chi$ the indicator function, and we also used~\eqref{eq:Adefn}.
Hence the convex conjugate reads 
$\varpi^*(\uu^*) = \sup\left\{ L^\Theta_\lambda(-\uu^*,-\mu): -\mu\in\MMM_\uu \right\}$, and 
\begin{align*}
\varpi^{**}(\uu) &= \sup_{\uu^*\in\RR^{\df}} \left\{ \langle \uu, \uu^*\rangle - \varpi^*(\uu^*)\right\}
 =  \sup_{\uu^*\in\RR^{\df}} \inf_{-\mu\in\MMM_\uu} \left\{ \langle \uu, \uu^*\rangle - L^\Theta_\lambda(-\uu^*,-\mu) \right\} \\
& = - \inf_{\uu^*\in\RR^{\df}}\sup_{\mu\in\MMM_\uu} \left\{ \langle \uu, -\uu^*\rangle + L^\Theta_\lambda(-\uu^*,\mu) \right\}   
 = -\inf_{\uu^*\in\RR^{\df}}\sup_{\mu\in\MMM_\uu} \left\{ \langle \uu, \uu^*\rangle + L^\Theta_\lambda(\uu^*,\mu) \right\} =
-\val_{\Pp}(\uu).
\end{align*}
The Young-Fenchel inequality implies that $\varpi \geq \varpi^{**}$, 
and we recover weak duality between the primal~\eqref{eq:primalPerturbed} 
and the dual~\eqref{eq:dualPerturbed} problems: $\widetilde{\val}_p(\uu) \geq \widetilde{\val}_d(\uu)$.

By assumption there is no duality gap ($\widetilde{\val}_d(\uu) = \widetilde{\val}_p(\uu)$),
hence $\varpi(\uu) = \varpi^{**}(\uu)$ and~$\varpi$ 
is lower semi-continuous by Fenchel-Moreau Theorem~\cite[Section 31]{Rockafellar70}.
Since $\uu\in \interior \Mbf$, then~$\varpi$ is continuous at~$\uu$ 
by~\cite[Theorem~2.2.9]{Zalinescu02}.
By Proposition~\ref{prop:ValFnDifferential}(i) 
the sub-differential $\partial \varpi(\uu)$ is non-empty and by Proposition~\ref{prop:ValFnDifferential}(iii),
$\varpi$ is Hadamard directionally differentiable at~$\uu$ in any direction $\hh\in\RR^{\df}$, 
such that
$$
\varpi'(\uu,\hh) = \sup_{\uu^*\in\partial \varpi(\uu)}\left\langle \uu^*, \hh\right\rangle.
$$
Young-Fenchel inequality~\cite[Section 12]{Rockafellar70} then yields
$\varpi(\uu) = \left\langle \uu,\uu^*\right\rangle - \varpi^*(\uu^*)$
if and only if $\uu^*\in\partial \varpi(\uu)$ and hence 
it follows that $\varpi^{**}(\uu) = \varpi(\uu)$.
The primal problem~\eqref{eq:primalPerturbed} can be expressed as $-\varpi^{**}(\uu)$ 
by the discussion above and it is finite by assumption. 
Hence $\partial \varpi(\uu) = -\mathfrak{S}_{\uu}$ 
(the set of optimal solutions of the primal problem~\eqref{eq:primalPerturbed} at~$\uu$), and
$$
\varpi'(\uu,\hh) = \sup_{\uu^*\in -\mathfrak{S}_{\uu}}\left\langle \uu^*, \hh\right\rangle = 
- \inf_{\uu^*\in \mathfrak{S}_{\uu}}\left\langle \uu^*, \hh\right\rangle.
$$
The proposition then follows since $\varpi(\uu) = -\widetilde{\val}_d(\uu)$ and 
\begin{align*}
(\widetilde{\val}_d)'(\uu,\hh)
 &= \lim_{\eps\downarrow 0}\frac{\widetilde{\val}_d(\uu +\eps\hh) - \widetilde{\val}_d(\uu)}{\eps} 
 = \lim_{t\downarrow 0}\frac{-\varpi(\uu + \eps\hh) + \varpi(\uu)}{\eps} 
  = -\varpi'(\uu,\hh).
\end{align*}
\end{proof}
If the perturbation~$\uu$ is itself parametrised by a vector $\sss\in\RR^n$ for some $n<\infty$ and it is continuously differentiable with respect to this parameter then we have the following application of the Chain Rule~\ref{proposition:chainrule}. 
\begin{corollary}\label{corollary:superhedgingPerturbParam}
With the same assumptions as in Proposition~\ref{proposition:Sensitivity}, 
if $\uu := \uu(\sss)$ is continuously differentiable with respect some parameter $\sss\in\RR^n$, 
then the equalities
$$
(\widetilde{\val}_d \circ \uu)'(\sss,\hh)
 = \inf\left\{\left\langle \uu^*, \nabla\uu(\sss)\hh\right\rangle: \uu^*\in\widetilde{\mathfrak{S}}_{\uu}\right\}
\qquad\text{and}\qquad
(\utilde{\val}_d \circ \uu)'(\sss,\hh)
 = \sup\left\{\left\langle \uu^*, \nabla\uu(\sss)\hh\right\rangle: \uu^*\in\utilde{\mathfrak{S}}_{\uu}\right\}
$$
hold, where $\nabla\uu(\sss)$ is the Jacobian matrix evaluated at~$\sss$.
\end{corollary}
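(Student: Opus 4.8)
The plan is to derive this directly from Proposition~\ref{proposition:Sensitivity} together with the Chain Rule for Hadamard directionally differentiable functions, Proposition~\ref{proposition:chainrule}. I will only treat the super-hedging identity in detail; the sub-hedging one follows verbatim after replacing the infimum by a supremum and $\widetilde{\mathfrak{S}}_{\uu}$ by $\utilde{\mathfrak{S}}_{\uu}$, using the second formula of Proposition~\ref{proposition:Sensitivity}.

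First I would record the two differentiability statements being composed. Since the hypotheses of Proposition~\ref{proposition:Sensitivity} are assumed to hold at the base point $\uu=\uu(\sss)$ (no duality gap there, primal value finite there), that proposition yields that $\widetilde{\val}_d$ is Hadamard directionally differentiable at $\uu(\sss)$, with
$$
(\widetilde{\val}_d)'\bigl(\uu(\sss),\kk\bigr)=\inf\left\{\langle\ww,\kk\rangle:\ww\in\widetilde{\mathfrak{S}}_{\uu}\right\},
\qquad\text{for every }\kk\in\RR^{\df}.
$$
On the other hand, $\sss\mapsto\uu(\sss)$ is continuously differentiable at~$\sss$ by assumption, hence Fréchet --- and a fortiori Hadamard --- differentiable there, its derivative being the linear map $\hh\mapsto\nabla\uu(\sss)\hh$ on $\RR^n$.

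Next I would invoke the Chain Rule. Proposition~\ref{proposition:chainrule}, applied with outer map $\widetilde{\val}_d$ (Hadamard directionally differentiable at $\uu(\sss)$) and inner map $\uu$ (differentiable at $\sss$), gives that $\widetilde{\val}_d\circ\uu$ is Hadamard directionally differentiable at~$\sss$ and that its directional derivative is the composition of the derivatives:
$$
(\widetilde{\val}_d\circ\uu)'(\sss,\hh)=(\widetilde{\val}_d)'\bigl(\uu(\sss),\nabla\uu(\sss)\hh\bigr),
\qquad\text{for every }\hh\in\RR^n.
$$
Substituting $\kk=\nabla\uu(\sss)\hh$ in the displayed formula from Proposition~\ref{proposition:Sensitivity} and renaming $\ww$ as $\uu^*$ then produces exactly
$$
(\widetilde{\val}_d\circ\uu)'(\sss,\hh)=\inf\left\{\langle\uu^*,\nabla\uu(\sss)\hh\rangle:\uu^*\in\widetilde{\mathfrak{S}}_{\uu}\right\},
$$
and the symmetric computation gives the sub-hedging identity.

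The one delicate point --- the \emph{only} place where something could go wrong --- is the legitimacy of the Chain Rule step. The composition of two maps that are merely Gâteaux (pointwise) directionally differentiable need not be directionally differentiable; the composition theorem requires either the inner map to be genuinely differentiable or the outer map to be differentiable in the stronger Hadamard sense (with the incremental quotients converging uniformly over directions in compact sets). Here both favourable hypotheses are in force: $\uu$ is $\Cc^1$, and, crucially, $\widetilde{\val}_d$ was shown to be Hadamard --- not merely Gâteaux --- directionally differentiable in Proposition~\ref{proposition:Sensitivity}, which is precisely why that proposition was phrased in the Hadamard sense. Consequently nothing beyond quoting Proposition~\ref{proposition:chainrule} is needed, and the computation above is rigorous.
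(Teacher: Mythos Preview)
Your proposal is correct and follows essentially the same route as the paper: both argue that $\uu$ being $\Cc^1$ makes it Fr\'echet differentiable with $(\uu)'(\sss,\hh)=\nabla\uu(\sss)\hh$, that Proposition~\ref{proposition:Sensitivity} supplies Hadamard directional differentiability of $\widetilde{\val}_d$ at $\uu(\sss)$, and then invoke the Chain Rule (Proposition~\ref{proposition:chainrule}) to conclude. Your write-up is in fact slightly more careful than the paper's, since you explicitly flag that the outer map is only Hadamard \emph{directionally} differentiable (the derivative being an infimum, hence generally nonlinear) and that this is what the cited Bonnans--Shapiro chain rule actually requires.
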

\begin{proof}
As $\uu$ is continuously differentiable it is Fr{\'e}chet differentiable and 
$(\uu)'(\sss,\hh) = \nabla \uu(\sss) \hh$. 
Since~$\widetilde{\val}_d$ is Hadamard differentiable at~$\uu$ by Proposition~\ref{proposition:Sensitivity}, 
the Chain Rule~\ref{proposition:chainrule} concludes the proof.
\end{proof}
If the super-hedging primal problem~\eqref{eq:primalPerturbed}
admits unique solutions at $\widetilde{\uu}_0\in\RR^{\df}$ and $\utilde{\uu}_0\in\RR^{\df}$,
then $\widetilde{\mathfrak{S}}_{\uu_0} = \{\widetilde{\uu}^*\}$ 
and~$\utilde{\mathfrak{S}}_{\uu_0} = \{\utilde{\uu}^*\}$ are singletons
and the derivatives in Proposition~\ref{proposition:Sensitivity} and Corollary~\ref{corollary:superhedgingPerturbParam} 
are linear in~$\hh$.
Thus, as in~\cite[Section 4.1]{GorbenaLopez14} 
there exist neighbourhoods
$\Bb_{\widetilde{\uu}_0}, \Bb_{\utilde{\uu}_0}\subset\RR^{\df}$ of~$\widetilde{\uu}_0$ 
and~$\utilde{\uu}_0$ 
such that for all $\uu\in \Bb_{\widetilde{\uu}_0}$ 
and all $\vv\in \Bb_{\utilde{\uu}_0}$ 
the values of the perturbed dual problems can be approximated as
$$
\widetilde{\val}_d(\uu)
 = \widetilde{\val}_d(\widetilde{\uu}_0) + \left\langle\widetilde{\uu}^*,\uu-\widetilde{\uu}_0\right\rangle + o(\uu-\widetilde{\uu}_0)
\quad\text{and}\quad
\utilde{\val}_d(\vv)
 = \utilde{\val}_d(\utilde{\uu}_0) + \left\langle\utilde{\uu}^*,\vv-\utilde{\uu}_0\right\rangle + o(\vv-\utilde{\uu}_0)
$$
This approximation can be naturally extended to the case where the perturbation is itself parametrised.
In particular for all~$\sss$ in the neighbourhood of~$\sss_0$, 
the approximation of the perturbed dual problem~\eqref{eq:dualPerturbed}
\begin{equation}\label{eq:valueEstimation}
\left\{
\begin{array}{rl}
\displaystyle \widetilde{\val}_d\circ\widetilde{\uu}(\sss)
 & = \displaystyle \widetilde{\val}_d \circ\widetilde{\uu}(\sss_0) + \left\langle\widetilde{\uu}^*, \nabla\widetilde{\uu}(\sss_0)(\sss-\sss_0)\right\rangle +o(\sss-\sss_0),\\
\displaystyle \utilde{\val}_d\circ\utilde{\uu}(\sss)
 & = \displaystyle \utilde{\val}_d \circ\utilde{\uu}(\sss_0) + \left\langle\utilde{\uu}^*, \nabla\utilde{\uu}(\sss_0)(\sss-\sss_0)\right\rangle +o(\sss-\sss_0).
\end{array}
\right.
\end{equation}

\section{Application to Forward-Start Straddle}\label{sec:applicationFwdStartStraddleCh4}
We perform a sensitivity analysis of the optimal values of robust hedging for Forward-Start Straddle 
with payoff $|S_{t_2} - \Kk S_{t_1}|$ 
for $0<t_1<t_2$ and various strikes $\Kk>0$,
with respect to extrapolation of the total implied variance at~$t_1$ and~$t_2$. 
We assume that the primal perturbed problem~\eqref{eq:primalPerturbed} 
admits a unique solution,
and consider as inputs 
Calls maturing at~$t_{1}$ with strikes $K^{t_{1}}_1,\ldots,K^{t_{1}}_{\kappa(t_{1})}$,
and Calls maturing at~$t_{2}$ with strikes $K^{t_{2}}_1,\ldots,K^{t_{2}}_{\kappa(t_{2})}$,
with $\kappa(t_{1}), \kappa(t_{2})$ both finite. 
The vector of normalised Calls then reads 
\begin{equation}
\Cb = \left(c(K^{t_{1}}_1,t_{1}),\ldots,c(K^{t_{1}}_{\kappa(t_{1})},t_{1}),c(K^{t_{2}}_1,t_{2}),\ldots,c(K^{t_{2}}_{\kappa(t_{2})},t_{2}) \right).
\end{equation}
We parametrise the total implied variance surface $w$ by a vector of parameters $\sss\in\RR^l$
such that that the resulting surface is arbitrage free and grows at most linearly in the wings,
and we denote it by~$w(\cdot,\cdot;\sss)$. 
\begin{assumption}\label{assumption:VarianceParamDifferentiation}
The parametrisation $w(\cdot,\cdot;\sss)$ is continuously differentiable with respect to~$\sss$.
\end{assumption}
We can then calculate the resulting total implied volatility 
$I_{i}^{t}(\sss) := \sqrt{w(k_{i}^{t},t;\sss)}$, 
where $k = \log(K)$,
and define the vector of perturbed prices as
\begin{align*}
\Cb(\sss) := \Cb + \uu(\sss) := \left(\crbs_1^{t_{1}}(\sss),\ldots,\crbs_{\kappa(t_{1})}^{t_{1}}(\sss),
\crbs_{1}^{t_{2}}(\sss),\ldots,\crbs_{\kappa(t_{2})}^{t_{2}}(\sss)\right),
\end{align*}
where for simplicity
$\crbs_{i}^{t}(\sss) := c_{\BS}(k^{t}_i,I_{i}^{t}(\sss))$
for $t\in\{t_{1}, t_{2}\}$, $i=1,\ldots, \kappa(t)$.
We can compute sensitivities of perturbed prices with respect to~$\sss$. 
\begin{lemma}\label{lemma:sensitivityToParamsCalls}
For any $t\in\{t_{1}, t_{2}\}$, $i=1,\ldots, \kappa(t)$, $j=1,\ldots,l$,
$\Vr_{i}^{t}(\cdot)$ denoting the Black-Scholes Vega,
\begin{equation}\label{eq:CallPricePerturbSens}
\frac{\partial \crbs_{i}^{t}(\sss)}{\partial \sss_j}
 = \frac{\Vr_{i}^{t}(\sss)}{2I_{i}^{t}(\sss)\sqrt{t}}\frac{\partial w(k^{t}_i,t;\sss)}{\partial \sss_j}.
\end{equation}
\end{lemma}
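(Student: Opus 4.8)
The plan is to derive \eqref{eq:CallPricePerturbSens} by a direct application of the chain rule to the composition $\sss\mapsto\crbs_i^t(\sss)=c_{\BS}(k_i^t,I_i^t(\sss))$, for fixed $t\in\{t_1,t_2\}$, $i\in\{1,\ldots,\kappa(t)\}$ and $j\in\{1,\ldots,l\}$. The substance of the argument is purely bookkeeping: one keeps track of the total implied variance $w(k_i^t,t;\sss)$, the total standard deviation $I_i^t(\sss)=\sqrt{w(k_i^t,t;\sss)}$ which is the quantity fed as the second argument of $c_{\BS}$, and the implied volatility $I_i^t(\sss)/\sqrt{t}$ at which the Black--Scholes Vega is taken.

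First I would justify the chain rule. By Assumption~\ref{assumption:VarianceParamDifferentiation} the map $\sss\mapsto w(k_i^t,t;\sss)$ is $\Cc^1$, and by Assumption~\ref{assumption:totalVarConditions} it is strictly positive (note that we differentiate in the parameter $\sss$ at fixed $k=k_i^t$, so the possible non-smoothness of $w(\cdot,t)$ in $k$ is irrelevant); hence $\sss\mapsto I_i^t(\sss)=\sqrt{w(k_i^t,t;\sss)}$ is $\Cc^1$ and
\[
\frac{\partial I_i^t(\sss)}{\partial \sss_j}
=\frac{1}{2\sqrt{w(k_i^t,t;\sss)}}\,\frac{\partial w(k_i^t,t;\sss)}{\partial \sss_j}
=\frac{1}{2I_i^t(\sss)}\,\frac{\partial w(k_i^t,t;\sss)}{\partial \sss_j}.
\]
Writing $v$ for the second argument, $v\mapsto c_{\BS}(k,v)=\Nn(d)-\E^k\Nn(d-v)$ with $d=-k/v+v/2$ is smooth on $(0,\infty)$, so $\crbs_i^t$ is $\Cc^1$ on the parameter domain and the chain rule gives $\partial_{\sss_j}\crbs_i^t(\sss)=(\partial_v c_{\BS})(k_i^t,I_i^t(\sss))\,\partial_{\sss_j}I_i^t(\sss)$.

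Finally I would compute $\partial_v c_{\BS}$ and identify it with the Vega up to a factor $\sqrt{t}$. Differentiating $c_{\BS}(k,v)$ and using the identity $\E^k\Nn'(d-v)=\Nn'(d)$ — which holds precisely because $d=v/2-k/v$ — the terms proportional to $\partial_v d$ cancel and one is left with $\partial_v c_{\BS}(k,v)=\Nn'(d)$. Since $S_0=1$, the Black--Scholes Vega of the option with log-moneyness $k_i^t$ and maturity $t$ is $\Vr_i^t(\sss)=\sqrt{t}\,\Nn'(d)$ evaluated at $v=I_i^t(\sss)$, so $(\partial_v c_{\BS})(k_i^t,I_i^t(\sss))=\Vr_i^t(\sss)/\sqrt{t}$. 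Substituting this together with the expression for $\partial_{\sss_j}I_i^t(\sss)$ into the chain rule identity yields \eqref{eq:CallPricePerturbSens}. There is no genuine analytic difficulty; the only point requiring care is the scaling convention — $c_{\BS}$ is parametrised through the total standard deviation $\sqrt{w}$, while the Vega is a derivative in the volatility $\sigma=\sqrt{w}/\sqrt{t}$ — which is exactly the source of the extra $\sqrt{t}$ in the denominator of \eqref{eq:CallPricePerturbSens}.
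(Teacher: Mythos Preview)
Your proof is correct and follows the same chain-rule approach as the paper, which simply writes the computation in one displayed line. You supply more detail than the paper does---explicitly verifying the smoothness hypotheses, computing $\partial_v c_{\BS}(k,v)=\Nn'(d)$, and carefully tracking the $\sqrt{t}$ factor distinguishing the total standard deviation $I_i^t=\sqrt{w}$ from the implied volatility---but the substance is identical.
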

\begin{proof}
A simple application of the chain rule together with Assumption~\ref{assumption:VarianceParamDifferentiation} yields, for $t\in\{t_1, t_2\}$, 
$$
\frac{\partial \crbs_{i}^{t}(\sss)}{\partial \sss_j} 
= \Vr_{i}^{t}(\sss)\frac{\partial I_{i}^{t}(\sss)}{\partial \sss_j}
 = \Vr_{i}^{t}(\sss)\frac{\partial w(k^{t}_i,t;\sss)}{\partial \sss_j}\frac{\rd I_{i}^{t}(\sss)}{\rd w(k^{t}_i,t;\sss)}
= \frac{\Vr_{i}^{t}(\sss)}{2I_{i}^{t}(\sss)\sqrt{t}}\frac{\partial w(k^{t}_i,t;\sss)}{\partial \sss_j}.
$$
\end{proof}
The Jacobian matrix of the perturbed Call prices then reads 
$$
\nabla \Cb(\sss) := \begin{pmatrix} 
 \partial_{\sss_1}\crbs_{1}^{t_1}(\sss) & \ldots & \partial_{\sss_l}\crbs_{1}^{t_1}(\sss)\\ 
 \vdots & \ddots  & \vdots \\ 
\partial_{\sss_1}\crbs_{\kappa(t_{1})}^{t_{1}}(\sss) & \ldots & \partial_{\sss_l}\crbs_{\kappa(t_{1})}^{t_{1}}(\sss) \\ 
\partial_{\sss_1}\crbs_{1}^{t_{2}}(\sss) & \ldots & \partial_{\sss_l}\crbs_{1}^{t_{2}}(\sss)\\ 
\vdots & \vdots & \vdots \\ 
\partial_{\sss_1}\crbs_{\kappa(t_{2})}^{t_{2}}(\sss) & \ldots  & \partial_{\sss_l}\crbs_{\kappa(t_{2})}^{t_{2}}(\sss)
\end{pmatrix}
\in  \mathscr{M}_{\kappa(t_1)+\kappa(t_2), l}(\RR),
$$
where $\mathscr{M}_{\kappa(t_1)+\kappa(t_2), l}(\RR)$ is the space of matrices of size $(\kappa(t_1)+\kappa(t_2))\times l$ with real entries.
For the numerics, we consider $t_1=1$ year and $t_2=1.5$ years;
the set of trading strategies is discretised using a monomial basis of degree at most~$4$ 
and there are~$18$ options available for each maturity for static hedging with moneyness in 
$\{0.3,0.4,0.5,\ldots,2.0\}$. 
However we assume that only a subset of those options 
has quotable market prices and the rest are priced by extrapolating the total implied variance. 
The state space is taken to be $[0,5]\times[0,5]$ with $500$ discretisation points for both maturities.

\subsection{Application to the Black-Scholes model}\label{sec:BScase}
If only prices of at-the-money Call options are observable for each maturity,
it is not unreasonable to fit the Black-Scholes model
$\rd S_t = \Sigma S_t\rd W_t$ ($S_0=1$). 
The only parameter that needs calibration is~$\Sigma$,
and we let $\Sigma = 20\%$.
The resulting total implied variance function $w : \RR\times \Tt \to \RR_+$ is constant in log-moneyness 
for each maturity and $w(\cdot,t) = \Sigma^2 t$ for $t\in\Tt$. 
Assume now that the actual shape of the total implied variance for each $t\in\Tt$ is
\begin{equation}\label{eq:BSextrapolation}
w(k,t;\sss) = p_t|k |+ \Sigma^2t,
\end{equation}
where $p_t\in\RR$ is the symmetric slope on both sides of the smile, 
so that $\sss = (p_{t_{1}},p_{t_{2}})\in\RR^2$.
For each $t\in\Tt$, the function~$\gr$ in~\eqref{eq:functiong} must be non-negative on $(k^*_t, \infty)$, 
which, by Lemma~\ref{lem:LinearVarNoArb}, is equivalent to $p_t\in [0,2]$ 
and the existence of a $k^*_t\in\RR_+$ as in the lemma. 
As we propose extrapolation of the total implied variance to the right on $(0,\infty)$
and to the left on $(-\infty, 0)$, then $k^*_t = 0$ (as $\gr(0) = \Sigma^2t > 0$), 
which places further restrictions on~$p_t$. 
In particular if $\Sigma^2t \geq 2 - \sqrt{2 - p_t^2}$ then $\gr(k)\geq 0$ for all $k>0$ 
by Lemma~\ref{lem:LinearVarNoArb}. 
This inequality places an upper bound on~$p_t$ for each $t\in\Tt$ such that any extrapolation 
with slope satisfying this bound is free of arbitrage. 
If $\Sigma^2t < 2 - \sqrt{2 - p_t^2}$ then 
\begin{equation}
\gr(k) > 0, 
\quad 
\text{ for all }
\quad 
k>\frac{p_t^2(\Sigma^2t + 2) - 8\Sigma^2t + 2p_t\sqrt{\Sigma^4t^2 - 4\Sigma^2t + p_t^2}}{p_t(4 - p_t^2)}.
\end{equation}
It follows that the proposed extrapolation~\eqref{eq:BSextrapolation} is arbitrage free 
if the expression on the right-hand side is equal to zero. 
The resulting quartic equation in~$p_t$ does not have real roots for either $t\in\Tt$ 
when $\Sigma = 0.2$ and $\Tt = \{1, 1.5\}$.
Hence the only viable values for~$p_t$ are between $0$ and $\sqrt{4 - (2-\Sigma^2t)^2}$ for each $t\in\Tt$ 
(where the upper bound is obtained by solving the quadratic equation $\Sigma^2t = 2 - \sqrt{2 - p_t^2}$). 
\begin{assumption}
Both slopes are equal: $p_{t_{1}} = p_{t_{2}} = a$.
\end{assumption}
This assumption could be relaxed, but at the cost of checking absence of calendar spread arbitrage 
$\partial_t w(k,t)\geq 0$~\cite[Lemma 2.1]{GatheralJacquier14}. 
Therefore a potential choice for the slopes would be to increase the value of the slope for each wing as maturity increases. 
The Jacobian now reads 
$$
\nabla \Cb(\sss) = \begin{pmatrix} 
\partial_{p_{t_1}} \crbs_{1}^{t_1}(\sss)) & 0 \\ 
\vdots  & \vdots \\ 
\partial_{p_{t_1}}\crbs_{\kappa(t_1)}^{t_1}(\sss) & 0 \\ 
0 & \partial_{p_{t_2}}\crbs_{1}^{t_2}(\sss)) \\ 
\vdots & \vdots \\ 
0  & \partial_{p_{t_2}}\crbs_{\kappa(t_2)}^{t_2}(\sss)
\end{pmatrix},
$$
and by Lemma~\ref{lemma:sensitivityToParamsCalls} and~\eqref{eq:BSextrapolation},
 we obtain, for each $t\in\Tt$, $i=1,\ldots,\kappa(t)$,
$\displaystyle \frac{\partial \crbs_{i}^{t}(\sss)) }{\partial p_t} = 
\frac{\Vr_{i}^{t}(\sss)|k^{t}_i|}{2I_{i}^{t}(\sss)\sqrt{t}}$.
Below we present numerical results for the super- and sub-hedging primal programmes for the at-the-money Forward-Start Straddle $\Kk = 1$. 
Tables~\ref{table:BSSuperHedgePerturbation} and~\ref{table:BSSubHedgePerturbation} 
summarise the results of the perturbation analysis for the super- and sub-hedging problems introduced above.
The column `Perturbation' contains the values of the slopes of extrapolation of the total implied variance. 
As expected the optimal values of the perturbed problems converge to the optimal value 
of the unperturbed problem in the first row. 
The column `Est. Value' contains the first-order expansion~\eqref{eq:valueEstimation},
and the last column is the absolute difference between the optimal value of the perturbed problem 
obtained by solving~\eqref{eq:primalPerturbed} and the value of the programme estimated 
via~\eqref{eq:valueEstimation}. 
The estimation becomes increasingly better the smaller the perturbation becomes. 
It confirms that the perturbation results presented in Section~\ref{sec:perturbationAnalysis} are local in nature.  
\begin{table}[h!tb]
\begin{center}
	\begin{tabular}{| c | c | c | c |}
	\hline
	Perturbation &	Optimal Value & Est. Value & Abs. Diff. \\ \hline
	0 & 0.149	& 0.149	& 0 \\ \hline
	5E-05 &	0.149 & 0.149 &	2.98E-10 \\ \hline
	1E-04 & 0.1490	& 0.149 & 1.19E-08 \\ \hline
	5E-03 &	0.1496 &  0.1496 &	1.57E-06 \\ \hline
	0.0476 & 0.1544 & 0.1552 &	7.75E-04 \\ \hline
	0.202	&	0.1563 & 0.1753 & 1.9E-02 \\ \hline
	\end{tabular}
\end{center}	
\caption{Perturbation of the super-hedging primal problem for the ATM Forward-Start Straddle in the Black-Scholes case.}
\label{table:BSSuperHedgePerturbation}
\begin{center}
	\begin{tabular}{| c | c | c | c |}
	\hline
	Perturbation &	Optimal Value & Est. Value & Abs. Diff. \\ \hline
	0 & 0.0385	& 0.0385	& 0 \\ \hline
	5E-05 & 0.0385 & 0.0385 &	2.88E-07 \\ \hline
	1E-04 & 0.0385 & 0.0385 & 3.42E-07 \\ \hline
	5E-03 &	0.0383 &  0.0383 & 1.16E-05 \\ \hline
	0.0476 & 0.0365 & 0.0359 & 6.1E-04 \\ \hline
	0.202	&	0.0357 & 0.0272 & 8.53E-03 \\ \hline
	\end{tabular}
\end{center}	
\caption{Perturbation of the sub-hedging primal problem for the ATM Forward-Start Straddle in the Black-Scholes case.}
\label{table:BSSubHedgePerturbation}	
\end{table}

\subsection{Application to the Heston model}\label{sec:HestonCaseChapter4}
Assume now that for each maturity, only Call options with moneyness in $\Kf :=\{0.8,0.9,\ldots,1.2\}$ are traded,
and that observed prices are consistent with the Heston stochastic volatility model~\cite{Heston}, 
where the stock price process is the unique strong solution to
\begin{equation}\label{eq:HestonSDE}
\begin{array}{rcll}
\rd S_t & = & S_t \sqrt{V_t} \rd W_t, & S_0 = 1,\\
\rd V_t & = & \kappa\left(\theta-V_t\right)\rd t + \xi\sqrt{V_t} \rd Z_t, & V_0 = v>0,
\end{array}
\end{equation}
where $W$ and $Z$ are two one-dimensional standard Brownian motions with
$\rd\langle W, Z\rangle_t = \rho \rd t$, $\kappa, \theta, \xi>0$ and $\rho \in [-1,1]$.
We consider here $(\kappa, \theta, \xi, v, \rho) = (1, 0.07, 0.4, 0.07, -0.8)$.
In principle calibrating Heston provides an extrapolation of the total implied variance, 
however there is no closed-form expression, and thus we make a simplifying assumption 
on the extrapolation of the implied variance beyond observable strikes. 
We assume that the total implied variance is extrapolated linearly to the left 
and to the right of the last observed strike for each maturity $t\in\Tt$. 
Let $L:=\min\{i=1,\ldots,18: K_L = \min \Kf\}$ 
and $R:= \max\{i=1,\ldots,18: K_R = \max \Kf_{\mathrm{market}}\}$
denote the smallest and largest indices at which the options are quoted.
Then for a vector $\sss := (q_{t_1},p_{t_1},q_{t_2},p_{t_2})$, 
the wing extrapolations read, for $t\in \Tt$,
\begin{equation}\label{eq:WingHeston}
w(k,t;\sss) = \left\{
\begin{array}{ll}
\psi(q_{t})|k-k_{L}| + w(k_{L},t), & \text{for } k\leq k_{L},\\
\psi(p_{t})|k-k_{R}| + w(k_{R},t), & \text{for } k\geq k_{R},
\end{array}
\right.
\end{equation}
where $\psi(z):= 2-4(\sqrt{z(z+1)}-z)$ as introduced \cite{Lee} and discussed above.
The Jacobian reads 
$$
\nabla \Cb(\sss)
 = \begin{pmatrix} 
 	\crbs'_{1}(\sss) & \mathrm{O}_{L-1} & \mathrm{O}_{L-1} &\mathrm{O}_{L-1}\\
	- & - & - & -\\ 
	\mathrm{O}_{\kappa(t_1)-R} & \crbs'_{2}(\sss) & \mathrm{O}_{\kappa(t_1)-R}  & \mathrm{O}_{\kappa(t_1)-R}\\
	\mathrm{O}_{L-1} & \mathrm{O}_{L-1} & \crbs'_{3}(\sss) & \mathrm{O}_{L-1} \\
	- & - & - & -\\ 
	\mathrm{O}_{\kappa(t_2)-R} & \mathrm{O}_{\kappa(t_2)-R} & \mathrm{O}_{\kappa(t_2)-R} & \crbs'_{4}(\sss)
\end{pmatrix},
$$
where the dashed lines are null matrices of size $(R-L+1, 4)$ and correspond to the initial (unperturbed) inputs,
the~$\mathrm{O}$ are null column vectors with size in subscript, and the~$\crbs'(\sss)$ are column vectors of derivatives:
\begin{equation*}
\begin{array}{rlcl}
\crbs'_{1}(\sss) & := \displaystyle (\partial_{q_{t_{1}}}\crbs_{i}^{t_{1}}(\sss))_{i=1,\ldots,L-1}
 & \text{ and } & 
\crbs'_{3}(\sss) := \displaystyle (\partial_{q_{t_{2}}}\crbs_{i}^{t_{2}}(\sss))_{i=1,\ldots,L-1},\\
\crbs'_{2}(\sss) & := \displaystyle (\partial_{p_{t_1}} \crbs_{i}^{t_1}(\sss))_{i=R+1,\ldots,\kappa(t_1)}
 & \text{ and } & 
\crbs'_{4}(\sss) := \displaystyle (\partial_{p_{t_2}} \crbs_{i}^{t_2}(\sss))_{i=R+1,\ldots,\kappa(t_2)}.
\end{array}
\end{equation*}
Note that rows of zeros correspond to sensitivities of the traded Call option prices, which naturally do not depend on the extrapolation of the wings. 
\begin{lemma}\label{lemma:varExtrapHestonSens}
For $w(k,t;\sss)$ in~\eqref{eq:WingHeston} 
for each $t\in\Tt$, $k\in\RR$, 
the following holds for $i=1,\ldots,4$:
$$
\frac{\partial w(k,t;\sss)}{\partial \sss_i} = - \frac{\left|k - \ind_{\{i=1,3\}}(i)k_L - \ind_{\{i=2,4\}}(i)k_R\right|\psi(\sss_i)}{\sqrt{\sss_i(1+\sss_i)}}.
$$
\end{lemma}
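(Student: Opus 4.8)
The plan is a direct computation via the chain rule, the only genuinely non-trivial ingredient being an algebraic identity for the derivative of Lee's function~$\psi$. First I would observe that in the wing extrapolation~\eqref{eq:WingHeston} each coordinate of $\sss=(q_{t_1},p_{t_1},q_{t_2},p_{t_2})$ enters exactly one of the four branches: $\sss_1=q_{t_1}$ appears only in the left wing at maturity~$t_1$ (through the factor $\psi(q_{t_1})|k-k_L|$), $\sss_2=p_{t_1}$ only in the right wing at~$t_1$ (through $\psi(p_{t_1})|k-k_R|$), and likewise $\sss_3=q_{t_2}$ and $\sss_4=p_{t_2}$ for the maturity~$t_2$. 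Since the anchor values $w(k_L,t)$ and $w(k_R,t)$ are fixed by the observed prices and independent of~$\sss$, and since $k\mapsto|k-k_L|$ (resp. $|k-k_R|$) does not depend on~$\sss$, differentiating the relevant branch gives, in the wing where that branch is active,
$$
\frac{\partial w(k,t;\sss)}{\partial\sss_i}
 = \psi'(\sss_i)\,\bigl|k - \ind_{\{i=1,3\}}(i)k_L - \ind_{\{i=2,4\}}(i)k_R\bigr|.
$$

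It then remains to identify~$\psi'$. Writing $\psi(z)=2+4z-4\sqrt{z(z+1)}$, one differentiation gives $\psi'(z)=4-2(2z+1)/\sqrt{z(z+1)}$; multiplying through by $\sqrt{z(z+1)}$ yields $\psi'(z)\sqrt{z(z+1)}=4\sqrt{z(z+1)}-2(2z+1)=-\bigl(2+4z-4\sqrt{z(z+1)}\bigr)=-\psi(z)$, that is $\psi'(z)=-\psi(z)/\sqrt{z(z+1)}$ for $z>0$. Substituting this into the displayed expression produces exactly the claimed formula.

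There is essentially no obstacle here beyond bookkeeping: the main point to keep in mind is that the identity $\psi'(z)=-\psi(z)/\sqrt{z(z+1)}$, and hence the stated derivative, requires $z>0$, which holds because the moment exponents $p_t,q_t$ feeding the wing slopes are strictly positive under Assumptions~\ref{assumption:leftWing}--\ref{assumption:rightWing}, so that $\psi$ is differentiable at each~$\sss_i$.
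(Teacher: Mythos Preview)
Your proof is correct and follows essentially the same approach as the paper: factor the derivative via the chain rule into $|k-\cdot|$ times $\psi'(\sss_i)$, then establish the identity $\psi'(z)=-\psi(z)/\sqrt{z(z+1)}$ by direct differentiation. Your version is in fact a little more careful than the paper's in spelling out which branch each coordinate~$\sss_i$ governs and in flagging the $z>0$ requirement.
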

\begin{proof}
The chain rule yields
$$
\frac{\partial w(k,t;\sss)}{\partial \sss_i} = \left|k - \ind_{\{i=1,3\}}(i)k_L - \ind_{\{i=2,4\}}(i)k_R\right|\frac{\partial \psi(\sss_i)}{\partial \sss_i},
$$
and 
$$
\frac{\partial \psi(\sss_i)}{\partial \sss_i} = \frac{\partial}{\partial \sss_i}
\left[2 - 4\left(\sqrt{\sss_i(1+\sss_i)} - \sss_i\right)\right]
= 
\frac{4\left(\sqrt{\sss_i(1+\sss_i)} - \sss_i\right) - 2}{\sqrt{\sss_i(1+\sss_i)}} = -\frac{\psi(\sss_i)}{\sqrt{\sss_i(1+\sss_i)}}.
$$
\end{proof}
Then by Lemmas~\ref{lemma:sensitivityToParamsCalls} and~\ref{lemma:varExtrapHestonSens} we have, for all $j=1,\ldots 4$,  $i=1,\ldots,\kappa(t)$ and $t\in\Tt$,
$$
\frac{\partial \crbs_{i}^{t}(\sss)) }{\partial \sss_j} = 
-\frac{\Vr_{i}^{t}(\sss)\left|k - \ind_{\{j=1,3\}}(i)k_L - \ind_{\{j=2,4\}}(i)k_R\right|\psi(\sss_j)}{2I_{i}^{t}(\sss)\sqrt{\sss_j(1+\sss_j)t}}.
$$
As discussed in~\cite[Section 6.3]{BenaimFriz08},
the slope of the total implied variance for a fixed~$t$ 
as~$k$ tends to infinity is equal to~$\psi(p^{*})$ where~$p^{*}$ is a root of a non-linear equation 
\begin{equation}\label{eq:exponentPEq}
(\kappa - \rho\xi p^{*})^2
+ \left(\xi^2 p^{*}(p^{*}-1) - (\kappa - \rho\xi p^{*})^2\right)^{1/2}\cot\left(\frac{\sqrt{\xi^2p^{*}(p^{*}-1) - (\kappa - \rho\xi p^{*})^2}t}{2}\right) = 0. 
\end{equation}
We can use the above equation to calculate the slope of the left wing of a slice of the total implied variance as~$k\downarrow-\infty$. 
The symmetric process~$1/S$ follows the same SDE~\eqref{eq:HestonSDE} with amended parameters:
with $X := \log(S)$ and $Y = -X$, 
It\^{o}'s lemma implies 
$\rd X_t = -\frac{1}{2}V_t\rd t + \sqrt{V_t}\rd W_t$ 
and 
$\rd Y_t = \frac{1}{2}V_t\rd t + \sqrt{V_t}\rd B_t$,
where $\rd B_t := \sqrt{V_t}\rd t - \rd W_t$ is a Brownian motion with drift. 
Also note that $Z = \rho W + \sqrt{1 - \rho^2}W^1$,
where~$W$ and~$W^1$ are independent. 
Therefore
$$
\rd Z_t = \rho \left(\sqrt{V_t}\rd t - \rd B_t\right) + \sqrt{1 - \rho^2}W^1_t = 
\rho\sqrt{V_t}\rd t + \rd W^2_t, 
$$
where $W^2_t := -\rho B_t + \sqrt{1 - \rho^2}W^1_t$ and the instantaneous variance~$V$ satisfies
$\rd V_t = \widetilde{\kappa}(\widetilde{\theta} - V_t)\rd t + \xi \sqrt{V_t}\rd W^2_t$,
with $\widetilde{\kappa} := \kappa - \rho\xi$ and $\widetilde{\theta} := \kappa\theta/(\kappa - \rho\xi)$.
Thus the inverse of~$S$ follows~\eqref{eq:HestonSDE} 
with parameters $\widetilde{\kappa}, \widetilde{\theta}, \xi > 0$ and $\widetilde{\rho} := -\rho \in [-1, 1]$ 
only if $\kappa > \rho\xi$, which is automatically satisfied as $\rho < 0$ in our case. 
As the higher moments of~$1/S$ are the negative moments of~$S$, 
the parameter~$q^*$ of the slope~$\psi(q^*)$ of the left wing can be calculated as a solution of the non-linear equation~\eqref{eq:exponentPEq} with parameters $\widetilde{\kappa}$ and $\widetilde{\rho}$ 
substituted instead of~$\kappa$ and~$\rho$.
Thus we can calculate the vector~$\sss$ using~\eqref{eq:exponentPEq} and the discussion above. 

Table~\ref{table:PerturbationT1} presents the sets of slopes used to extrapolate the total implied variance for both maturities. 
The perturbation sets are numbered for ease of reference and Set~$1$ corresponds to the unperturbed case.
The parameters in this set are calculated by solving~\eqref{eq:exponentPEq}.
As discussed in the Black-Scholes case in Section~\ref{sec:BScase},
other perturbation sets were chosen so that the slices of the total implied variance do not cross.
Tables~\ref{table:HestonSuperHedgePerturbation} and~\ref{table:HestonSubHedgePerturbation} 
show the perturbation analysis for the super- and sub-hedging problems respectively.
As in the Black-Scholes case in Section~\ref{sec:BScase} the results are in line with expectations, as the approximation becomes less accurate as the perturbation parameters deviate from the unperturbed case 
(presented in the first row of each Table). 
It also confirms that the perturbation results obtained in Section~\ref{sec:perturbationAnalysis} are local in nature.
It must be noted that the results in Black-Scholes and Heston 
imply that the at-the-money Forward-Start Straddle is not very sensitive to errors in extrapolation of the spot total implied variance. 
In particular, even if the extrapolation is very inaccurate, the price of Forward-Start options close to at-the-money will not vary significantly.
These confirm the results obtained in~\cite{BJLR17} in the sense that
European options cannot effectively hedge forward volatility claims, 
and instead Forward-Start options should be viewed as 
input (when traded liquidly) into the calibration of forward volatility-dependent exotics.

\begin{table}[h!t]
\begin{center}
	\begin{tabular}{| c | c | c | c | c | c | c |}
	\hline
	{Perturbation Set} &	1 & 2 & 3 & 4 & 5 & 6 \\ \hline
	$q_{t_1}$ & 5.058 & 5.06 & 5.2 & 6 & 10 & 12 \\ \hline
	$p_{t_1}$ &  24.21 & 24.22  & 24.35 & 25.1 & 35 & 37 \\ \hline
	$\psi(q_{t_1})$ & 0.0901 & 0.09011 & 0.0879 & 0.077 & 0.0476 & 0.04 \\ \hline
	$\psi(p_{t_1})$ &  0.0202 & 0.02022 & 0.0201 & 0.0195 & 0.0141 & 0.0133 \\ \hline
	$q_{t_2}$ & 6.83 &  6.84	& 6.9 &	7.1	& 10 &	12 \\ \hline
	$p_{t_2}$ &  30.714 & 30.72 & 30.73 & 31.1 & 35 & 37 \\ \hline
	$\psi(q_{t_2})$ & 0.0683 &  0.0682 & 0.0676 & 0.0659 & 0.0476 &	0.04 \\ \hline
	$\psi(p_{t_2})$ &  0.016 & 0.01601 & 0.016 & 0.0158 & 0.0141 & 0.0133 \\ \hline
	\end{tabular}
\end{center}	
\caption{Perturbation parameters and corresponding total implied variance slopes.}
\label{table:PerturbationT1}
\end{table}
\begin{table}[h!t]
\begin{center}
	\begin{tabular}{| c | c | c | c |}
	\hline
	Perturbation set & Optimal Value & Est. Value & Abs. Diff. \\ \hline
	1 & 0.1616 & 0.1616	        & 0 \\ \hline
	2 & 0.1616 & 0.1616 &	2.77E-08 \\ \hline
	3 & 0.1617	& 0.1617  & 5.16E-06 \\ \hline
	4 & 0.1624 &  0.1627  &	2.38E-04 \\ \hline
	5 & 0.1627 & 0.1654  &	2.65E-03 \\ \hline
	6 & 0.1625 & 0.1662  & 3.69E-03 \\ \hline
	\end{tabular}
\end{center}	
\caption{Perturbation of the super-hedging primal problem for the ATM Forward-Start Straddle in Heston.}
\label{table:HestonSuperHedgePerturbation}	
\begin{center}
	\begin{tabular}{| c | c | c | c |}
	\hline
	Perturbation set & Optimal Value & Est. Value & Abs. Diff. \\ \hline
	1 & 0.04455	& 0.04455	& 0 \\ \hline
	2 & 0.04455 & 0.04455   & 6.78E-09\\ \hline
	3 & 0.04452 & 0.04451 & 4.83E-06 \\ \hline
	4 &  0.04437  &  0.04427 & 1.06E-04 \\ \hline
	5 & 0.04432 & 0.04353 & 7.90E-04 \\ \hline
	6 &	0.04436 & 0.04329 & 1.07E-03 \\ \hline
	\end{tabular}
\end{center}	
\caption{Perturbation of the sub-hedging primal problem for the ATM Forward-Start Straddle in Heston.}
\label{table:HestonSubHedgePerturbation}	
\end{table}

\newpage
\appendix
\section{Cones and directional derivatives}\label{sec:AppReminder}
Let~$\Xx$ be a normed topological vector space, 
and $\Xx^*$ its topological dual space. 
We first recall several facts about Riesz spaces and convex cones in vector spaces,
taking~\cite{AB07} as our main guide.
\begin{definition}\cite[Section 8.1]{AB07}\label{defn:cones}
A positive convex cone $X_{+}\subset\Xx$ is closed under operations of addition and multiplication by a non-negative real-valued scalar together with the property $X_{+}\cap (-X_{+}) = \{0\}$. 
A strictly positive cone~$X_{++}$ is defined as $X_{++} := X_+\setminus\{0\}$.
\end{definition}
For every application in this paper, $\Xx$ is endowed with a partial order induced by a positive convex cone $X_+\in\Xx$,
i.e. for any two elements $x_1, x_2\in\Xx$ we have $x_1\geq x_2$ if and only if $x_1 - x_2 \in X_+$.
If for any two elements $x_1, x_2\in \Xx$ their minimum $x_1\wedge x_2$ and maximum $x_1\vee x_2$ also belong to $\Xx$ then it is a Riesz space~\cite[Section 8.2]{AB07}. 
In a Riesz space~$\Xx$, order unit elements play a special role: 
\begin{definition}\cite[Section 8.7]{AB07}\label{defn:OrderUnit}
An element $u\in X_{++}$ is called an order unit if for all $x\in X$ 
there exists $\lambda > 0$ such that $-\lambda u \leq x\leq \lambda u$.
\end{definition}
If the Riesz space $\Xx$ is norm-complete then it becomes a Banach lattice, an important subset of locally convex topological Riesz spaces. 
\begin{definition}\cite[Section IV.3, Definition 3.2]{Bichteler98}\label{defn:BanachLattice}
If a Riesz space $\Xx$ is endowed with a norm $\|\cdot\|_{\Xx}$ that makes it complete then it is called a Banach lattice.
\end{definition}
As $\Xx$ admits a topological dual $\Xx^*$ we can define dual sets to the positive convex cone $X_+$. 
\begin{definition}\label{defn:NegativePolar}
The negative polar 
$X^*_+ := \left\{ x^* \in \Xx^* : \left\langle x,x^*\right\rangle \geq 0\text{ for all }x\in X_+\right\}$
is the dual of~$X_+$.
\end{definition}

We now recall some useful notions on directional derivatives for convex functions
needed for the perturbation analysis
in Section~\ref{sec:PerturbationAnalysis}.
Let $g: \Xx\to \overline{\RR}$ 
an extended real-valued function.
\begin{definition}\cite[Definition 2.45]{BonnansShapiro00}
The mapping~$g$ is directionally differentiable at $x\in \Xx$ in the Hadamard sense if 
the directional derivative~$g'(x,\hh)$ exists for all $\hh\in\Xx$ and the equality
$$
g'(x,\hh) = \lim_{n\uparrow\infty}\frac{g(x + \eps_n \hh_n) - g(x)}{\eps_n}
$$
holds for any sequences $(\hh_n)_{n\in\mathbb{N}}\in\Xx$ converging to~$\hh$ and 
$(\eps_n)_{n\in\mathbb{N}} \in\RR$ converging to zero.
In addition if $g'(x,\cdot)$ is linear in~$\hh$ then it is said to be Hadamard differentiable at~$x$.
\end{definition}
If~$g$ is Hadamard differentiable at $x\in \Xx$ then~$g'(x,\cdot)$ 
is continuous on~$\Xx$~\cite[Proposition~2.46]{BonnansShapiro00}.
Hadamard differentiability though, is a more restrictive notion of directional differentiability, 
as opposed, for example, to Fr{\'e}chet differentiability~\cite[Definition~A.1]{AndrewsHopper11}. 
Nonetheless the following holds:
\begin{proposition}\cite[Proposition~2.49]{BonnansShapiro00}
If~$g$ is directionally differentiable at $x$ and Lipschitz continuous (with constant~$L$) 
in a neighbourhood of~$x$,
then it is directionally differentiable at~$x$ in the Hadamard sense and the directional derivative $g'(x,\cdot)$ 
is Lipschitz continuous (with same constant~$L$) on~$\Xx$.
\end{proposition}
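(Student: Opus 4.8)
The plan is to obtain both assertions directly from the two hypotheses, the local Lipschitz bound doing essentially all of the work. Since $g$ is Lipschitz with constant $L$ on some neighbourhood $\Uu$ of $x$, it is finite on $\Uu$; in particular $g(x)\in\RR$ and every difference quotient below is well defined once the increment is small enough, and this finiteness is exactly what directional differentiability of $g$ at $x$ presupposes, so the two hypotheses are compatible. I would record this observation at the outset.

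First I would establish Lipschitz continuity of $g'(x,\cdot)$. Fixing $\hh_1,\hh_2\in\Xx$, for $\eps>0$ small enough that $x+\eps\hh_1,x+\eps\hh_2\in\Uu$ the Lipschitz property gives
\[
\left|\frac{g(x+\eps\hh_1)-g(x)}{\eps}-\frac{g(x+\eps\hh_2)-g(x)}{\eps}\right|
=\frac{\bigl|g(x+\eps\hh_1)-g(x+\eps\hh_2)\bigr|}{\eps}
\leq\frac{L\,\|\eps\hh_1-\eps\hh_2\|}{\eps}=L\,\|\hh_1-\hh_2\|.
\]
Letting $\eps\downarrow0$ and using that $g$ is directionally differentiable at $x$ yields $|g'(x,\hh_1)-g'(x,\hh_2)|\leq L\|\hh_1-\hh_2\|$, which is the stated Lipschitz bound; in particular it also shows that $g'(x,\cdot)$ is finite-valued and continuous on $\Xx$.

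Next I would prove Hadamard directional differentiability. Fix $\hh\in\Xx$ and take arbitrary sequences $\hh_n\to\hh$ in $\Xx$ and $\eps_n\downarrow0$ in $\RR$. Since $\eps_n\hh_n\to0$ and $\eps_n\hh\to0$, for all large $n$ both $x+\eps_n\hh_n$ and $x+\eps_n\hh$ lie in $\Uu$, and I would split
\[
\frac{g(x+\eps_n\hh_n)-g(x)}{\eps_n}=\frac{g(x+\eps_n\hh_n)-g(x+\eps_n\hh)}{\eps_n}+\frac{g(x+\eps_n\hh)-g(x)}{\eps_n}.
\]
The first term is bounded in absolute value by $L\|\hh_n-\hh\|$, which tends to $0$ by the Lipschitz estimate, while the second term tends to $g'(x,\hh)$ because $g$ is directionally differentiable at $x$ and $\eps_n\downarrow0$. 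Hence the left-hand side converges to $g'(x,\hh)$, which is precisely the Hadamard condition in the definition above.

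There is no substantial obstacle here: the argument is routine, and the only points needing a word of care are the bookkeeping that the perturbed points $x+\eps\hh_i$ and $x+\eps_n\hh_n$ stay inside the neighbourhood $\Uu$ on which the Lipschitz estimate is available — immediate once $\eps$ (respectively $\eps_n$) is small, since the directions involved are bounded — together with the finiteness remark made at the start.
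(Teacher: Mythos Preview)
Your proof is correct and is the standard argument. The paper itself does not prove this proposition; it merely quotes it from \cite[Proposition 2.49]{BonnansShapiro00}, so there is no in-paper proof to compare against, and your argument is essentially the one given in that reference.
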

If $\Xx$ is a finite dimensional, then the situation simplifies considerably. 
If $g$ is also locally Lipschitz continuous at $x\in \Xx$ then the Hadamard and the Fr{\'e}chet derivatives are equivalent.
In particular all proper convex functions are locally Lipschitz (\cite[Proposition~2.107]{BonnansShapiro00}) and if the underlying space is finite-dimensional then they are continuous 
on the relative interior of their effective domains~\cite[Theorem~7.24]{AB07}.
We now state some technical results needed in the paper.
\begin{proposition}(Chain rule~\cite[Proposition~2.47]{BonnansShapiro00})\label{proposition:chainrule}
If $g : \Xx \to \Yy$ is directionally differentiable at~$x$ and $f : \Yy \to Z$ is Hadamard differentiable at $y=g(x)$, then $f \circ g$ is directionally differentiable at~$x$ and
$(f\circ g)'(x,\hh) = f'(y,g'(x,\hh))$.
Moreover if~$g$ (resp. $f$) is Fr{\'e}chet differentiable at~$x$ (resp.~$y$), then $f \circ g$ is Fr{\'e}chet differentiable at~$x$. 
\end{proposition}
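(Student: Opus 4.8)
\emph{Proof proposal.} The plan is to argue directly from the sequential characterisation of directional differentiability recalled above, with the Hadamard property of~$f$ doing all the work. Fix a direction $\hh\in\Xx$ and an arbitrary sequence $(\eps_n)_{n\in\NN}\subset\RR$ with $\eps_n\downarrow 0$. Since~$g$ is directionally differentiable at~$x$, the increments
$$
d_n := \frac{g(x+\eps_n\hh)-g(x)}{\eps_n}
$$
converge to $d := g'(x,\hh)\in\Yy$; in particular $(d_n)_{n\in\NN}$ is bounded, so $g(x+\eps_n\hh)=y+\eps_n d_n\to y$.

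The key step is to rewrite the difference quotient of the composition as a difference quotient of~$f$ along the \emph{perturbed} direction~$d_n$:
$$
\frac{(f\circ g)(x+\eps_n\hh)-(f\circ g)(x)}{\eps_n}=\frac{f(y+\eps_n d_n)-f(y)}{\eps_n}.
$$
Because~$f$ is Hadamard (directionally) differentiable at~$y$, and $d_n\to d$ with $\eps_n\downarrow 0$, the right-hand side converges to $f'(y,d)=f'(y,g'(x,\hh))$. As the limit does not depend on the choice of $(\eps_n)_{n\in\NN}$, the directional derivative $(f\circ g)'(x,\hh)$ exists and equals $f'(y,g'(x,\hh))$, which is the assertion. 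If~$g$ is moreover directionally differentiable at~$x$ in the Hadamard sense, the identical computation carried out with an arbitrary sequence $\hh_n\to\hh$ in place of~$\hh$ shows that $f\circ g$ is itself Hadamard directionally differentiable at~$x$.

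For the Fr\'echet statement, write $Dg(x)\in L(\Xx,\Yy)$ and $Df(y)\in L(\Yy,Z)$ for the two derivatives and expand $g(x+\hh)=y+Dg(x)\hh+r_g(\hh)$ and $f(y+\kk)=f(y)+Df(y)\kk+r_f(\kk)$ with $\|r_g(\hh)\|=o(\|\hh\|)$ and $\|r_f(\kk)\|=o(\|\kk\|)$. Substituting $\kk=Dg(x)\hh+r_g(\hh)$ yields
$$
(f\circ g)(x+\hh)-(f\circ g)(x)=Df(y)Dg(x)\hh+Df(y)r_g(\hh)+r_f\bigl(Dg(x)\hh+r_g(\hh)\bigr),
$$
and since $Df(y)$ is bounded we have $Df(y)r_g(\hh)=o(\|\hh\|)$, while $\|Dg(x)\hh+r_g(\hh)\|\le\|Dg(x)\|\,\|\hh\|+o(\|\hh\|)=O(\|\hh\|)$ forces $r_f\bigl(Dg(x)\hh+r_g(\hh)\bigr)=o(\|\hh\|)$. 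Hence $f\circ g$ is Fr\'echet differentiable at~$x$ with derivative $Df(y)Dg(x)$, consistent with the directional formula above.

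The only genuinely delicate point is the one isolated in the second paragraph: the composition is evaluated along the \emph{varying} direction~$d_n$, which converges to~$d$ but need not equal it, so ordinary directional differentiability of~$f$ would not suffice — this is precisely the reason the Hadamard notion is imposed on~$f$, and the step where that hypothesis is used in an essential way. Everything else is routine little-$o$ bookkeeping.
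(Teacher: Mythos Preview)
The paper does not actually prove this proposition: it is stated in the appendix as a cited result from~\cite[Proposition~2.47]{BonnansShapiro00}, with no proof supplied. Your argument is the standard one and is correct; in particular you have correctly isolated the essential point, namely that the increment of the composition is a difference quotient of~$f$ along the \emph{moving} direction $d_n\to d$, so that mere directional differentiability of~$f$ would not suffice and the Hadamard hypothesis on~$f$ is exactly what is needed to pass to the limit. The Fr\'echet addendum is likewise the routine bounded-linear-operator bookkeeping you give.
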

\begin{proposition}\cite[Proposition~2.126 (iv-v)]{BonnansShapiro00}\label{prop:ValFnDifferential}
If $\Xx$ is a Banach space endowed with the norm topology
and $g: \Xx\to\RR$ is convex and continuous at $x\in \Xx$, then
\begin{enumerate}[(i)]
\item $g$ is sub-differentiable at $x$;
\item $\partial g(x)$ is a non-empty, convex and weak* compact subset of $\Xx^*$; 
\item $g$ is Hadamard directionally differentiable at~$x$ and, for any $\hh\in \Xx$,
$g'(x,\hh) = \sup_{x^*\in\partial g(x)}\left\langle x^*,\hh\right\rangle$.
\end{enumerate}
\end{proposition}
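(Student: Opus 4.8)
The plan is to prove the three assertions in sequence, leaning throughout on the standard fact that a convex function continuous at an interior point is locally Lipschitz there. So I would first record the auxiliary statement: since $g$ is convex and continuous at $x$, there exist $\delta>0$ and $L\geq 0$ with $|g(y)-g(z)|\leq L\|y-z\|$ for all $y,z$ in the ball $B(x,\delta)$. This is obtained the usual way --- continuity at $x$ gives an upper bound for $g$ on some ball, convexity then upgrades this to a two-sided bound on a slightly smaller ball, and boundedness on a ball forces local Lipschitz continuity. Every subsequent estimate on $\partial g(x)$ and on $g'(x,\cdot)$ will flow from this.

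For (i), I would apply the geometric form of Hahn--Banach to the epigraph $\epi g := \{(y,r)\in\Xx\times\RR : r\geq g(y)\}$, which is convex. Because $g$ is continuous at $x$, $\interior(\epi g)$ is nonempty and excludes the boundary point $(x,g(x))$, so that point can be separated from the interior by a nonzero continuous functional $(x^*,\beta)\in\Xx^*\times\RR$ supporting $\epi g$ at $(x,g(x))$. Since $\epi g$ is unbounded in the upward direction of the last coordinate, $\beta$ has a definite sign and, after normalisation, may be taken to be $-1$; the supporting inequality then rearranges to $\langle x^*,y-x\rangle\leq g(y)-g(x)$ for all $y\in\Xx$, i.e. $x^*\in\partial g(x)$, which is therefore nonempty.

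For (ii), convexity and weak* closedness of $\partial g(x)$ are immediate, since $\partial g(x)=\bigcap_{y\in\Xx}\{x^*\in\Xx^*:\langle x^*,y-x\rangle\leq g(y)-g(x)\}$ is an intersection of weak*-closed half-spaces. Boundedness comes from the local Lipschitz estimate: for $x^*\in\partial g(x)$ and $\|\hh\|\leq\delta$ one has $\langle x^*,\hh\rangle\leq g(x+\hh)-g(x)\leq L\|\hh\|$, whence $\|x^*\|\leq L$. By Banach--Alaoglu a weak*-closed norm-bounded subset of $\Xx^*$ is weak* compact, and it is nonempty by (i).

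For (iii), I would first show the one-sided directional derivative exists: for fixed $\hh$ the map $t\mapsto t^{-1}\bigl(g(x+t\hh)-g(x)\bigr)$ is nondecreasing on $(0,\infty)$ by convexity and bounded below for small $t$ by local Lipschitzness, so $g'(x,\hh):=\lim_{t\downarrow 0}t^{-1}\bigl(g(x+t\hh)-g(x)\bigr)$ exists, obeys $|g'(x,\hh)|\leq L\|\hh\|$, and is sublinear in $\hh$ (positive homogeneity is clear; subadditivity follows from convexity of $g$). The identity $g'(x,\hh)=\sup\{\langle x^*,\hh\rangle:x^*\in\partial g(x)\}$ is the crux. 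One direction is easy: each $x^*\in\partial g(x)$ satisfies $\langle x^*,\hh\rangle\leq t^{-1}\bigl(g(x+t\hh)-g(x)\bigr)$ for every $t>0$, hence $\langle x^*,\hh\rangle\leq g'(x,\hh)$. For the reverse, fixing a direction $\hh_0$, the linear functional $t\hh_0\mapsto t\,g'(x,\hh_0)$ on the line $\RR\hh_0$ is dominated by the sublinear functional $g'(x,\cdot)$ --- for $t\geq 0$ trivially, and for $t<0$ this reduces to $g'(x,\hh_0)+g'(x,-\hh_0)\geq g'(x,0)=0$, which is subadditivity --- so Hahn--Banach extends it to a linear $\ell\leq g'(x,\cdot)$ on all of $\Xx$; then $|\ell(\hh)|\leq L\|\hh\|$ gives $\ell\in\Xx^*$, and $\ell(y-x)\leq g'(x,y-x)\leq g(y)-g(x)$ gives $\ell\in\partial g(x)$ with $\langle\ell,\hh_0\rangle=g'(x,\hh_0)$, so the supremum is attained. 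Finally, for the Hadamard refinement, given $\hh_n\to\hh$ and $\eps_n\downarrow 0$, local Lipschitzness gives $|g(x+\eps_n\hh_n)-g(x+\eps_n\hh)|\leq L\eps_n\|\hh_n-\hh\|=o(\eps_n)$, so $\eps_n^{-1}\bigl(g(x+\eps_n\hh_n)-g(x)\bigr)$ and $\eps_n^{-1}\bigl(g(x+\eps_n\hh)-g(x)\bigr)$ share the limit $g'(x,\hh)$, and continuity of $g'(x,\cdot)$ follows from its Lipschitz bound. I expect the main obstacle to be precisely the Hahn--Banach step in (iii): verifying that the candidate functional on $\RR\hh_0$ is genuinely dominated by $g'(x,\cdot)$, which forces one to exploit subadditivity of the directional derivative in exactly the right place; everything else is bookkeeping around the local Lipschitz property and Banach--Alaoglu.
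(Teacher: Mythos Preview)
Your proof is correct and follows the standard route: local Lipschitz continuity from convexity plus continuity, Hahn--Banach separation of the epigraph for (i), intersection of weak*-closed half-spaces plus Banach--Alaoglu for (ii), and the max formula via Hahn--Banach extension of the linear functional on $\RR\hh_0$ dominated by the sublinear $g'(x,\cdot)$ for (iii), with the Hadamard upgrade coming from the Lipschitz estimate. There is nothing to fault.

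However, there is nothing to compare against: the paper does not prove this proposition. It is stated in the appendix with an explicit citation to \cite[Proposition 2.126 (iv--v)]{BonnansShapiro00} and is used as a black box in the proof of Proposition~\ref{proposition:Sensitivity}. Your argument is essentially the one Bonnans and Shapiro give (and the one found in any standard convex-analysis text), so in that sense it is the ``same approach'' as the source the paper defers to, but the paper itself contributes no proof here.
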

Of course, if $\partial g(x) = \{a\}$, 
then $g'(x,\hh) = \langle a,\hh\rangle$ and $g$ is Hadamard differentiable at~$x$.
Similar results are proved in~\cite[Theorem~23.4]{Rockafellar70} when $\Xx$ is a finite-dimensional vector space.

\section{Proofs}

\subsection{Proof of Theorem~\ref{thm:FTAPnoOrderUnit}}\label{sec:thmFTAPnoOrderUnit}
Suppose there exists a strictly positive linear functional $\pi : \Cc_{h}(\Omega) \to \RR$ that extends~$\rho$. 
As~$\Cc_{h}(\Omega)$ is a Banach lattice, then~$\pi$ is continuous by~\cite[Theorem 1.36]{AliTourky}.
It is also evident that it implies absence of weak free lunch. 
Conversely, assume that there is no weak free lunch. 
It then follows that $m_0\notin\overline{\Fr-(\Cc_{h})_+(\Omega)}$ by Assumption~\ref{assumption:risklessBond}.
As~$\{m_0\}$ is compact and $\overline{\Fr-(\Cc_{h})_+(\Omega)}$ is closed in the weak topology, 
the Strong Separating Hyperplane Theorem~\cite[Theorem~5.79]{AB07}
implies that there exists a non-zero continuous linear functional $\pi : \Cc_{h}(\Omega) \to \RR$ such that $\pi(m_0) > 0$
and $\pi(f-g)\leq 0$ for all $f\in \Fr$ and $g\in (\Cc_{h})_+(\Omega)$.  
As $0\in \Fr$ it follows that $\pi(-g)\leq 0$ for all $g\in (\Cc_{h})_+(\Omega)$ and hence $\pi$ is positive. 
Moreover $\pi(g) > 0$ for all $g\in (\Cc_{h})_{++}(\Omega)$. 
Otherwise there exists $g\in (\Cc_{h})_{++}(\Omega)$ such that $\pi(g) = 0$, 
i.e. $g\in\Fr$ and hence $g\in \overline{\Fr-(\Cc_{h})_+(\Omega)}\cap (C_h)_+(\Omega)$
which contradicts the absence of weak free lunch.
Similarly as $0\in (\Cc_{h})_+(\Omega)$ one has $\pi(f)\leq 0$ for all $f\in \Fr$. 
Therefore there exists $\xi\in\RR$ such that $\xi\pi(m) = \rho(m)$ for all $m\in \Mf$.
As $\xi\pi(m_0) = \rho(m_0) > 0$ implies that $\xi > 0$ and without loss of generality one can take $\xi = 1$. 
Thus we have shown existence of a strictly positive continuous and linear functional 
$\pi : \Cc_{h}(\Omega) \to \RR$ that extends~$\rho$.

Let us define a map $T:\Cc_{h}(\Omega) \to \Cc_{b}(\Omega)$ such that $T(f) := f/h$ and note that it is an isometry.
Define a functional $\widetilde{\pi} : \Cc_b(\Omega) \to \RR$ by $\widetilde{\pi}(f) :=  C\pi(T^{-1}(f))$ for all $f\in \Cc_b(\Omega)$, 
where $C$ is a positive real constant. 
Note that $\widetilde{\pi}$ is continuous, linear and strictly positive by definition. 
The space $\Cc_b(\Omega)$ can be identified with $\breve{\Cc}(\Omega)$, the space of continuous functions on $\breve{\Omega}$ which is the the Stone-\v{C}ech compactification of~$\Omega$.
As the dual of $\Cc(\breve{\Omega})$ can be identified with the space of regular signed Borel measures of bounded variation~\cite[Theorem~14.12]{AB07}, the following representation holds:
$$
\widetilde{\pi}\circ T(f) = \int_{\breve{\Omega}} \breve{T}(f)(\omega)\nu(\rd \omega),
$$
where $\breve{T}$ is the unique extension of $T(f) \in \Cc_b(\Omega)$.
Note that since~$\Cc_b(\Omega)$ is locally compact, we could avoid Stone-\v{C}ech compactification arguments, 
using~\cite[Theorem~7.11.3]{Bogachev2}. Since our setup was inspired by~\cite{ABPS13}, we instead followed their steps to prove our statement.
Observe that $\nu$ is positive as $0 < C\pi(f) = \widetilde{\pi} \circ T(f) = \int_{\breve{\Omega}} \breve{T}(f)(\omega)\nu(\rd \omega)$ for all $f\in (\Cc_h)_{++}(\Omega)$.
Let $\nu = \nu^r + \nu^s$ where $\nu^r$ is a measure with support in $\Omega$ and $\nu^s$ is a measure with support in $\breve{\Omega}\setminus \Omega$. 
For each $i\in \Ii$,
the extension $\breve{T}(\varphi_i)$ is continuous and hence by Assumption~\ref{assumption:functionH}$(3)$
we have that $\breve{T}(\varphi_i)(\omega) = 0$ for all $\breve{\Omega}\setminus \Omega$. 
Therefore we have
$$
\widetilde{\pi}\circ T(\varphi_i) = \int_{\breve{\Omega}}\breve{T}(\varphi_i)(\omega)\nu(\rd \omega)
= \int_{\breve{\Omega}}\breve{T}(\varphi_i)(\omega)\nu^r(\rd \omega) + 
 \int_{\breve{\Omega}\setminus \Omega}\breve{T}(\varphi_i)(\omega)\nu^s(\rd \omega)
 = \int_{\Omega}T(\varphi_i)(\omega)\nu^r(\rd \omega), 
$$
for all $i\in\Ii$. 
The last equality follows from the fact that the extension $\breve{T}(f)$ coincides with $T(f)$ on $\Omega$ for all $f\in \Cc_h(\Omega)$. 
Note also that $\nu^r\neq0$ otherwise one would have
$$
0<C\pi(m_0) = \widetilde{\pi}\circ T(m_0) = \int_{\breve{\Omega}}\breve{T}(m_0)(\omega)\nu^r(\rd \omega) + \int_{\breve{\Omega}\setminus \Omega}\breve{T}(m_0)(\omega)(\omega)\nu^s(\rd \omega) = 0, 
$$
where the last equality follows from the fact that $m_0\in o(h)$ and we arrive at 
a contradiction.
We can then define a probability measure on $\Omega$ as $\eta := \nu^r/\|\nu^r\|$ and 
$\widetilde{\pi}\circ T(f) = \int_{\Omega}f(\omega)\eta(\rd \omega)$,
for all $f\in \Cc_{b}(\Omega)$. 
Moreover defining the probability measure~$\mu$ via
$\frac{\rd\mu}{\rd\eta} := \frac{1}{h}\left(\int_{\Omega} \frac{1}{h(\omega)}\eta(\rd \omega)\right)^{-1}$ 
and setting $C:=\int_{\Omega} \frac{1}{h(\omega)}\eta(\rd \omega)$,
we see that $\mu\in\Pp_h(\Omega)$ and 
$\pi(g) = \langle g, \mu\rangle$, for any $g\in \Cc_{h}(\Omega)$.

\subsection{Proof of Theorem~\ref{thm:superReplication}}\label{app:thmsuperReplication}
We first prove the super-hedging case, and specialise to the case where $\Phi\in \Cc_{h}(\Omega)$.
Absence of weak free lunch and Assumption~\ref{assumption:functionH}
imply the existence of a Borel probability measure $\pi_0 \in\Pp_h(\Omega)$ that extends~$\barho$. 
It is clear that $\underline{\val}_p(\Phi) \leq \overline{\val}_p(\Phi)$.
If $\Phi\in\overline{\Mf}$ then $\overline{\val}_p(\Phi) = \underline{\val}_p(\Phi)$ and hence 
there is no duality gap between the primal~\eqref{eq:primal} and the dual~\eqref{eq:dual} programmes.
Assume $\Phi\notin\overline{\Mf}$ and fix some $\alpha\in(\pi_0(\Phi),\overline{\val}_p(\Phi))$.
Let $L:=\Span\{\overline{\Mf},\Phi\}\subset \Cc_{h}(\Omega)$, so that
any $l\in L$ can be represented as $l= m+\lambda \Phi$ for some $m\in\overline{\Mf}$ and $\lambda\in\RR$.
Define a functional $\eta : L \to \RR$ as 
$\eta(l) = \eta(m + \lambda\Phi) := \barho(m) + \lambda\alpha$.
It is linear and we now show that it is strictly positive on $L_{++} := L\cap(\Cc_{h})_{++}(\Omega)$. 
Let $z = m+\lambda\Phi \in L_{++}$ where $m\in\overline{\Mf}$ and $\lambda\in\RR$ and consider three cases.
If $\lambda=0$, then $\eta(z) = \barho(m) > 0$.
If $\lambda<0$, then $m>-\lambda\Phi$ and $\barho(m/(-\lambda)) \geq \overline{\val}_p(\Phi) > \alpha$ by assumption. 
Then $\eta(z) = \barho(m) + \lambda\alpha = -\lambda((-\lambda)^{-1}\barho(m) - \alpha) > 0$. 
Finally if $\lambda>0$, then $(-\lambda)^{-1}\barho(m) < \alpha$ and $\eta(z) >0$.

Introduce now the set $\Lr:=\{l\in L : \eta(l)\leq 0\}$, 
and note that $\Lr\cap (\Cc_{h})_+(\Omega) = \{0\}$ since~$\eta$ is strictly positive. 
We now show that $m_0\notin\overline{\Lr-(\Cc_{h})_+(\Omega)}$. 
Assume by contradiction that $m_0\in\overline{\Lr-(\Cc_{h})_+(\Omega)}$. 
Then there exists sequences $(f_n)_{n\in\NN}\subset \Cc_{h}(\Omega)$ converging to~$m_0$
and $(g_n)_{n\in\NN}\subset \Lr$ 
with $g_n = m_n + \lambda_n\Phi$ for $(m_n)_{n\in\NN}\subset\overline{\Mf}$, $(\lambda_n)_{n\in\NN}\subset\RR$ such that $g_n\geq f_n$ for all $n\in\NN$.
Clearly $m_n + \lambda_n\Phi - m_0 \geq f_n - m_0$ converges to zero, 
and hence 
$\liminf_n\eta(m_n + \lambda_n\Phi - m_0) \geq 0$ or equivalently 
$\limsup_n-\eta(g_n) + \barho(m_0) \leq 0$. 
Thus $0\geq\liminf_n\eta(g_n)\geq\barho(m_0) > 0$, which is a contradiction.
Therefore there exists a non-zero continuous linear functional $\pi: \Cc_{h}(\Omega) \to \RR$ 
such that $\pi(m_0)>0\geq \pi(g-f)$ for all $g\in \Lr$, $f\in(\Cc_{h})_+(\Omega)$
and by a similar argument to that used in the proof of Theorem~\ref{thm:FTAPnoOrderUnit},
$\pi$ extends~$\eta$, i.e. $\pi(l) = \eta(l) = \barho(m) + \lambda\alpha$ for all $l\in L$.
In particular $\pi$ extends $\barho$ and hence is a feasible solution to the dual programme~\eqref{eq:dual} and $\pi(\Phi) = \alpha$.
Moreover as $\pi$ is a feasible solution it follows that $\alpha\leq\overline{\val}_d(\Phi)$. 
As $\alpha\in(\pi_0(\Phi),\overline{\val}_p(\Phi))$ was chosen arbitrarily it implies 
that $\overline{\val}_d(\Phi) = \overline{\val}_p(\Phi)$.

Any $\Phi\in\Uu_h(\Omega)$ can be expressed as an infimum over continuous functions $(f_n)_{n\in\NN}$ 
that dominate it and, by Assumption~\ref{assumption:primalfeasibility} 
we can take them such that 
$\overline{\val}_p(f_n)<\infty$ for all $n\in\NN$.
As shown above, the no-duality gap holds for all $f\in \Cc_{h}(\Omega)$ with $\overline{\val}_p(f)<\infty$, 
and hence the duality result carries over to the upper semi-continuous case.

For the sub-hedging case, 
if $\Phi$ is lower semi-continuous then $-\Phi$ is upper semi-continuous 
and $\underline{\val}_p(\Phi) = -\overline{\val}_p(-\Phi)$, 
and the result follows by the Super-Replication  Theorem~\ref{thm:superReplication}.

\subsection{Proof of Lemma~\ref{lem:LinearVarNoArb}}\label{app:lemLinearVarNoArb}
Fix $a_0\in\RR_+$ and $a_1\in[0,2]$.
If $a_1 = 0$ then $w(k,t) = a_0$ for all $k\in\RR$ and~$\gr$ is constant equal to $1$.
We thus assume $a_1\in(0,2]$.
Since $w(\cdot,t)$ is linear, the function~$\gr$ reads
\begin{equation}
\gr(k)= \left(1 - \frac{a_1k}{2w(k,t)}\right)^2 - \frac{a_1^2}{4}\left(\frac{1}{w(k,t)} + \frac{1}{4}\right)
= \left(\frac{w(k,t) + a_0}{2w(k,t)}\right)^2 - \frac{a_1^2}{4}\left(\frac{4+w(k,t)}{4w(k,t)}\right).
\end{equation}
Let us denote $x := w(k,t)$.
Then the above expression becomes
\begin{equation}
\gr\left(\frac{x-a_0}{a_1}\right)= 
\frac{1}{16x^2}\left(4x^2 + 8a_0x + 4a_0^2 - 4a_1^2x - a_1^2x^2 \right)
= \frac{(4 - a_1^2)x^2 + 4(2a_0 - a_1^2)x + 4a_0^2}{16x^2}.
\end{equation}
If $a_1=2$ then the numerator is linear in~$x$. 
Solving for~$x$ yields the root 
$x = \frac{-4a_0^2}{8(a_0-2)}$.
Clearly~$\gr$ is non-negative for $a_0<2$ 
and substituting~$k$ back produces the expression
\begin{equation}\label{eq:k_func1}
k^*(a_0,a_1) = \frac{a_0(8-6a_0)}{8(a_0-2)},
\end{equation}
which is positive if $a_0\in(4/3,2)$. 
Consider now the case when $a_1\in(0,2)$. 
The numerator in the expression for~$\gr$ above is quadratic in~$x$, 
and solving for~$x$ yields two roots 
$$
x_{\pm} = \frac{-2(2a_0 - a_1^2) \pm 2a_1\sqrt{a_0^2 - 4a_0 + a_1^2}}{4 - a_1^2}.
$$
As $x = a_1k + a_0$ the corresponding values of $k$ are 
\begin{equation}\label{eq:k_func2}
k_{\pm} = \frac{a_1(a_0 + 2) - \frac{8a_0}{a_1} \pm 2\sqrt{a_0^2 - 4a_0 + a_1^2}}{4 - a_1^2},
\end{equation}
and both roots are real if and only if 
$a_0 \in \RR\setminus (2 - \sqrt{4 - a_1^2}, 2 + \sqrt{4 - a_1^2})$ for $a_1\in(0,2]$.
If~$a_0\geq 2-\sqrt{4 - a_1^2}$ then substituting the lower bound for $a_0$ into the expression for $\gr$ above we get 
\begin{align*}
\gr\left(\frac{x-a_0}{a_1}\right) &\geq 
\frac{(4 - a_1^2)x^2 + 4(4 - 2\sqrt{4 - a_1^2} - a_1^2)x + 4(4-4\sqrt{4 - a_1^2} + 4 - a_1^2)}{16x^2}
\\
&= \frac{(4 - a_1^2)(x + 2)^2 - 8x\sqrt{4 - a_1^2} + 16}{16x^2}= 
\frac{(x\sqrt{4 - a_1^2} - 4)^2}{16x^2} \geq 0,
\end{align*}	
for all $k>0$.
On the other hand if $a_0 <  2 - \sqrt{4 - a_1^2}$ then $\gr$ is strictly positive for all $k>k_+$ and setting $k^*(a_0,a_1) = k_+$ we obtain the result.

Suppose now that $\gr(k)\geq 0$ for all $k\in [k^*(a_0,a_1),\infty)$. 
The second derivative of the Black-Scholes formula with respect to $e^k$ gives for any $k\in [k^*(a_0,a_1),\infty)$ the Call price $c(k,t)$ expressed as
\begin{equation}
c(k,t) = \frac{\gr(k)}{\sqrt{2\pi w(k,t)}}\exp\left(-\frac{\left(d(k,\sqrt{w(k,t)}) - \sqrt{w(k,t)}\right)^2}{2}\right),
\end{equation}
which is non-negative by assumption on $\gr$. 
As $k\uparrow\infty$ by assumption we have that $w(k,t) \sim a_1k$ and note that 
$d(k,\sqrt{a_1k}) - \sqrt{a_1k} = -(1/\sqrt{a_1} + \sqrt{a_1}/2)\sqrt{k}$. 
Recalling the bound~\cite[Theorem~2.1]{Lee} that holds for all $p\geq 0$ (with $p=0$ being the trivial bound),
we obtain that $a_1 = \psi(p)$, i.e. $a_1\in [0,2]$.


\subsection{Proof of Theorem~\ref{thm:Convergence}}\label{app:thmConvergence}
We start with the convergence of the sets of martingale measures:
\begin{lemma}\label{lemma:CountableMgCondition}
Let $r:=\min\{d(t): t\in\Tt\}$. 
As~$r$ tends to infinity, the set $\widetilde{\MMM}^{p^*,q^*}_{\Cb}$ converges to the set of martingale measures consistent 
with the traded Call option prices~$\Cb$. 
\end{lemma}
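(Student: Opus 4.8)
The plan is first to pin down the limit set: the natural candidate is $\MMM^{p^*,q^*}_{\Cb}$, the set of martingale measures in $\Pp_h(\Omega)$ repricing the Call options encoded in $\Cb$. Since enlarging the finite bases $B_j$ only appends linear constraints, I choose them nested, so that $r\mapsto\widetilde{\Hh}$ is non-decreasing and hence the set $\widetilde{\MMM}^{p^*,q^*}_{\Cb}$, which I write $\widetilde{\MMM}^{p^*,q^*}_{\Cb}(r)$ to stress its dependence on $r$, is non-increasing in $r$; moreover each $\widetilde{\MMM}^{p^*,q^*}_{\Cb}(r)$ is closed in the weak topology of $\Pp_h(\Omega)$, because the Call payoffs and the hedge payoffs $(\Theta\bullet S)_T$ all lie in $\Cc_h(\Omega)$, on which $\mu\mapsto\langle\cdot,\mu\rangle$ is continuous. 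For a non-increasing family of closed sets the Painlev\'e--Kuratowski limit equals the intersection, so the whole statement reduces to showing $\bigcap_{r\geq1}\widetilde{\MMM}^{p^*,q^*}_{\Cb}(r)=\MMM^{p^*,q^*}_{\Cb}$.

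The inclusion $\MMM^{p^*,q^*}_{\Cb}\subseteq\widetilde{\MMM}^{p^*,q^*}_{\Cb}(r)$ for every $r$ is immediate. If $\mu$ is a martingale measure consistent with $\Cb$, then for every $\Theta\in\widetilde{\Hh}(r)\subset\Hh$ each summand $\Theta_{t_j}(\Pr\omega)(S_{t_{j+1}}-S_{t_j})$ of $(\Theta\bullet S)_T$ has vanishing $\mu$-expectation (the multiplier is bounded and $\FF_{t_j}$-measurable, and $S_{t_{j+1}}-S_{t_j}\in\mathrm{L}^1(\mu)$ since $\mu\in\Pp_h(\Omega)\subset\Pp_l(\Omega)$), and together with $\langle(S_t-K^t_i)_+,\mu\rangle=c(K^t_i,t)$ and $\langle1,\mu\rangle=1$ this reproduces exactly the defining equalities of $\widetilde{\MMM}^{p^*,q^*}_{\Cb}(r)$. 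Note that no information about the moments is lost in the limit, because $\mu\in\Pp_h(\Omega)$ is built into the definition at every level $r$.

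The content is the reverse inclusion. Take $\mu\in\bigcap_{r\geq 1}\widetilde{\MMM}^{p^*,q^*}_{\Cb}(r)$; then $\mu\in\Pp_h(\Omega)$, $\mu$ reprices the Calls, and $\langle(\Theta\bullet S)_T,\mu\rangle=0$ for every $\Theta\in\bigcup_r\widetilde{\Hh}(r)$. Specialising to $\Theta=(1,0,\ldots,0)$ and to strategies supported on a single time step $t_j$ and a single basis function gives $\EE^\mu[S_{t_1}]=1$ and $\EE^\mu[\theta^{t_j}_i(\Pr\omega)(S_{t_{j+1}}-S_{t_j})]=0$ for all $j$ and $i$. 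Fix $j$; I upgrade this to $\EE^\mu[f(\Pr\omega)(S_{t_{j+1}}-S_{t_j})]=0$ for every $f\in\Cc_b(\RR^j_+)$ as follows. For a rational $\alpha$, choose $\theta^{t_j}_{i_n}$ with $\theta^{t_j}_{i_n}\to f$ uniformly on $K^j_\alpha$ (density of $\{\ind_{K^j_\alpha}\theta^{t_j}_i\}$ in $C(K^j_\alpha)$); since the $\theta^{t_j}_i$ are extended off $K^j_\alpha$ in a norm-controlled way, $\sup_{\RR^j_+}|\theta^{t_j}_{i_n}-f|$ stays bounded, and the difference $\bigl|\EE^\mu[\theta^{t_j}_{i_n}(\Pr\omega)(S_{t_{j+1}}-S_{t_j})]-\EE^\mu[f(\Pr\omega)(S_{t_{j+1}}-S_{t_j})]\bigr|$ splits into a part over $\{\Pr\omega\in K^j_\alpha\}$, bounded by $\|\theta^{t_j}_{i_n}-f\|_{L^\infty(K^j_\alpha)}\,\EE^\mu|S_{t_{j+1}}-S_{t_j}|\to0$ as $n\to\infty$, and a tail part, bounded by a constant (depending on $\|f\|_\infty$) times $\EE^\mu\bigl[|S_{t_{j+1}}-S_{t_j}|\,\ind_{\{\Pr\omega\notin K^j_\alpha\}}\bigr]$, which tends to $0$ as $\alpha\uparrow\infty$ because $S_{t_{j+1}}-S_{t_j}\in\mathrm{L}^1(\mu)$. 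As each $\EE^\mu[\theta^{t_j}_{i_n}(\Pr\omega)(S_{t_{j+1}}-S_{t_j})]=0$, letting $n\to\infty$ then $\alpha\to\infty$ yields the claim. Finally, since $S\geq0$, approximating $\ind_U$ for open $U\subset\RR^j_+$ from below by nonnegative continuous functions and invoking monotone convergence gives $\EE^\mu[\ind_U(\Pr\omega)(S_{t_{j+1}}-S_{t_j})]=0$; open sets form a $\pi$-system generating $\mathcal{B}(\RR^j_+)$, so by Dynkin's lemma this holds for all Borel sets, and summing over $j$ this is precisely the martingale property of $S$ in its own filtration recalled after Definition~\ref{def:Model}. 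Hence $\mu\in\MMM^{p^*,q^*}_{\Cb}$.

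The main obstacle is exactly this last upgrade, from the countably many discretised martingale equalities to the genuine conditional-expectation identity. The delicate point is the non-compactness of $\RR^j_+$: the basis functions are dense only on the compact cubes $K^j_\alpha$, so the tails must be controlled, and it is here that the integrability $S_t\in\mathrm{L}^1(\mu)$ supplied by $\mu\in\Pp_h(\Omega)$ (Assumption~\ref{assumption:higherMoments}) and the specific norm-controlled extension of the $\theta^{t_j}_i$ beyond $K^j_\alpha$ are used.
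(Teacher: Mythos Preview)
Your proof is correct and reaches the same conclusion as the paper, but the route is genuinely different. The paper's argument introduces the stopping times $\tau_\alpha=\min\{t\in\Tt:S_t>\alpha\}$, uses the density of the basis on $K^j_\alpha$ to deduce that the \emph{stopped} process $S_{\cdot\wedge\tau_\alpha}$ is a martingale under any $\mu$ in the limit set, concludes that $(S_{t_j})$ is a $\mu$-local martingale, and then invokes the discrete-time Jacod--Shiryaev criterion (integrability of $S_T$) to upgrade to a true martingale. You bypass the local-martingale detour entirely: you work directly with the unstopped increments $S_{t_{j+1}}-S_{t_j}$, split the approximation error into a compact part (controlled by uniform convergence on $K^j_\alpha$) and a tail part (controlled by $S_{t_{j+1}}-S_{t_j}\in L^1(\mu)$), and then pass from $\Cc_b$ test functions to indicators via dominated convergence and a $\pi$--$\lambda$ argument. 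Your approach is more elementary and self-contained---it uses only the integrability of the increments rather than an external local-martingale-to-martingale lemma---while the paper's stopping-time formulation is more compact once Jacod--Shiryaev is taken for granted. One point worth flagging: your tail estimate uses that the approximants $\theta^{t_j}_{i_n}$ have global sup norm controlled by their values on $K^j_\alpha$; this is a feature of the particular construction sketched in the paper (extension off the cube by a bounded procedure), not a consequence of density on compacts alone, so you are right to call it out explicitly.
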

\begin{proof}
It is sufficient to show that the limit of sets $\widetilde{\MMM}^{p^*,q^*}_r$ defined as 
\begin{equation}
\widetilde{\MMM}^{p^*,q^*}_r := \left\{ \mu \in \Pp_h(\Omega) : \int_{\Omega}\left(\Theta \bullet S(\omega)\right)_T\mu(\rd \omega) = 0, \text{ for all } \Theta\in\widetilde{\Hh}\right\}, 
\end{equation} 
the set of probability measures in $\Pp_h(\Omega)$ that integrate $(\Theta \bullet S)_T$ to zero for all $\Theta\in\widetilde{\Hh}$ (where $\widetilde{\Hh}$ is dependent on $r$ via the choice of $l(t_j)$, $j=1,\ldots,n-1$),
converges to the set of martingale measures $\MMM^{p^*,q^*}$ defined in~\eqref{eq:martingaleHigherMoments}
For any $j=1,\ldots,n-1$, define the set $B^{\infty}_j := \lim_{l(t_j)\uparrow\infty}B_j$ and let $\widetilde{\Hh}_{\infty} := \RR \times \prod_{j=1}^{n-1} B^{\infty}_j$. 
It is clear that $B = \cup_{j=1}^{n-1}B^{\infty}_j$.
Define further the limit $\widetilde{\MMM}^{p^*,q^*}_{\infty} := \lim_{r\uparrow\infty}\widetilde{\MMM}^{p^*,q^*}_{r}$.
It is clear that $\MMM^{p^*,q^*}\subseteq \widetilde{\MMM}^{p^*,q^*}_{\infty}$.
To show the reverse inclusion define the gain of the trading strategy $\Theta \in \widetilde{\Hh}$ at time $t\leq T$ as 
$$
\left(\Theta\bullet S\right)_t := a_0(S_{t_1} - s_0) + \sum_{j=1}^{\max\{k: t<t_k\in\Tt\}}\sum_{i=1}^{l(t_j)}a^{t_j}_i\theta^{t_j}_i\left(S_{t_{j+1}} - S_{t_j}\right).
$$
Introduce the stopping time $\tau_{\alpha} := \min\{t\in\Tt: S_{t} > \alpha\}$.
The set $\widetilde{\MMM}^{p^*,q^*}_{\infty}$ consists of all measures $\mu\in\Pp_h(\Omega)$ such that $\langle (\Theta\bullet S)_{T\wedge \tau_{\alpha}},\pi\rangle = 0$ for each $\alpha\in\mathbb{Q}$. 
By definition of the set~$B$, for each $\alpha$ and $j$ any function $f\in \Cc_b(K^j_{\alpha})$---in 
particular the indicator function~$\ind_{K^j_{\alpha}}$---can be approximated by elements in~$B$;
hence $\langle (\Theta_0 \bullet S)_{T\wedge\tau_{\alpha}},\pi\rangle = 0$, 
where $\Theta_0 := (a_0,\ind_{K^1_{\alpha}},\ldots,\ind_{K^{n-1}_{\alpha}})$ and as for all $\alpha\in\mathbb{Q}$ and each $j$ the sets $K^j_{\alpha}$ generate Borel sigma algebra on $\RR_+^j$ it follows that $S_{T\wedge\tau_{\alpha}}$ is a martingale under~$\pi$, 
and therefore $(S_{t_j})_{j=1,\ldots,n}$ is a $\pi$-local martingale. 
Since~$S_T$ is integrable with respect to any $\mu\in\Pp_h(\Omega)$ it follows from~\cite[Theorem~2(b)]{JacodShiryaev} that it is a martingale under any $\pi\in\widetilde{\MMM}^{p^*,q^*}_{\infty}$, 
and hence $\widetilde{\MMM}^{p^*,q^*}_{\infty}\subseteq\MMM^{p^*,q^*}_{\infty}$.
\end{proof}

As in the proof of Lemma~\ref{lemma:CountableMgCondition}, let $\widetilde{\Hh}_{\infty}$ be a countable subset of~$\Hh$. 
The sequence of nested sets $(\widetilde{\Hh}_r)_{r\in\NN}$ 
with $\widetilde{\Hh}_r\subset\widetilde{\Hh}_{r+1}$ represents the discretised trading strategies as the bases~$B_j$ increase for each $j=1,\ldots,n-1$ simultaneously, 
and clearly
$\widetilde{\Hh}_{\infty} = \lim_{r\uparrow\infty}\widetilde{\Hh}_r$. 
For any $r\in\NN$, let~$\overline{\val}^r_p(\Phi)$ be the primal problem~\eqref{eq:primalSemiInfinite} over the set of primal variables $\RR^{\df+1}\times\widetilde{\Hh}_r$.
Likewise, 
we denote $\overline{\val}^r_d(\Phi)$ the dual problem~\eqref{eq:dualSemiInfinite} over the set of probability measures in $\Pp_h(\Omega)$ that re-price given Call options~$\CC$ and satisfy the martingale condition for all $\Theta\in\widetilde{\Hh}_r$.
By assumption there is no duality gap between the primal and the dual problems, i.e. $\overline{\val}^r_p(\Phi) = \overline{\val}^r_d(\Phi)$ for all $r\in\NN$;
since both sequences $(\overline{\val}^r_p(\Phi))_{r\in\NN}$ and $(\overline{\val}^r_d(\Phi))_{r\in\NN}$ are non-increasing, their limits exist and $\lim_{r\uparrow\infty}\overline{\val}^r_p(\Phi) = \lim_{r\uparrow\infty}\overline{\val}^r_d(\Phi)$. 
We also define $\overline{\val}^{\infty}_d(\Phi) := \lim_{r\uparrow\infty}\overline{\val}^r_d(\Phi) > -\infty$
with  
\begin{equation}
\overline{\val}^{\infty}_d(\Phi) := \sup\left\{\int_{\Omega}\Phi(\omega)\mu(\rd\omega) : \mu\in\widetilde{\MMM}^{p^*,q^*}_\infty\text{, } 
\int_{\Omega}\CC(\omega)\mu(\rd\omega) = \cc
\right\},
\end{equation}
where the set $\widetilde{\MMM}^{p^*,q^*}_\infty$ is defined in the proof of Lemma~\ref{lemma:CountableMgCondition}.
Therefore the value of the semi-infinite dual problem~\eqref{eq:dualSemiInfinite} 
converges to the value of the infinite-dimensional dual problem~\eqref{eq:dualUpperSemiContWeakArbCh2}
by Lemma~\ref{lemma:CountableMgCondition}. 
It follows that the value of the semi-infinite primal problem~\eqref{eq:primalSemiInfinite} 
also converges to the value of the infinite-dimensional primal problem~\eqref{eq:primalUpperSemiContWeakArbCh2}
as $\lim_{r\uparrow\infty}\overline{\val}^r_p(\Phi) = \lim_{r\uparrow\infty}\overline{\val}^r_d(\Phi)$ 
and there is no duality gap between the infinite-dimensional primal~\eqref{eq:primalUpperSemiContWeakArbCh2} and the dual~\eqref{eq:dualUpperSemiContWeakArbCh2} problems.


\bibliographystyle{apa}
\bibliography{bibliography} 
\end{document}